\newtheorem{theorem}{Theorem}
\newtheorem{proposition}[theorem]{Proposition}
\newtheorem{lemma}[theorem]{Lemma}
\newtheorem{definition}[theorem]{Definition}
\newif\ifenablecommands  
\begin{document}

\preprint{APS/123-QED}



\title{Stability of mixed-state phases under weak decoherence}

\author{Yifan F. Zhang}
\email{yz4281@princeton.edu}
\affiliation{Department of Electrical and Computer Engineering, Princeton University, Princeton, NJ 08544}
 
\author{Sarang Gopalakrishnan}
\email{sgopalakrishnan@princeton.edu}
\affiliation{Department of Electrical and Computer Engineering, Princeton University, Princeton, NJ 08544}
\date{\today}

\begin{abstract}

We prove that the Gibbs states of classical, and commuting-Pauli, Hamiltonians are stable under weak local decoherence: i.e., we show that the effect of the decoherence can be locally reversed. 
In particular, our conclusions apply to finite-temperature equilibrium critical points and ordered low-temperature phases. In these systems the \emph{unconditional} spatio-temporal correlations are long-range, and local (e.g., Metropolis) dynamics exhibits critical slowing down. Nevertheless, our results imply the existence of local ``decoders'' that undo the decoherence, when the decoherence strength is below a critical value. 
An implication of these results is that thermally stable quantum memories have a threshold against decoherence that remains nonzero as one approaches the critical temperature.
Analogously, in diffusion models, stability of data distributions implies the existence of computationally-efficent local denoisers in the late-time generation dynamics.

\end{abstract}

\maketitle

\section{Introduction }

In equilibrium statistical mechanics, phases are parameter regions in which the free energy evolves smoothly. This equilibrium perspective is well suited to to conventional solid-state experiments, in which the system of interest (e.g., the electron fluid in a metal) is well coupled to a heat bath. Present-day experiments in quantum devices necessitate a more general concept of phases: in these devices, systems are driven far from thermal equilibrium and are either isolated from the environment or coupled to engineered dissipative environments. A key step toward this general concept came from the development of quasi-adiabatic continuation~\cite{hastings2005quasiadiabatic}, for pure quantum states at zero temperature. According to this definition, phases are equivalence classes of quantum states such that two states in the same phase can be prepared from one another by an efficient process---specifically, a finite-depth local unitary (FDLU) circuit. This concept of pure-state phases (called FDLU-equivalence) reduces to the conventional one for ground states of gapped local Hamiltonians, but extends to \emph{any} quantum state, and connects naturally to questions in computational complexity thoery~\cite{anshu2020circuit,parham2025quantum,rosenthal2020bounds,nadimpalli2024pauli}. So far, this ``preparability'' perspective is only fully developed for pure quantum states and strictly unitary evolutions; thus, a natural task, which has seen intense recent activity, is its generalization to more general classes of mixed states and evolutions involving noise, measurement, and feedback~\cite{tantivasadakarn2023shortest,tantivasadakarn2023hierarchy,bravyi2022adaptive,smith2023deterministic,smith2024constant,sahay2024classifying,sahay2024finite,stephen2024preparing,zhang2024characterizing}. As an important special case, a classification of mixed states from the perspective of preparability would naturally extend to general classical probability distributions, of the type that routinely arise in machine learning, and that also seem to exhibit phase transitions~\cite{hu2025local}.

The most natural extension of FDLU-equivalence to mixed states would be to define two mixed states $\rho_1$ and $\rho_2$ as being in the same phase if there are local, finite-time physical processes under which $\rho_1 \mapsto \rho_2$ and $\rho_2 \mapsto \rho_1$. (In what follows will generally speak of these processes as quantum channels, or Lindblad master equations, as this terminology is more general; it encompasses classical Markov chains and Markov processes as a special case.) A crucial difference from the unitary setting is that the processes involved will generally be irreversible, so we will require the existence of \emph{separate} channels $\mathcal{N}, \mathcal{R}$ such that $\mathcal{N}\rho_1 \approx \rho_2, \mathcal{R}\rho_2 \approx \rho_1$. Thus this concept of equivalence is called ``two-way connectivity''~\cite{coser2019classification}. A canonical example of a mixed-state phase is a quantum error correcting code subject to noise levels that are below its threshold; in this example, $\mathcal{N}$ is the noise and $\mathcal{R}$ is the recovery operation that restores the initial quantum state. When the noise strength exceeds threshold, the recovery channel ceases to exist, and two-way connectivity fails; however, the \emph{noise} channel continues to be well-defined and local. Consequently, a key feature of mixed-state phase transitions like the error correction threshold is that they are not diagnosed by conventional correlation lengths, unlike equilibrium phase transitions: applying noise for a finite time can drive a system past the error correction threshold, but (by light-cone arguments~\cite{bravyi2006lieb}) cannot change the asymptotic behavior of any correlation function.

\begin{figure*}[t]
\includegraphics[width=0.8\linewidth]{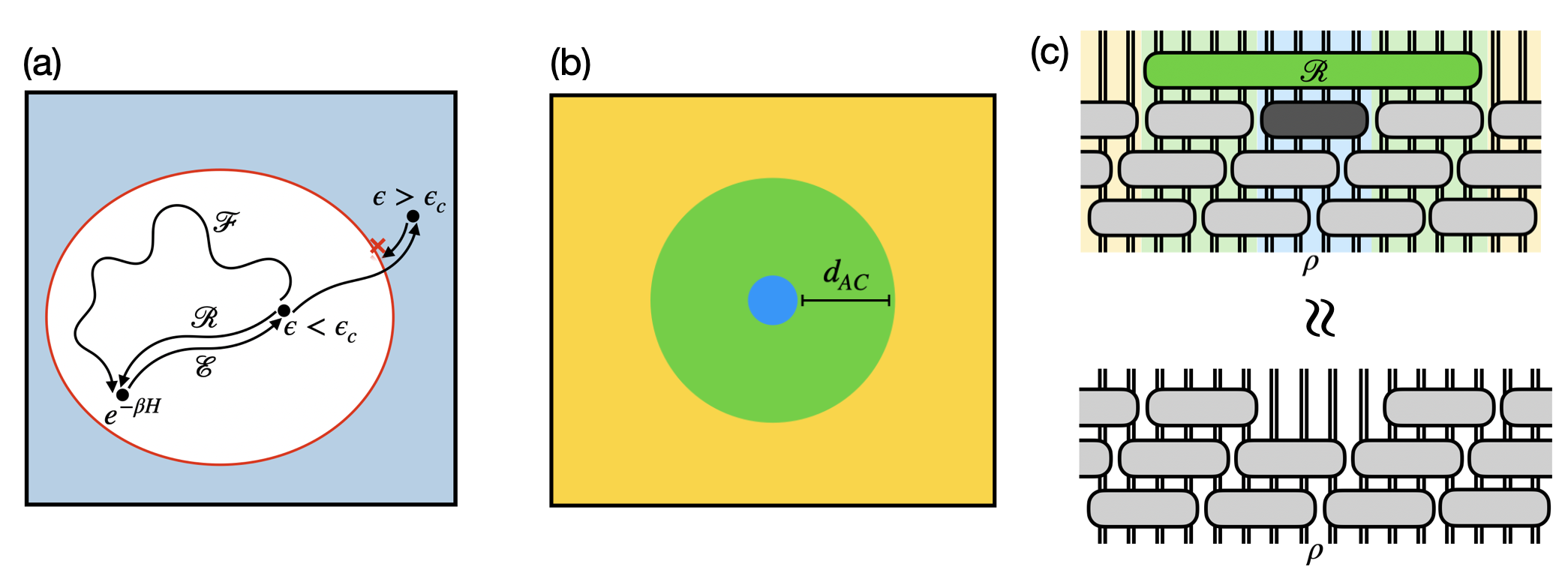}
\caption{\label{fig:phases_donut_recovery}(a) A phase diagram of a low-temperature or critical Gibbs state $e^{-\beta H}$ under local perturbations $\mathcal{E}$ with strength $\epsilon$. $\mathcal{R}$ is the recovery map we construct and $\mathcal{F}$ is the Gibbs sampler which could exhibit slowdown behavior. (b) An annulus-shaped tripartition $\textcolor{Cerulean}{A}\textcolor{Green}{B}\textcolor{Goldenrod}{C}$ used to define the Markov length. $\textcolor{Cerulean}{A}$ is a single qubit, $\textcolor{Green}{B}$ surrounds $\textcolor{Cerulean}{A}$ with a radius $d_{AC}$, and $\textcolor{Goldenrod}{C}$ is the rest of the system. (c) a local recovery channel $\mathcal{R}_{s,t}$ acting on $\textcolor{Cerulean}{A}\textcolor{Green}{B}$ that reverses the effect of $\mathcal{E}_{s,t}$ supported on $\textcolor{Cerulean}{A}$.}
\end{figure*}

Given the crucial part that correlation lengths play in conventional phase transitions, it is important to identify measures that play an analogous part at mixed-state phase transitions, for example, diverging at phase transitions and evolving smoothly inside a phase. A natural information-theoretic measure with these features was recently identified, namely the ``Markov length'' (which is defined in terms of conditional correlation functions; see Sec.~\ref{markov_def}). The Markov length has two key properties: first, it evades light-cone bounds~\cite{PRXQuantum.4.040332, zhang2024nonlocal, lee2024universal}, and can diverge at finite time; and second, as Ref.~\cite{sang2024stability} showed, any finite-time noisy local process that keeps the Markov length finite can be reversed by a finite-time evolution that is local on the scale of the Markov length. Thus, to show that a mixed state is stable against a class of perturbations, it suffices to show that the Markov length remains finite when these perturbations are sufficiently weak.  In the literature so far, this is assumed; however, as we discuss below, the validity of this assumption is not obvious, even at a physical level.

In the present work, we establish stability of the Markov length for a broad class of states---the Gibbs states of local, commuting-Pauli Hamiltonians---subject to local noise that obeys a mild technical condition, which is satisfied (e.g.) by depolarizing noise. Before the noise acts, these Gibbs states have zero Markov length by the Hammersley-Clifford theorem~\cite{brown2012quantum}. Our result states that the Markov length remains finite at least when the perturbation acts for a short enough time. We conjecture that this claim holds more generally for any Gibbs state of a local Hamiltonian at any nonzero temperature, subject to weak local noise: however, the techniques we use here do not extend to those cases.

We emphasize that the stability of the Markov length is not intuitively obvious. To see why, consider a system in equilibrium at a thermal critical point. Its Markov length is zero by virtue of being in equilibrium, but generic correlations are long-range, and one might think that any deviation from equilibrium would cause these to ``infect'' the conditional correlations that define the Markov length. Nor is it obvious that a perturbed critical point can be related to the equilibrium one by a short evolution. The simplest local dynamics (Metropolis or Glauber) that restores local equilibrium takes infinitely long to do so at the critical point, because of critical slowing down. Our result implies that, nevertheless, there exists a local stochastic process that relaxes fast to equilibrium. Essentially, the reason this process exists is because it is tailored to the noise model (unlike Metropolis dynamics). This is visualized in Fig.\ref{fig:phases_donut_recovery}(a). Thus, considered as part of a broader family of mixed states (or probability distributions), classical critical points are inside a phase rather than at the phase boundary. This is consistent with the fact that the equilibrium states immediately above and below the critical temperature lie in distinct mixed-state phases: going between these equilibrium states requires a long evolution during which the system is out of equilibrium and its Markov length can therefore diverge. An example of such a divergence was illustrated in Ref.~\cite{lloyd2025diverging}. 

\subsection{Comparison with earlier and concurrent works}\label{comparison}
We compare our results with earlier and concurrent works. The notion of Markov length in the context of mixed-state phases was introduced in Ref.~\cite{sang2024stability}, which proved that finite Markov length implies the existence of local recovery channels that reverse the effect of local noise. Subsequent works~\cite{sang2025mixed} use the finite Markov length condition as a definition of mixed-state phases. This definition is more refined than two-way connectivity: when $\rho_1$ evolves to $\rho_2$ long some path, the Markov length condition requires that $\rho_2$ can be evolved back to $\rho_1$ along the same path. This constrasts with two-way connectivity, which only requires the existence of some path from $\rho_2$ to $\rho_1$. An explicit example where two-way connectivity holds but the Markov length condition fails is given in Ref.~\cite{zhang2025conditional}.

 Subsequent work has attempted to establish the stability of the Markov length in various regimes, thereby showing the stability of mixed-state phases in different senses. Ref.~\cite{zhang2025conditional} shows that high-temperature commuting Gibbs states have a finite Markov length under arbitrarily strong but strictly local noise \footnote{Strictly speaking, Ref.~\cite{zhang2025conditional} only shows the finite Markov length under certain technical restrictions on the channel. These properties resemble the stabilzier mixing condition in this paper in the sense that it preserves the commutation property of the algebra genearting the Gibbs state. However, we believe this is a technical constraint; these results should hold for any high-temperature (non-commuting) Gibbs states under arbitrary strictly local channels.}. This implies that high-temperature Gibbs states are ``absolutely stable'' in the sense that no strictly local noise can drive a phase transition. 
More recently, Ref.~\cite{ma2025circuit} showed that if two Gibbs states (of potentially non-commuting local Hamiltonians) can be connected by interpolating their Hamiltonians while keeping a decay of symmetric correlations, then they are in the same mixed-state phase. This result shows that (under some nontrivial clustering assumptions, which would not hold in glassy phases) thermodynamic phases and mixed-state phases coincide for Gibbs states of local classical or quantum Hamiltonians.
%
This result also immediately implies the stability of mixed-state phases under weak Hamiltonian perturbations, if the Hamiltonian is deep in a gapped phase.

In addition, Ref.~\cite{yi2025universal} showed that for a family of gapped Hamiltonians, if the ground states of one of these Hamiltonians has a finite Markov length, then the ground states of all Hamiltonians in this family have a finite Markov length~\footnote{Technically speaking, their result only shows that if the ground states of one Hamiltonian has CMI decaying superpolynomially fast, then the ground states of all Hamiltonians in this family have CMI decaying superpolynomially fast. This is slightly weaker than a expoentially decay. The relaxation to superpolynomial decay also seems necessary because of the nature of quasi-adiabatic continuation. However, }. This can be understood as the generalization of FDLU stability of gapped ground states to the finite Markov length stability of mixed-state phases. They also generalize their results to mixed state phases assuming the local reversibility condition given in~\cite{sang2025mixed}.

Our result is complementary to Ref.~\cite{ma2025circuit,yi2025universal} in the sense that we consider the stability of mixed-state phases under local channels instead of Hamiltonian deformations. In particular, we do not assume the clustering of conventional correlations or a finite gap: thus, for example, our results continue to apply at thermal critical points.
%
For a critical Gibbs state, there exist arbitrarily weak Hamiltonian deformations that drive a phase transition (e.g., changing the temperature slightly across the critical temperature). However, systems take a long time to equilibrate across a phase transition, so such perturbations need to be on for a long time. On the other hand, our result shows that under weak local noise, even critical states lie inside a stable mixed-state phase. One can also understand our result as establishing the local reversibility condition in Ref.~\cite{sang2025mixed} for a broad class of Gibbs states under weak local noise.

\subsection{Organization}
The rest of this paper is structured as follows. In Sec.~\ref{context} we define the key concepts, state our result informally, and explore its implications. In the next three sections we prove our main result in three steps: first, for classical thermal states subject to single-site noise (Sec.~\ref{one_site}); then for classical thermal states subject to general local noise (Sec.~\ref{finite_depth}); and finally, the most general form, which applies to quantum commuting-Pauli Hamiltonians subject to noise that obeys a particular ``incoherence-preserving'' condition (Sec.~\ref{quantum_finite_depth}). Finally, we summarize our findings and comment on some open questions in Sec.~\ref{discussions}. 

\section{Context and main results}\label{context}

In this section we begin with a brief review of the current understanding of mixed-state phases and the Markov length. We then introduce some key definitions and informally state our main result. Finally, we discuss some immediate implications of this result for quantum error correction and the properties of diffusion models in machine learning. 

\subsection{Context: Markov length and mixed-state phases}\label{markov_def}

\emph{Gibbs states}.---We first define Gibbs (i.e., thermal) states. We consider systems of $q$-dimensional qudits, with one qudit on each of $n$ nodes of a hypergraph $\mathcal{G}$. We consider a Hamiltonian consisting of mutually commuting terms that live on hyperedges $a \in \mathcal{G}$, i.e., 
\begin{equation}\label{eq:hamiltonian}
H = \sum\nolimits_{a \in \mathcal{G}} h_a,
\end{equation}
For concreteness one could consider the Ising model in $d$ dimensions, where the sites are vertices of the graph and interaction terms live on each edge. Since the graphs that appear in machine-learning applications and quantum codes are quite general, it is helpful to state our results in the general case.

We will always consider \emph{commuting} Hamiltonians, i.e., Hamiltonians in which every pair of $h_a$, $h_b$ commute. As a sperical case of this, we will consider \emph{classical} Hamiltonians, in which every $h_a$ is a diagonal operator in the computational basis. The Gibbs state at inverse temperature $\beta$ is defined as
\begin{equation}
    \rho_\beta = Z^{-1} e^{-\beta H}, \quad Z = \mathrm{Tr}(e^{-\beta H}).
\end{equation}

Next, we introduce the Markov property and the Markov length. commuting Gibbs states are special in the sense that they are \emph{exactly Markov} at all temperatures.

\emph{Markov property}.---We first define the mutual information (MI) and the conditional mutual information (CMI). For any state $\rho$, the MI between two subsystems $A$ and $B$ is defined as
\begin{equation}
    I_{\rho}(A:B) \equiv S(\rho_A) + S(\rho_B) - S(\rho_{AB}),
\end{equation}
Where $\rho_L$ denotes the reduced density matrix on subsystem $L$, and $S(\rho) = -\mathrm{Tr}[\rho \log \rho]$ is the von Neumann entropy. The CMI for any three subsystems $A$, $B$, and $C$ is defined as
\begin{equation}
    I_{\rho}(A:C|B) \equiv I_{\rho}(A:BC) - I_{\rho}(A:B).
\end{equation}
MI and CMI are both well-defined for classical and quantum systems. MI measures the direct correlations between two subsystems. CMI is more subtle: it measures the correlations between $A$ and $C$ that are not mediated by $B$. We will always consider ``annular'' geometries of the form shown in Fig.~\ref{fig:phases_donut_recovery}(b), in which $B$ separates $A$ from the rest of the system, $C$. The minimal graph distance between a point in $A$ and one in $C$ is denoted $d_{AC}$. For classical probability distributions, $I(A:C|B)$ is the mutual information between $A$ and $C$, evaluated in the conditional distribution $P(AC|B=b)$, averaged over configurations $b$ of region $B$ with the appropriate weights:
\begin{equation}
    I_{P(ABC)}(A:C|B) = \sum_{b} P(b) I_{P(AC|B=b)}(A:C)
\end{equation}
Thus it is manifestly non-negative. In quantum systems, the relation $I(A:C|B) \geq 0$ is a way of stating the strong subadditivity of entropy~\cite{lieb1973proof}. In either the classical or the quantum case, a state satisfying the condition $I(A:C|B) = 0$ is said to be ``exactly Markov''~\footnote{The rationale for this terminology is that if we think of $A,B,C$ as respectively the past, present, and future, in a Markov chain the past affects the future only through its effect on the present.}. If a state $\rho_{ABC}$ is exactly Markov for some tripartition $ABC$, there is a quantum channel $\mathcal{R}: B \to AB$ such that $\mathcal{R}(\rho_{BC}) = \rho_{ABC}$~\cite{hayden2004structure}. Thus, $B$ holds all the information needed to recover $A$, regardless of the state of $C$. For classical distributions that are exactly Markov, the recovery map $\mathcal{R}$ can be constructed explicitly as the conditional probability distribution: $P(ABC)=P(A|B)P(BC)$.

In practice, many of the quantum states we consider will not be exactly Markov. Instead, if $I(A:C|B) \alt e^{-|B|/\xi}$ for all tripartitions with sufficiently large $|B|$, one says the state is approximately Markov. For approximately Markov states, a recovery map of the type discussed above can still be defined, but instead of recovering the state exactly it recovers it with a fidelity controlled by $e^{-|B|/\xi}$. Specifically, if $I(A:C|B) \leq \epsilon$, there is a recovery map $\mathcal{R}: B \to AB$ such that~\cite{fawzi2015quantum,junge2018universal}
\begin{equation}
    \Vert \rho_{ABC} - \mathcal{R}(\rho_{BC}) \Vert_1 \leq \sqrt{2 \epsilon}
\end{equation}

Following Ref.~\cite{sang2024stability}, we define the Markov length as the lengthscale over which the CMI decays exponentially.
\begin{definition}
    Consider the family of all possible tripartitions $ABC$ of the topology defined above (Fig.~\ref{fig:phases_donut_recovery}(b)). We say that a state $\rho$ has a Markov length $\xi$ if for any such tripartition, the CMI can be bounded as
    \begin{equation}
       I_{\rho}(A:C|B) \le c \, e^{-d_{AC}/\xi}
    \end{equation}
    Where $c$ is an constant that could depend polynomially on the tripartition geometry; $\xi$ is defined to be the Markov length.
\end{definition}
%
%

Commuting Gibbs states are exactly Markov at all temperatures for any tripartition $ABC$ such that $B$ separates $A$ from $C$, as a consequence of the Hammersley-Clifford theorem~\cite{brown2012quantum}. The intuition behind this result is straightforward in the classical case: the joint distribution $P(ABC)$ as be written as $\exp(-\beta H_{AB}) \exp(-\beta H_{BC})$, where $H_{AB}$ contains all terms in the Hamiltonian that act on $A$ and $B$, and similarly for $H_{BC}$. Conditioning on $B=b$ freezes all terms in $H_{AB}$ and $H_{BC}$ to some fixed value, so the conditional distribution factorizes as 
\begin{equation}
    P(AC|B=b) \propto \exp(-\beta H_{AB}(b)) \exp(-\beta H_{BC}(b)),
\end{equation}
which is uncorrelated between $A$ and $C$. A similar reasoning applies to commuting quantum Hamiltonians, although the ``conditioning'' operation is more subtle~\cite{hayden2004structure}. The above statement holds for any temperature. This behavior of CMI contrasts with the conventional MI which can be long-ranged at low temperatures and diverges at thermal critical points.

Non-commuting quantum Gibbs states are not generally exactly Markov; however, they often exhibit a finite Markov length, depending on the choice of tripartitions and the temperature~\cite{kuwahara2025clustering,kato2504clustering,chen2025quantum,bakshi2025dobrushin,bluhm2025strong,zhang2025conditional}. In particular, for the tripartition in Fig.~\ref{fig:phases_donut_recovery}(b), it is known that non-commuting quantum Gibbs states have a finite Markov length controlled by the temperature~\cite{chen2025quantum}\footnote{Strictly speaking, the result of Ref.~\cite{chen2025quantum} only shows the exponential decay of the recovery error with $d_{AC}$. When mapping this to the decay of CMI, one incurs an additional prefactor $|C|$ which renders the bound vacuous in the thermodynamic limit. However, recovery error is sufficeint for most purposes.}. The temperature-dependence of the Markov length is inevitable: in the zero-temperature limit the Gibbs state is just the ground state, which is a pure state, and for pure states $I(A:C|B) = I(A:C)$. This mutual information diverges for quantum-critical ground states, so there cannot be a uniform bound on the Markov length in this limit. Nevertheless, at any nonzero temperature---including at finite-temperature transitions with a ``quantum'' character, like the loss of a thermally stable quantum memory---Gibbs states are approximately Markov, even if their unconditional correlations are long-ranged.


\emph{Mixed-state phases}.---We define mixed-state phases in this section. To keep the discussion simple, we will specialize to systems that are geometrically local in Euclidean space. As we discussed in the introduction, two mixed states $\rho_1, \rho_2$ are said to be in the same mixed-state phase if there are \emph{admissible} channels $\mathcal{N}$ and $\mathcal{R}$ such that $\mathcal{N}\rho_1 = \rho_2, \mathcal{R}\rho_2 = \rho_1$. Under the refined definition of Ref.~\cite{coser2019classification}, a channel $\mathcal{N}$ is admissible if it can be written as a time-dependent evolution for a finite period $\tau$ under a (generally time-dependent) quasilocal Lindbladian, i.e., $\mathcal{N} = \mathcal{N}_{t = \tau}$, where for $0 \leq t \leq \tau$ $\mathcal{N}_t = \exp\left(\int_0^t \mathcal{L}(t') dt'\right)$, and moreover the state $\mathcal{N}_t \rho_1$ has a finite Markov length for all $t \leq \tau$. Here, a quasilocal Lindbladian is one in which any term is supported on at most $O(\mathrm{polylog} (n))$ qubits for a system of $n$ qudits. 

The key result relating mixed-state phases to the Markov length can be stated informally as follows. 

\begin{theorem}\label{shengqi}
(Theorem 1 of \cite{sang2024stability}) Suppose that $\mathcal{N}\rho_1 = \rho_2$ where $\mathcal{N}$ is an admissible channel as defined above, acting on a system of $n$ qubits, and that the Markov length of $\mathcal{N}\rho_t$ is upper-bounded by $\xi$ for all $t \leq \tau$. Then a recovery channel $\mathcal{R}$ consisting only of terms with radius $\leq r$ exists, such that 
\begin{equation}
\Vert (\mathcal{R} \circ \mathcal{N}) [\rho_1] - \rho_1 \Vert_1 \leq \mathrm{poly}(n) e^{-r/(2\xi)},
\end{equation}
where $\xi$ is the Markov length. In particular, if we choose $r \geq \xi \log \left(\mathrm{poly}(n)/\epsilon\right)$, we can achieve a recovery channel such that $\Vert (\mathcal{R} \circ \mathcal{N}) [\rho_1] - \rho_1 \Vert_1 \leq \epsilon$. 
\end{theorem}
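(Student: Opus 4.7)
The plan is to build $\mathcal{R}$ as an ordered product of local recoveries, one for each block in a circuit approximation of $\mathcal{N}$, with the total error controlled by the finite-Markov-length hypothesis and data processing. First, I would approximate $\mathcal{N} = \mathcal{N}_\tau$ by a brick-wall circuit of strictly local channels: using quasilocality of $\mathcal{L}(t)$ and Lieb--Robinson-type bounds for dissipative dynamics, I write $\mathcal{N} \approx \prod_{i=1}^{M} \mathcal{N}_i$ where each $\mathcal{N}_i$ is supported on a ball of bounded radius and $M = O(n\tau)$. The circuit approximation error decays exponentially in the block radius and can be made subdominant to the final bound.

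Second, for each factor I construct a local recovery. Let $\rho^{(i)} = \mathcal{N}_i \circ \cdots \circ \mathcal{N}_1 [\rho_1]$ denote the intermediate state just after $\mathcal{N}_i$, and let $A_i$ denote the support of $\mathcal{N}_i$. Instantiate the annular tripartition of Fig.~\ref{fig:phases_donut_recovery}(b) with $A = A_i$, $B$ an annulus of width $r$ around $A$, and $C$ the remainder. Because $\rho^{(i)}$ is close to $\mathcal{N}_{t_i}\rho_1$, the hypothesis delivers $I_{\rho^{(i)}}(A:C|B) \leq c\, e^{-r/\xi}$. The universal recovery theorem of Junge et al.~\cite{junge2018universal} then furnishes a channel $\mathcal{R}_i$ supported on $AB$ (hence of radius $\leq r$) and depending only on $\rho^{(i)}_{AB}$, satisfying
\begin{equation}
\Vert \mathcal{R}_i(\rho^{(i)}_{BC}) - \rho^{(i)}_{ABC}\Vert_1 \leq \sqrt{2c}\, e^{-r/(2\xi)}.
\end{equation}
Because $\mathcal{N}_i$ acts only on $A$, $\rho^{(i)}_{BC} = \rho^{(i-1)}_{BC}$, so $\mathcal{R}_i$ is a faithful local inverse of $\mathcal{N}_i$ against the preceding state.

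Third, I compose the recoveries in reverse chronological order, $\mathcal{R} = \mathcal{R}_1 \circ \mathcal{R}_2 \circ \cdots \circ \mathcal{R}_M$. The per-block error is $\sqrt{2c}\, e^{-r/(2\xi)}$; monotonicity of trace distance under channels ensures that earlier errors are not amplified when later $\mathcal{R}_j$ are applied. A triangle-inequality sum over $M = O(n\tau)$ blocks yields $\Vert \mathcal{R} \mathcal{N} \rho_1 - \rho_1\Vert_1 \leq \mathrm{poly}(n)\, e^{-r/(2\xi)}$, as claimed.

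The main technical obstacle is ensuring strict locality of the recovery in step two: a generic optimal recovery from the Fawzi--Renner inequality need not be supported inside $AB$. The resolution is to use the Petz-type recovery $\mathcal{R}_i(X) = \rho_{AB}^{1/2}\bigl(\rho_B^{-1/2}\, X\, \rho_B^{-1/2}\bigr)\rho_{AB}^{1/2}$ (possibly twirled as in Junge et al.), which is manifestly supported on $AB$ and inherits essentially the same CMI-based recovery bound. A secondary subtlety is that the quasilocal Lindbladian may have terms of unbounded support; this is controlled by Lieb--Robinson-type bounds for open-system dynamics, absorbing the tails into a subdominant polynomial correction.
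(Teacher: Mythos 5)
Your proposal follows essentially the same route as the paper's (sketched) argument for this imported theorem: Trotterize the admissible channel into local factors, use the finite-Markov-length hypothesis together with a universal (Petz-type) recovery map supported on $AB$ to locally invert each factor, and compose the recoveries in reverse order with errors accumulating additively via the triangle inequality and monotonicity of trace distance. One indexing slip to fix: $\mathcal{R}_i$ must be built from the \emph{pre}-noise state $\rho^{(i-1)}$ (whose CMI is what you should invoke, and whose $BC$-marginal equals $\rho^{(i)}_{BC}$ because $\mathcal{N}_i$ acts only on $A_i$), so the relevant guarantee is $\Vert \mathcal{R}_i(\rho^{(i)}_{BC}) - \rho^{(i-1)}_{ABC}\Vert_1 \leq \sqrt{2c}\, e^{-r/(2\xi)}$; as literally written your $\mathcal{R}_i$ recovers the post-noise state $\rho^{(i)}_{ABC}$, which inverts nothing.
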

We give an intuition behind the proof of this theorem. The first step is to Trotterize the Lindbladian into a finite-depth quantum channel consisting of layers of local channels. Then, one constructs a local recovery channel for each local channel in the trotterization, using the finite Markov length to ensure that the recovery channel can be localized to a neighborhood of size $\sim \xi \, \log (\rm{poly}(n)/\epsilon)$ (Fig.~\ref{fig:phases_donut_recovery}(c)). Composing these local recovery channels together gives the desired recovery channel $\mathcal{R}$. Thus the actions of an admissible channel on $\rho_1$ can be approximately inverted by another admissible channel consisting of terms of radius polylogarithmic in the system size $n$ as well as depth polylogarithmic in $n$. 
%
As a particular application of this result, an error correcting code like the toric code can be decoded one patch at a time, provided the patches are of radius $\agt \xi \log n$. Ref.~\cite{sang2024stability,yang2025topological} provide numerical evidence for the divergence of $\xi$ as one approaches a threshold. For toric code, this threshold coincides with the error correction threshold. 

We will use a slightly weaker (``Trotterized'') notion of what an admissible channel is: we will only require that the channel can be broken up into a large but finite number of time steps, and that the Markov length should remain finite at every time step. We note that the proof in Ref.~\cite{sang2024stability} applies whenever the Trotter step is sufficiently small. 

\subsection{Summary of main results}\label{main_results}

We now turn to the main results of the present work. We consider the Gibbs states of Hamiltonians of the form~\eqref{eq:hamiltonian}. We primarily consider commuting Hamiltonians in which each term $h_a$ is a product of $q$-dimensional Pauli operators. Speficially, the $q$-dimensional Pauli group is generated by the generalized $X$ and $Z$ operators defined as
\begin{align}
    X |j\rangle &= |j+1 \mod q \rangle \\
    Z |j\rangle &= \omega^j |j\rangle, \quad \omega = e^{2\pi i/q}
\end{align}
All classical Hamiltonians are special cases of this, in which all $h_a$ are products of $Z$ operators only. As an example, each $h_a$ could be an local element of the stabilizer group. Note that the set $\{h_a\}$ need not be a set of independent generators. For instance, The two-dimensional Ising model have terms that are not independent since the product of all closed loops of $ZZ$ terms is the identity operator.

We consider Gibbs states subject to arbitrary finite-depth graph-local channels. On a general graph these can be defined as follows: 

\begin{definition}\label{def:finite_depth_channel}
   Consider the interaction graph $\mathcal{G}$ defined in the previous section. We define a finite-depth quantum channel $\mathcal{E}$ with $T$ layers as follows 
\begin{align}\label{eq:finite_depth_channel}
    \mathcal{E} = \mathcal{E}_T \circ \mathcal{E}_{T-1} \circ \cdots \circ \mathcal{E}_1 \\
    \mathcal{E}_{t} = \prod_{a \in \mathcal{G}_t} \mathcal{E}_{a,t}
\end{align}
Where $\mathcal{E}_{t}$ denotes a layer of quantum channels, and each of $\mathcal{E}_{a,t}$ is a quantum channel acting on the qudits contained in the hyperedge $a$ at time $t$. $\mathcal{G}_t$ denotes a subgraph of $\mathcal{G}$ where every hyperedge is disconnected from each other, so that all $\mathcal{E}_{a,t}$ can be applied in parallel.
\end{definition}

Since we would like to show stability under weak perturbations, we will consider channels $\mathcal{E}_{a,t}$ that are close to the identity channel. Specifically, we assume that each local channel $\mathcal{E}_{a,t}$ can be written as
\begin{equation}\label{eq:perturbed_channel}
    \mathcal{E}_{a,t} = (1 - \epsilon_{a,t}) \mathcal{I} + \epsilon_{a,t} \mathcal{N}_{a,t},
\end{equation}
where $\mathcal{I}$ is the identity channel, $\mathcal{N}_{a,t}$ is an arbitrary quantum channel, and $\epsilon_{a,t} \in [0,1]$ quantifies the strength of the perturbation. We set an upper bound $\epsilon_{a,t} < \epsilon$ for all $a,t$.


We will primarily consider a restricted class of channels called \emph{stabilizer-mixing channels}, which we now give a informal introduction. The Hamiltonian~\eqref{eq:hamiltonian} can be explicitly diagonalized in terms of mixed stabilizer states specified by the eigenspaces of each stabilizer. Its Gibbs states can be completely characterized in terms of the classical probability mass over eigenspaces $P(\mathbf{s})$. We call it the \emph{stabilizer distribution} of the Gibbs state.

Our result will concern channels that have the following property:
\begin{definition}\label{def:stabilizer_mixing_channel_informal}
    Suppose the commuting Hamiltonian $H$ defined in Eq.~\eqref{eq:hamiltonian} can be decomposed into $\sum_{\mathbf{s}} E_{\mathbf{s}} \Pi_{\mathbf{s}}$, where $\Pi_{\mathbf{s}}$ is the projector onto the eigenspace labeled by $\mathbf{s}$ (proportional to the mixed stabilizer state). A channel $\mathcal{E}$ is called \emph{stabilizer-mixing} if for any $\mathbf{s}$ we have
\begin{equation}
    \mathcal{E}\left(\Pi_{\mathbf{s}}\right) = \sum_\mathbf{\mathbf{s}'} Q(\mathbf{s}') \Pi_{\mathbf{s}'},
\end{equation}
where $Q(\mathbf{s}')$ is a probability distribution over stabilizer states labelled by $\mathbf{s}'$.
\end{definition}

When the Hamiltonian is classical. we can always choose $\Pi_{\mathbf{s}}$ to be one-dimensional projectors onto computational basis states. In this case, stabilizer-mixing channels reduce to arbitrary stochastic processes.

Stabilizer-mixing channels give us two properties that are crucial for our analysis. First, density matrices that are diagonal in the eigenbasis of the Hamiltonian~\eqref{eq:hamiltonian} remain diagonal after the noise acts, i.e., the noise does not create quantum coherence in this basis. Second, the action of stabilizer-mixing channels preserve the degeneracy of each eigenspace. The two properties together ensures that the probability mass over the eigenspaces uniquely determines the density matrix after the noise acts.


We will introduce our notation more formally in the main text. For now we start by stating our main result:

\begin{theorem}\label{thm:quantum_stability_finite_depth_informal}
    (Informal) Consider a commuting Gibbs state $\rho \propto e^{-\beta H}$ where $H$ is defined in Eq.~(\ref{eq:hamiltonian}) and each $h_a$ is a product of Pauli operators. For all finite-depth local quantum channels $\mathcal{E}$ with $d$ layers of local channels defined in Definition \ref{def:finite_depth_channel} where each local channel $\mathcal{E}_{a,t}$ is (a) stabilizer-mixing as defined in Definition \ref{def:stabilizer_mixing_channel_informal} and (b) admits the form \eqref{eq:perturbed_channel}. There exists a constant $\epsilon_c$ depending on $\mathcal{G}$ and $d$ such that if $\epsilon_{a,t} < \epsilon_c$, then the Markov length of $\mathcal{E}(\rho)$ is finite.
\end{theorem}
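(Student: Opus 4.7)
The plan is to show that $\mathcal{E}(\rho)$ admits a representation as a Gibbs state of an effectively local Hamiltonian, after which standard decay-of-correlations estimates bound the CMI. Mirroring the paper's own sectioning, I would argue in three stages: (i) single-site classical noise (Sec.~\ref{one_site}), (ii) general finite-depth classical noise (Sec.~\ref{finite_depth}), and (iii) the quantum case (Sec.~\ref{quantum_finite_depth}) reduced to the classical one via the stabilizer-mixing hypothesis.

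For a single layer of single-site classical noise, the post-noise distribution $\mathcal{E}(\rho)(\mathbf{x})$ is the marginal over an auxiliary ``pre-noise'' configuration $\mathbf{z}$ of the joint law $P(\mathbf{x}, \mathbf{z}) \propto \prod_s \mathcal{E}_s(x_s \mid z_s) \, e^{-\beta H(\mathbf{z})}$ on the doubled lattice. Inserting $\mathcal{E}_s = (1-\epsilon)\mathcal{I} + \epsilon \mathcal{N}_s$ and expanding in the number $k$ of nontrivial noise events, only $k$ of the sites have $z_s \neq x_s$; marginalizing $e^{-\beta H(\mathbf{z})}$ over those $k$ sites with the remaining $z$-variables clamped to $\mathbf{x}$ produces an effective potential for $\mathbf{x}$ supported on the union of the $k$ perturbed sites and their Markov blankets. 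Summing over clusters yields an effective Hamiltonian $H^{\mathrm{eff}}(\mathbf{x})$ whose $k$-cluster contributions are suppressed by $\epsilon^k$. For $\epsilon$ below the convergence radius of the cluster sum, $H^{\mathrm{eff}}$ is effectively local with exponentially small long-range tails, and Hammersley--Clifford (plus standard spatial-mixing estimates for the exponentially suppressed tails) gives a finite Markov length for $\mathcal{E}(\rho)$.

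For a general finite-depth channel I would iterate this construction: at each of the $d$ layers the joint (pre-noise, post-noise) distribution is again Gibbs with local couplings on a space-time lattice, and the layer-by-layer cluster expansion produces an effective Hamiltonian whose interaction range is controlled by $d$ together with the convergence radius of the expansion. The threshold $\epsilon_c(d,\mathcal{G})$ arises precisely because each additional layer enlarges the space-time region from which clusters can be drawn and thereby tightens the requirement on $\epsilon$ needed for absolute convergence. For the quantum case, the stabilizer-mixing property preserves the block-diagonal structure, $\mathcal{E}(\rho) = \sum_{\mathbf{s}} P'(\mathbf{s}) \Pi_{\mathbf{s}}$, where $P'$ is obtained by applying the induced classical finite-depth channel to the original syndrome distribution $P$. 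Using the annular geometry of Figure~\ref{fig:phases_donut_recovery}(b)---so that no logical operator of the commuting Pauli code is supported within the annulus---and the commuting-Pauli structure of the $\Pi_{\mathbf{s}}$, the quantum CMI on $ABC$ is bounded above by the classical CMI of $P'$ on the associated syndrome tripartition, and the preceding classical analysis applies.

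The main obstacle is controlling the cluster expansion uniformly in the system size across $d$ layers. A careful combinatorial estimate---for instance a Kotecky--Preiss type criterion, or a Dobrushin-style contraction on the auxiliary space-time Gibbs distribution $P(\mathbf{x}^{(0)}, \ldots, \mathbf{x}^{(d)})$---will be required to ensure that the induced multi-spin effective interactions decay sufficiently fast with cluster size; this is exactly what fixes $\epsilon_c$. The crucial physical point is that convergence is governed by the noise strength and the locality of the channel, not by the temperature of the underlying Gibbs state, which is what allows the conclusion to extend to critical points and ordered phases where the unconditional correlation length is large or infinite but the auxiliary space-time distribution remains in a convergent cluster-expansion regime.
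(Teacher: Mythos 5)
Your proposal is sound in outline but takes a genuinely different route through the central classical step. The paper never constructs an effective Hamiltonian for the noisy marginal $\mathcal{E}(P)(\mathbf{x})$ itself; instead it conditions on the \emph{noisy} configuration $b'$ of $B$ only, uses Bayes' theorem to recognize $P(\mathbf{x}\mid b')$ as the \emph{original} Gibbs distribution plus single-site pinning fields of strength $\sim\log(1/\epsilon)$ on $B$ (Lemma~\ref{lem:pinning}), and then runs a Kotecký--Preiss polymer expansion in the \emph{unpinned excitations} of $B$, showing that only clusters of weight $\geq d_{AC}$ couple $A$ to $C$ (Lemmas~\ref{lem:free_energy_decomposition} and~\ref{lem:F_AC_bound}). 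Your route---expanding in the noise events $S$ and resumming the clamped pre-noise variables to get a quasi-local $H^{\mathrm{eff}}$ for the post-noise variables---should also work and is in some ways stronger (it characterizes the full noisy state, not just its conditionals), at the price of a $\beta$-dependent polymer weight $\sim(\epsilon\, q\, e^{\beta\mathfrak{d}h_{\max}})^{|\gamma|}$; note therefore that your closing claim that convergence is ``not governed by the temperature'' is imprecise---both thresholds degrade with $\beta h_{\max}$, and the real point is only that neither depends on the correlation length. Three places in your sketch need tightening. First, the passage from ``quasi-local $H^{\mathrm{eff}}$ with exponential tails'' to ``finite Markov length'' is not a citation to spatial mixing: you must condition on all of $B$, observe that only effective terms spanning from $A$ to $C$ survive as cross-couplings, bound their total sup-norm by $\min(|\partial A|,|\partial C|)e^{-b\,d_{AC}}$ via the cluster-expansion convergence, and convert that to an MI bound by the multiplicative-closeness-to-product argument (exactly the paper's Sec.~\ref{sec:proof_outline}). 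Second, iterating layer by layer compounds quasi-locality: after one layer $H^{\mathrm{eff}}$ is no longer strictly local, so the polymer factorization you rely on fails exactly at the next layer; the paper sidesteps this by doing a single space-time Bayes inversion and blocking each site's temporal history into one super-spin (Lemmas~\ref{lem:spacetime_gibbs_finite_depth} and~\ref{lem:pinning_finite_depth}), which I would recommend adopting. Third, in the quantum reduction you need (and omit) that the syndrome distribution $P(\mathbf{s})$ is itself a local classical Gibbs distribution on the derived hypergraph (Lemma~\ref{lem:stabilizer_distribution_gibbs}) and that the induced syndrome process inherits the $(1-\epsilon)\mathcal{I}+\epsilon\mathcal{N}$ form; the paper in fact proves the exact identity $I_{\rho'}(A:C|B)=I_{P'}(A\,\partial A:C\,\partial C|B)$ via a Holevo decomposition using the equal rank of the $\Pi_{\mathbf{s}}$, which is cleaner than the one-sided bound you assert and does not require any condition on logical operators in the annulus.
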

Together with Theorem~\ref{shengqi}, this result implies the stability of commuting Gibbs states under local perturbations. We conjecture that the Markov length remains finite even without restricting to commuting-projector Hamiltonians and eigenspace-preserving channels, since under coarse-graining any Hamiltonian eventually flows to a zero-correlation-length limit. However, the techniques we have used to establish Theorem~\ref{thm:quantum_stability_finite_depth_informal} do not readily generalize to these cases. In any case, even the restricted result we are able to prove has broad implications, which we will discuss below.

The essential idea behind our proof is simplest to explain if we simplify the setting, and consider a classical Hamiltonian subject to single-site noise (Theorem~\ref{thm:classical_stability_single_site_informal} below). Establishing that the Markov length stays finite amounts to establishing that the \emph{classical} conditional distribution on $AC$ approximately factorizes given the state on $B$. By the data processing inequality, it suffices to consider noise acting only on $B$. Let us consider conditioning on a particular configuration of $B$, namely $b_i$. For this configuration, the Gibbs state (by hypothesis) factorizes perfectly between $A$ and $C$. Adding noise to $B$ is equivalent to having slightly imperfect knowledge of each spin in $B$ (so it has some probability $\sim \epsilon$ of pointing opposite to how it would in the configuration $b_i$). A key insight is that the conditional distribution on $ABC$ incorporating this imperfect knowledge can be represented as a Gibbs state of a modified Hamiltonian, in which a random field of strength $\sim |\log \epsilon|$ is applied to every spin in $B$, imperfectly pinning it along the configuration $b_i$. When $\epsilon$ is sufficiently small, this modified Gibbs state is in the trivial phase in region $B$, with exponentially decaying MI. This in turn implies the exponential decay of the CMI in the noisy distribution. 

The central technical contribution of the present work is to establish this conclusion formally using series-expansion techniques. Once this is done, the extension to channels acting on multiple sites, or to commuting-projector Gibbs states, follows directly from blocking the microscopic qudits into larger effective degrees of freedom. 

\subsection{Implications for thermally stable quantum memories}

Our result has important implications for the phase diagram of thermally stable quantum memories, such as the four-dimensional toric code~\cite{dennis2002topological,pastawski2011quantum,brown2016quantum}. We will discuss this example for concreteness but our considerations apply quite generally. Below a critical temperature $T_c > 0$, the 4D toric code has multiple Gibbs states in the thermodynamic limit, corresponding to the distinct ground-state logical sectors~\cite{dobrushin1968description,lanford1969observables}. Since the Hamiltonian of the 4D toric code consists of commuting local projectors, the Gibbs states are strictly Markov, even at $T_c$. On the other hand, if one starts with the 4D toric code at $T < T_c$ and adds depolarizing noise of strength $p$, the Markov length increases until it diverges at the error-correction threshold $p_c(T)$ for the optimal decoder. Since the Markov length is zero at the thermal transition and infinite at the noise-driven transition, these clearly belong to distinct universality classes.

\begin{figure}[tb]
\begin{center}
\includegraphics[width=0.45\textwidth]{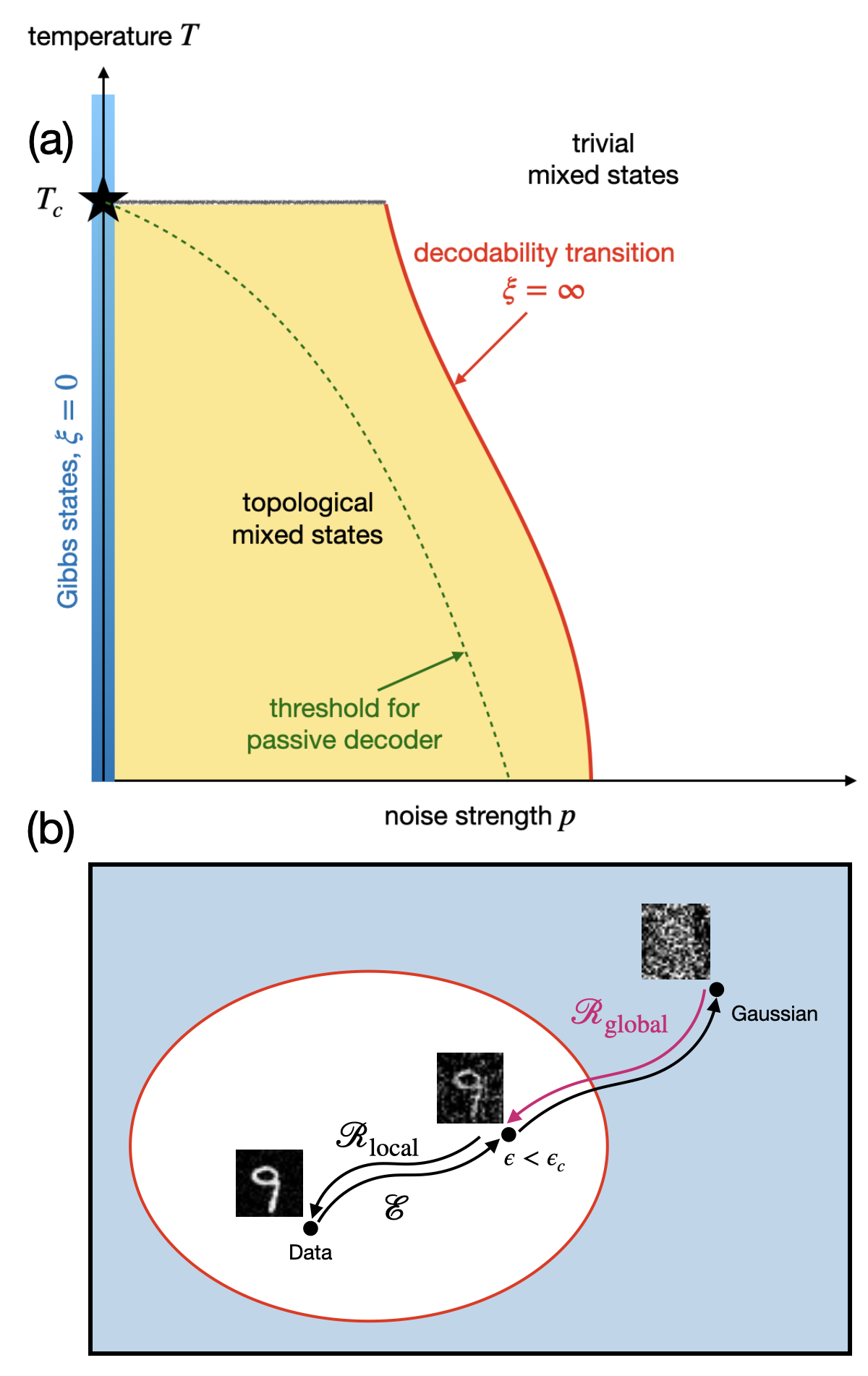}
\caption{(a) Schematic phase diagram for thermally stable quantum memories, as a function of temperature and noise strength. A mixed state with parameters $(T,p)$ is prepared by starting from a Gibbs state of temperature $T$ and applying noise of strength $p$ to every qudit. For noise below the information-theoretic threshold, quantum information can be recovered by optimal decoding. At the information-theoretic threshold, the Markov length diverges. Passive decoders such as heat-bath dynamics are suboptimal, so their thresholds have no information-theoretic significance. (b) Schematics of the diffusion model and its relation to mixed-state phases. The generation process has to violate locality at some point ($\mathcal{R}_{\rm{global}}$) but near the data distribution the generation dynamics can be local ($\mathcal{R}_{\rm{local}}$).}
\label{4dtc}
\end{center}
\end{figure}

A natural question is how these two phase transitions fit together as $T \to T_c$. Specifically, we consider a ``phase diagram'' as a function of temperature and noise strength, such that the state with parameters $(T,p)$ is prepared by initializing the system in a Gibbs state at temperature $T$ and applying a round of noise of strength $p$. Naively one might expect that $p_c(T) \to 0$ as $T \to T_c$. However, our result establishes that this \emph{cannot} be the case: even as $T \to T_c$, the Gibbs state remains perfectly Markov, and the noise can be reversed as long as its strength is below some $O(1)$ threshold. On the other hand, throughout the high-temperature phase, the Gibbs state contains no quantum information and there is nothing to decode. Thus the phase diagram of the optimal decoder must look like that sketched in Fig.~\ref{4dtc}(a), with a nonzero limiting value of the threshold as $T \to T_c^-$. The Gibbs states below and above $T_c$ are not related by a short-time evolution: if one tries to dynamically prepare a Gibbs state above $T_c$ by starting in a Gibbs state below $T_c$ and running local Metropolis dynamics, the system inevitably falls out of equilibrium and its Markov length diverges before falling back to zero~\cite{lloyd2025diverging}. 

So far, we have considered whether the information is \emph{in principle} retrievable using an optimal decoder. For a thermally stable memory, one could ask instead whether the information can be retrieved \emph{passively}, by coupling the system to a local heat bath. Passive decoders are suboptimal, and their thresholds lie strictly inside the phase of decodable states (Fig.~\ref{4dtc}). The threshold for passive decoding also vanishes as $T \to T_c^-$, because of the diverging susceptibility of the thermal state to perturbations. Thus the Markov length at the passive decoding transition goes smoothly to zero in this limit; however, since passive decoders are suboptimal, the value of the Markov length at the passive decoding threshold is a nonuniversal number that has no information-theoretic significance.

We emphasize that our current technical result does not rigorously imply a threshold theorem for quasilocal decoders at finite temperature, although it does imply a mixed-state phase diagram of the shape indicated. The precise result is that the true Gibbs state near $T_c$ (which is maximally mixed in the logical subspace) is in the same mixed-state phase as the decohered Gibbs state up to a finite decoherence strength. Moreover, our results yield an explicit quasi-local decoder that recovers the Gibbs state from the decohered Gibbs state. To prove a threshold, we would need to prove that this decoder recovers the Gibbs state with encoded logical information, starting from its decohered version. 

We argue for this informally following Ref.~\cite{sang2024stability}: the recovery channel is made up of gates that do not ``know'' about the logical state, since distinct logical states are locally indistinguishable at any $T < T_c$. Therefore, it will act the same way on the logical Gibbs state as it does on the true Gibbs state. 
Alternatively, since the logical Gibbs states become exactly Markov in the thermodynamic limit~\cite{dobrushin1968description,lanford1969observables}, applying our result to logical Gibbs states in the thermodynamic limit would suffice to establish a threshold. 
The main obstruction to doing this is that the existence of multiple Gibbs states below $T_c$ for the 4D toric code (corresponding to distinct logical sectors) has not been rigorously established to our knowledge. 
Our methods do not help prove this widely believed claim. Nevertheless, assuming the toric code has multiple distinct Gibbs states corresponding to logical sectors, our technical result implies a phase diagram of the form Fig.~\ref{4dtc}(a) even with logical information encoded. 
%
%
%
We leave a rigorous proof to future work.

\subsection{Local denoisers in diffusion models}

The concept of mixed-state phases and phase transitions is fundamentally buried in a class of generative AI model called the diffusion model~\cite{hyvarinen2005estimation,vincent2011connection,sohl2015deep,ho2020denoising,song2019generative,song2020improved,nichol2021improved}. There, the goal is to sample a distribution that we do not know but have samples from. Diffusion models achieve this by first adding noise to the data distribution through a local dynamics called the forward process, and then learning a reverse dynamics called the reverse process to recover the original data distribution (Fig.~\ref{4dtc}(b)). The reserve process that we wish to learn turns out to be the recovery map in mixed-state phases. One can also consider this process as a form of error correction, where the forward process is the noise channel and the reverse process is the decoder. However, instead of decoding a given state, diffusion model starts from the Gaussian noise and decodes it back to the data distribution. This does not fix the ``logical'' information, but it samples from the data distribution which is the goal of generative modeling.

While the forward process is local by construction (e.g. adding Gaussian noise to each pixel of an image independently), the reverse process is not guaranteed to be local. In fact, if the data distribution contains long-range correlations, which is common in real-world data, then the reverse process must be non-local since the data distribution is in a non-trivial phase. An important insight from Ref.~\cite{hu2025local} is that phase transition happens in a narrow time window. Away from this time window, the state is either deep in the trivial phase (at early time when the state is mostly noise) or deep in the data phase (at late time when the state is close to the data distribution). Since one expect the Markov length to be finite deep in the trivial/data phase, Theorem \ref{shengqi} implies that the reverse process can be well-approximated by a local dynamics away from the critical time window, potentially reducing the compute cost. This idea has seen success in Ref.~\cite{hu2025local}.

Our results provide a concrete example where local reverse processes provably exist in a constant time window. Specifically, we show that for any local Gibbs distribution, there exists a constant-strength noise such that the corrupted distribution can be locally recovered back to the original Gibbs distribution. This opens the door reducing the compute cost of diffusion models for data distributions, based on physics-inspired concepts like mixed-state phases and stability.

We also believe that real-life data distributions exhibit similar stability as local Gibbs distributions. For example, image data exhibits finite Markov lengths: to recover a lost pixel, one only need a small patch nearby instead of the entire image. While we do not have a proof of the stability of states at a finite Markov length, we believe that they are also stable under weak local perturbations, thereby ensuring the existence of local reverse processes in diffusion models for real-life data. We leave the proof of this conjecture to future work.

\section{Classical Gibbs States Under Weak Single-Site Stochastic Processes}\label{one_site}

\begin{figure*}
\includegraphics[width=\linewidth]{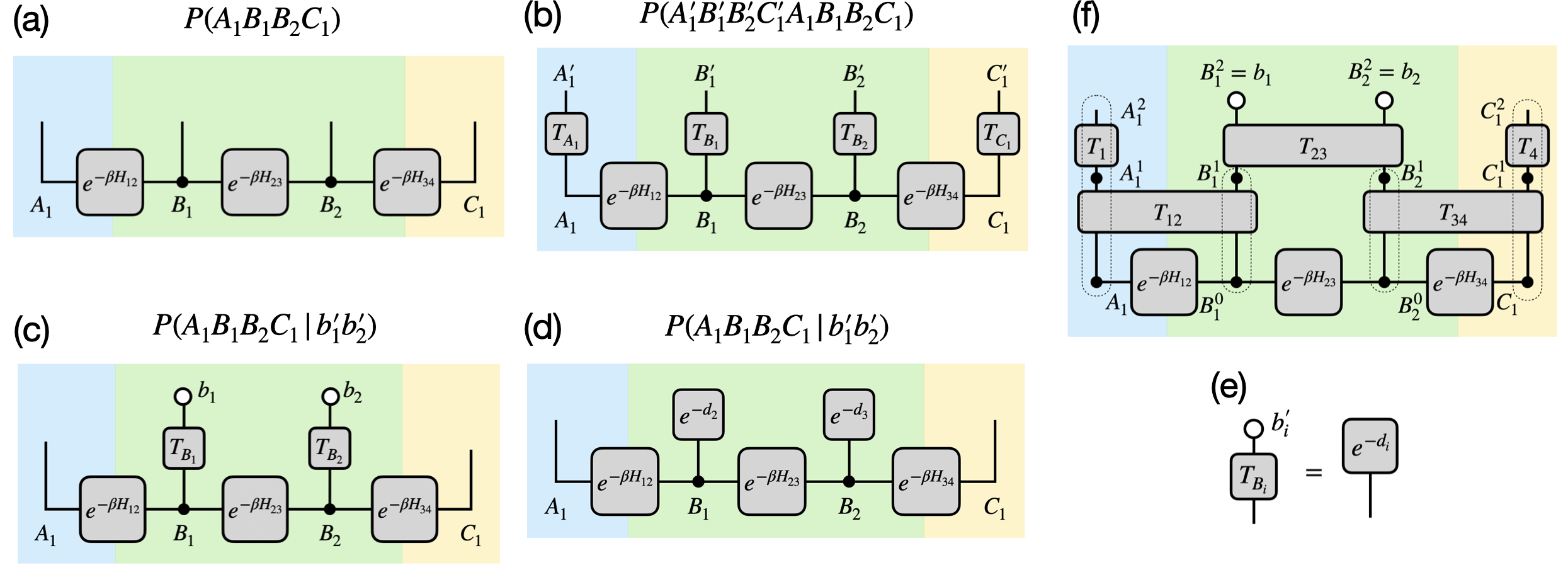}
\caption{\label{fig:pinning}(a) A one-dimensional Gibbs distribution $P(A_1 B_1 B_2 C_1)$ of four bits in a line and nearest-neighbor interactions. We partition the four bits into $\textcolor{Cerulean}{A_1}$, $\textcolor{Green}{B_1B_2}$, and $\textcolor{Goldenrod}{C_1}$.  (b) After applying a product of local stochastic processes on each bit, we obtain a joint distribution $P(A_1' B_1' B_2' C_1' A_1 B_1 B_2 C_1)$. (c) Conditioned on post-selecting $b_1' b_2'$ on $B$, the conditional distribution $P(A_1 B_1 B_2 C_1 | b_1' b_2')$. (d) The conditional distribution $P(A_1 B_1 B_2 C_1 | b_1' b_2')$ is a Gibbs distribution with pinning terms.  (e) Identifying the post-selected transition matrices as pinning terms. (f) blocking spins on the same site but in different time slices to form super-spins.}
\end{figure*}

We now turn to the proof of our main result. We will first establish this result in the special case of classical Gibbs states on which the noise acts independently on single qudits, and can be written as follows.
\begin{equation}\label{eq:independent_stochastic_process}
    \mathcal{T} = \bigotimes_{i=1}^n \mathcal{T}_i
\end{equation}
Where $\mathcal{T}_i$ is a stochastic matrix acting on the $i$-th spin. We represent the Gibbs states and regions $\textcolor{Cerulean}{A} \textcolor{Green}{B} \textcolor{Goldenrod}{C}$ in Fig.~\ref{fig:pinning}(a).




We first state our result below.

\begin{theorem}\label{thm:classical_stability_single_site_informal}
    (informal) Consider an interaction graph $\mathcal{G}$ supporting a Gibbs distribution $P(\mathbf{x}) \propto e^{-\beta H(\mathbf{x})}$ where $H(\mathbf{x})$ is defined in Eq.~(\ref{eq:hamiltonian}) and each $h_a$ is diagonal in the computational basis. There exist a constant $\epsilon_c$ depending on $\mathcal{G}$ and $\beta$ such that for all independent local stochastic processes $\mathcal{T}$ defined in Eq.~(\ref{eq:independent_stochastic_process}) subject to Eq.~(\ref{eq:perturbed_channel}), if $\epsilon_i < \epsilon_c$, then the Markov length of $\mathcal{T}(P)$ is finite.
\end{theorem}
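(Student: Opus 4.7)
The plan is to exploit the exact Markov property of the unperturbed Gibbs distribution $P$ (Hammersley--Clifford) and show that weak single-site noise introduces only exponentially small CMI between $A$ and $C$ given the noisy region $B'$. First, by the data-processing inequality applied to the noise on $A$ and $C$, which commutes with conditioning on $B' = b'$, it suffices to bound $\mathbb{E}_{b'}\, I_{\tilde{P}(\cdot|b')}(A:C)$, where $\tilde{P}(abc|b') \propto P(abc)\prod_{i\in B} T_i(b_i'|b_i)$. For each $b'$ this conditional distribution is itself a classical Gibbs measure of the modified Hamiltonian $H + V_{b'}$ with $V_{b'}(\mathbf{x}) = -\beta^{-1}\sum_{i\in B}\log T_i(b_i'|b_i)$, which acts as an on-site pinning potential of effective strength $\sim |\log \epsilon|$ favouring $b_i = b_i'$ at every site in $B$.

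Second, I would introduce a \emph{defect expansion} by writing $T_i(b_i'|b_i) = (1 - \epsilon_i)\delta_{b_i' b_i} + \epsilon_i N_i(b_i'|b_i)$ and expanding the product over $i\in B$ as a sum over defect subsets $S \subseteq B$ with prior weight $w(S) = \prod_{i\in S}\epsilon_i \prod_{i\notin S}(1-\epsilon_i)$. In the term indexed by $S$, the sites in $B \setminus S$ are hard-pinned to the observed value $b_i'$ and the sites in $S$ carry soft weights $N_i(b_i'|\cdot)$. Because $B$ separates $A$ from $C$, once $B \setminus S$ is removed from the interaction hypergraph, the components containing $A$ and $C$ remain disjoint \emph{unless} $S$ contains a connected crossing path from the $A$-facing boundary of $B$ to the $C$-facing boundary. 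For non-crossing $S$, Hammersley--Clifford applied to the pinned system yields $\tilde{P}(ac|b', S) = \tilde{P}(a|b', S)\,\tilde{P}(c|b', S)$, so such configurations contribute no correlation between $A$ and $C$. A Peierls-style count then bounds the total weight of crossing $S$'s by $\sum_{k \ge d_{AC}} (\mu \epsilon)^k \le (\mu \epsilon)^{d_{AC}}/(1 - \mu \epsilon)$, where $\mu$ depends on the local connectivity of $\mathcal{G}$ and on $\beta$ (through the ratios $Z(b', S)/Z(b', \emptyset)$ that dress the prior weights); setting $\epsilon_c := 1/\mu$ gives the desired exponential decay.

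Third, I would convert this geometric estimate into a bound on the conditional mutual information. A cluster/polymer-expansion analysis of $\log Z(b')$ and of the marginal $\tilde{P}(ac|b')$ should give $\|\tilde{P}(ac|b') - \tilde{P}(a|b')\,\tilde{P}(c|b')\|_1 \lesssim (\mu \epsilon)^{d_{AC}}$ uniformly in $b'$, after which a Fannes- or Pinsker-type continuity inequality upgrades this $L^1$ bound to an exponentially small bound on $I_{\tilde{P}(\cdot|b')}(A:C)$, yielding a finite Markov length after averaging over $b'$. The main obstacle I anticipate is precisely this conversion: the annealed distribution $\tilde{P}(ac|b') = \sum_S \tilde{P}(S|b')\,\tilde{P}(ac|b', S)$ is a \emph{mixture} of products over non-crossing $S$, and such a mixture need not itself be a product; naive chain-rule bounds through the auxiliary variable $S$ introduce a spurious term $H(S|b') \sim |B|\,h(\epsilon)$ extensive in $|B|$ and hence useless. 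The resolution should come from a polymer/cluster expansion verifying a Koteck\'y--Preiss-type convergence criterion uniformly in $b'$, exploiting the factorization of non-crossing defect clusters in $\log Z(b')$ between the $A$ and $C$ sides of $B$. Once this single-site classical statement is established, the extensions to multi-site noise and to commuting-Pauli Hamiltonians should follow by blocking microscopic degrees of freedom into larger effective spins, as the authors indicate.
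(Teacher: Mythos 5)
Your proposal follows essentially the same route as the paper: reduce to noise on $B$ only via data processing, recognize $P(\cdot\,|\,b')$ as a Gibbs measure with $|\log\epsilon|$-strength pinning fields on $B$, and control the $A$--$C$ coupling by a polymer/cluster expansion of $\log Z(b')$ under a Koteck\'y--Preiss criterion, with only crossing clusters of weight $\ge d_{AC}$ contributing. Your defect expansion (indexing by which Kraus term fired) versus the paper's expansion (indexing by which spins in $B$ deviate from the pinned reference configuration) are interchangeable bookkeeping for the same polymer model, and you correctly identify the mixture-of-products obstruction and its resolution at the level of the log-partition function. The one step where you diverge, and where your stated tool would be lossy, is the final conversion: a Fannes- or reverse-Pinsker-type upgrade from $\Vert \tilde P(ac|b') - \tilde P(a|b')\tilde P(c|b')\Vert_1$ to a bound on $I(A:C)$ incurs dimension- or minimal-probability-dependent factors. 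The paper instead extracts from the cluster expansion a \emph{pointwise} bound $|\log P_{AC} - \log(Q_A Q_C)| \le 2|F_{AC}|_\infty$ for an explicit product reference $Q_A Q_C$, whence $I_P(A:C) \le D(P_{AC}\Vert Q_A Q_C) \le 2|F_{AC}|_\infty$ with no dimensional prefactor; your own expansion delivers exactly this multiplicative estimate, so you should use it directly rather than passing through an $L^1$ bound.
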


We will illustrate the main ideas of the proof using the one-dimensional example in Fig.~\ref{fig:pinning}(a). We use $A_1'$, $B_1'$, $B_2'$, and $C_1'$ to denote the spins after applying the local stochastic processes in Fig. \ref{fig:pinning}. The joint distribution of all spins is given by
\begin{equation}
    \begin{split}
        P(A_1' B_1' B_2' C_1' A_1 B_1 B_2 C_1) = \\
        \mathcal{T}_{A_1}(A_1'|A_1) \mathcal{T}_{B_1}(B_1'|B_1) \mathcal{T}_{B_2}(B_2'|B_2) \mathcal{T}_{C_1}(C_1'|C_1) \\
        \times P(A_1 B_1 B_2 C_1)
    \end{split}
\end{equation}
Where $\mathcal{T}_{A_1}(A_1'|A_1)$ is the transition matrix of $\mathcal{T}_{A_1}$ and similarly for the other spins. We visualize this joint distribution in Fig.~\ref{fig:pinning}(b). Physically, we only have access to the noisy state, that is the marginal distribution $P(A_1' B_1' B_2' C_1')$. The joint distribution is only a mathematical tool that we will use to analyze the noisy state.

To bound the Markov length of $P(A'B'C')$, we first post-select the spins in $B'$ after applying the local stochastic processes. This gives us a conditional distribution $P(A'C'|b')$ where $b'$ is the post-selected configuration on $B'$. Then, the CMI of $P(A'B'C')$ can be written as the expected MI of $P(A'C'|b')$ over the post-selection $b'$. This is further upper bounded by the maximum MI of $P(A'C'|b')$ over $b'$.

\begin{proposition}\label{prop:post_selection}
    The CMI of $P(A'B'C')$ is equal to the MI of the conditional distribution $P(A'C'|b')$ averaged over the post-selection $b'$, which is further upper bounded by the maximum MI of $P(A'C'|b')$ over $b'$.
    \begin{align}
        &I_{P(A'B'C')}(A:C|B) \\
        =& \sum_{b'} P(b') I_{P(A'C'|b')}(A:C) \\
         \le& \max_{b'} I_{P(A'C'|b')}(A:C)
    \end{align}
\end{proposition}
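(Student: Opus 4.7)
The statement is essentially a restatement of the classical definition of conditional mutual information followed by the trivial bound "expectation $\leq$ maximum". So the plan is short and structural rather than computational.

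First, I would invoke the identity already recorded in the paper (just after the definition of CMI),
\begin{equation}
    I_{P(XYZ)}(X:Z\mid Y) \;=\; \sum_{y} P(y)\, I_{P(XZ\mid Y=y)}(X:Z),
\end{equation}
and apply it to the noisy joint distribution $P(A'B'C')$ with $(X,Y,Z) = (A',B',C')$. This gives the first equality of the proposition directly, with the sum taken over configurations $b'$ of the post-noise region $B'$ and with weights given by the marginal $P(b')$ of $P(A'B'C')$ (equivalently, the marginal of the enlarged joint distribution $P(A'B'C'ABC)$ introduced in Fig.~\ref{fig:pinning}(b)).

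Second, since $\{P(b')\}_{b'}$ is a probability distribution, the weighted average is upper bounded by the maximum of the summands,
\begin{equation}
    \sum_{b'} P(b')\, I_{P(A'C'\mid b')}(A:C) \;\leq\; \max_{b'} I_{P(A'C'\mid b')}(A:C),
\end{equation}
which yields the second inequality. Nothing needs to be said about the noise model or the post-noise reduction to $P(A'B'C')$ at this step; the inequality is purely a property of a convex combination.

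There is no hard part: the only content is the classical identity, which is justified by the chain rule for Shannon entropy and already stated in Sec.~\ref{markov_def}. I would only note, for clarity of use in subsequent sections, that this reduces the task of bounding the Markov length of $P(A'B'C')$ to bounding the mutual information $I(A:C)$ in the pinned distribution $P(A'C'\mid b')$ uniformly in the pinning pattern $b'$. That reduction is what lets the paper replace a conditional-correlation problem by an unconditional one in a modified (field-augmented) Gibbs state, as previewed at the end of Sec.~\ref{main_results}.
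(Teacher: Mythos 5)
Your proposal is correct and matches the paper's (implicit) justification exactly: the first equality is the classical identity $I_{P(ABC)}(A:C|B)=\sum_b P(b)\,I_{P(AC|B=b)}(A:C)$ already recorded in Sec.~\ref{markov_def}, and the second step is just the bound of a convex combination by its maximum. The paper offers no separate proof of this proposition, so nothing further is needed.
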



Next, because of the data processing inequality, the MI of $P(A'C'|b')$ is further upper bounded by the MI of $P(AC|b')$.

\begin{proposition}\label{prop:classical_data_processing}
    The MI of $P(A'C'|b')$ is upper bounded by the MI of $P(AC|b')$.
    \begin{equation}
        I_{P(A'C'|b)}(A:C) \le I_{P(AC|b')}(A:C)
    \end{equation}
\end{proposition}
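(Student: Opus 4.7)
The plan is to exploit the product structure of the noise $\mathcal{T} = \bigotimes_i \mathcal{T}_i$ from Eq.~\eqref{eq:independent_stochastic_process}. The first step is to write the joint distribution on all primed and unprimed variables conditioned on observing $B' = b'$. Because the stochastic processes act independently on each site, and in particular $\mathcal{T}_A$ and $\mathcal{T}_C$ do not depend on the $B$ variables at all,
\[
P(A, C, A', C' \mid b') \;=\; P(A, C \mid b') \; \mathcal{T}_A(A' \mid A) \; \mathcal{T}_C(C' \mid C).
\]
This follows by starting from the full joint $P(ABC\,A'B'C') = P(ABC)\prod_{i}\mathcal{T}_i(\cdot\mid\cdot)$, conditioning on $b'$ (which only reweights the $B$ marginal by the factor $\mathcal{T}_B(b'\mid B)$), and marginalizing over $B$.

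The factorization shows that conditioned on $b'$ the four variables form a Markov chain $A' - A - C - C'$: $A'$ is independent of $(C,C')$ given $A$, and symmetrically $C'$ is independent of $(A,A')$ given $C$. Applying the classical data processing inequality once at each end of the chain then yields
\[
I(A' : C' \mid b') \;\le\; I(A : C' \mid b') \;\le\; I(A : C \mid b'),
\]
which is the desired bound.

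There is no real obstacle here; the argument is entirely structural, and the only point worth flagging is that the factorization hinges on the single-site assumption. Because $\mathcal{T}_A$ and $\mathcal{T}_C$ involve only the pre-noise spins in $A$ and $C$ respectively, conditioning on $b'$ introduces no coupling between them. For the multi-site finite-depth noise treated in Sec.~\ref{finite_depth}, an analogous Markov structure across the tripartition will only emerge after blocking sites so that the effective noise again factorizes into pieces supported entirely within $A$, $B$, or $C$; that blocking step is where the real work of the later extension lies, whereas the present proposition is essentially just DPI applied along a chain that falls out of the product noise.
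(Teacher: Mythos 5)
Your argument is correct and matches the paper's (implicit) reasoning: the paper simply asserts the bound "because of the data processing inequality" without writing out a proof, and your factorization $P(A,C,A',C'\mid b') = P(A,C\mid b')\,\mathcal{T}_A(A'\mid A)\,\mathcal{T}_C(C'\mid C)$ is exactly the structural fact that justifies applying DPI at each end of the chain $A' - A - C - C'$. Your closing remark about the multi-site case is also consistent with how Sec.~\ref{finite_depth} actually proceeds (blocking spacetime spins so the noise again factorizes across the tripartition).
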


Now the problem is reduced to understanding the conditional distribution $P(AC|b')$. We will consider the bigger distribution $P(ACB|b')$, which can be understood as a ``backward inference'' problem. We are given the noisy configuration $b'$ on $B'$ and we want to infer the clean distribution of $A$, $B$, and $C$. We visualize this distribution in Fig.~\ref{fig:pinning}(c).

Our most important observation is that $P(ACB|b')$ is also a Gibbs distribution with the same Hamiltonian $H$ but with additional pinning fields on $B$.

\begin{lemma}[Pinning Lemma]
    Consider a classical Gibbs distribution $P(\mathbf{x})$ with Hamiltonian $H(\mathbf{x})$ defined in Eq.~(\ref{eq:hamiltonian}). After applying a product of local stochastic processes $\mathcal{T} = \bigotimes_{i=1}^n \mathcal{T}_i$, the conditional distribution $P(\mathbf{x}|b')$ after post-selecting $b'$ on the noisy spins $\mathbf{x}'$ is also a Gibbs distribution with the same Hamiltonian $H(\mathbf{x})$ but with additional pinning fields on the spins in $B$.
    \begin{equation}
        P(\mathbf{x}|b') \propto e^{-\beta H(\mathbf{x}) - \sum_{i \in B} p_i(x_i)}
    \end{equation}
    Where $p_i(x_i) = -\log \mathcal{T}_i(b_i'|x_i)$ is the pinning field on the $i$-th spin in $B$.
\end{lemma}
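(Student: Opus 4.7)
The plan is to prove this lemma by direct Bayesian inversion: the post-selected conditional distribution is proportional to the product of the forward-channel likelihood and the Gibbs prior, and because the channel factorizes across sites, the log-likelihood contributes only single-site terms that can be absorbed into the energy as pinning fields on the post-selected sites.

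First I would write down the joint distribution of the clean configuration $\mathbf{x}$ and the noisy configuration $\mathbf{x}'$:
\begin{equation}
P(\mathbf{x}',\mathbf{x}) = \left(\prod_{i=1}^n \mathcal{T}_i(x_i'|x_i)\right) P(\mathbf{x}).
\end{equation}
Since each $\mathcal{T}_i$ is a stochastic matrix, $\sum_{x_i'} \mathcal{T}_i(x_i'|x_i)=1$, so marginalizing the noisy spins outside $B$ simply removes their factors, giving
\begin{equation}
P(\mathbf{x}'_B,\mathbf{x}) = \left(\prod_{i\in B} \mathcal{T}_i(x_i'|x_i)\right) P(\mathbf{x}).
\end{equation}
Second, I would divide by the marginal $P(\mathbf{x}'_B=b')=\sum_{\mathbf{x}}\prod_{i\in B}\mathcal{T}_i(b_i'|x_i) P(\mathbf{x})$, which depends on $b'$ but not on $\mathbf{x}$ and therefore plays the role of a normalization constant $Z(b')$:
\begin{equation}
P(\mathbf{x}|b') = \frac{1}{Z(b')}\left(\prod_{i\in B} \mathcal{T}_i(b_i'|x_i)\right) e^{-\beta H(\mathbf{x})}.
\end{equation}
Taking $-\log$ of both sides yields
\begin{equation}
-\log P(\mathbf{x}|b') = \beta H(\mathbf{x}) + \sum_{i\in B}\bigl[-\log \mathcal{T}_i(b_i'|x_i)\bigr] + \log Z(b'),
\end{equation}
which is precisely a Gibbs distribution with the original commuting Hamiltonian augmented by site-local potentials $p_i(x_i) = -\log\mathcal{T}_i(b_i'|x_i)$ on each $i\in B$, as claimed.

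I do not anticipate a significant technical obstacle, since the argument is essentially the observation that Bayesian updating of a Gibbs prior by site-factorized likelihoods preserves the Gibbs form and contributes independent single-site energy terms. The only subtlety is that $\mathcal{T}_i(b_i'|x_i)$ can vanish for some $x_i$, in which case the pinning field is formally $+\infty$ on those configurations; this is harmless, as it merely enforces that those $x_i$ carry zero conditional probability. More importantly, under the weak-perturbation form $\mathcal{T}_i=(1-\epsilon_i)\mathcal{I}+\epsilon_i\mathcal{N}_i$, one has $\mathcal{T}_i(b_i'|b_i')\ge 1-\epsilon_i$ while $\mathcal{T}_i(b_i'|x_i)\lesssim \epsilon_i$ for $x_i\ne b_i'$, so $p_i(x_i)$ is of order $\epsilon_i$ when $x_i=b_i'$ and of order $|\log\epsilon_i|$ otherwise. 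This is exactly the ``soft pinning'' structure needed downstream: for small $\epsilon$ the spins in $B$ are strongly but imperfectly pinned to $b'$, which we will use in conjunction with high-field/cluster-expansion arguments to bound the MI of $P(AC|b')$ and thereby control the Markov length of the noisy state.
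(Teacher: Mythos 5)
Your proposal is correct and follows essentially the same route as the paper: a direct application of Bayes' theorem, writing the posterior as the site-factorized likelihood times the Gibbs prior and identifying $-\log\mathcal{T}_i(b_i'|x_i)$ as the pinning field. Your explicit marginalization of the noisy spins outside $B$ using stochasticity of the $\mathcal{T}_i$ is a small point of added care that the paper leaves implicit, but it does not change the argument.
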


\begin{proof}
\label{lem:pinning}
    The proof follows from a direct application of Bayes' theorem. First, we realize that the ``forward'' conditional distribution $P(b'|x)$ is given by the transition matrices of the local stochastic processes.
    \begin{equation}
        P(b'|x) = \prod_{i=1}^n \mathcal{T}_i(b_i'|x_i)
    \end{equation}
    Then, by Bayes' theorem, we have
    \begin{align}
        P(x|b') \propto P(b'|x) P(x) \\
        \propto \prod_{i=1}^n \mathcal{T}_i(b_i'|x_i) e^{-\beta H(x)} \\
        \propto e^{-\beta H(x) - \sum_{i \in B} p_i(x_i)}
    \end{align}
    Where in the second line we plugged in the expression of $P(b'|x)$ and $P(x)$, and in the last line we defined the pinning fields $p_i(x_i) = \log \mathcal{T}_i(b_i'|x_i)$.
\end{proof}

The above lemma is visualized in Fig.~\ref{fig:pinning}(c-e). Fig. \ref{fig:pinning}(c) shows the joint distribution $P(A_1 B_1 B_2 C_1 | b_1' b_2')$ after post-selecting $b_1' b_2'$ on $B'$. Fig.~\ref{fig:pinning}(e) identifies the post-selected transition matrices as pinning fields on $B$. Finally, Fig.~\ref{fig:pinning}(d) shows that the conditional distribution $P(A_1 B_1 B_2 C_1 | b_1' b_2')$ is a Gibbs distribution with the same Hamiltonian but with additional pinning fields on $B$.

At this point, showing a finite Markov length reduces to showing that the MI of $P(AC|b')$ decays exponentially with $d_{AC}$. Because of the pinning Lemma, the problem reduces to showing the decay of MI in a Gibbs distribution with local pinning fields. In the noiseless limit, $b'$ and $b$ are identical, so the pinning fields become infinitely strong and pin all spins in $B$ to $b=b'$. As long as the noise is weak, the pinning fields are still strong and suppress long-range fluctuations. Therefore, we expect the correlations to decay exponentially with $d_{AC}$ as long as the noise is below an $O(1)$ threshold. The relation between the noise strength and the pinning field strength is formalized below.

\begin{proposition}\label{prop:pinning_field_strength_single_site}
    Consider a local stochastic process $\mathcal{T}_i$ acting on the $i$-th spin defined in Eq.~(\ref{eq:perturbed_channel}). The pinning field $p_i(x_i) = \log \mathcal{T}_i(b_i'|x_i)$ induced by post-selecting $b_i'$ on the noisy spin $i'$ satisfies
    \begin{equation}
        p_i(x_i) \begin{cases}
            \le -\log(1 - \epsilon_i), x_i = b_i' \\
            \ge -\log(\epsilon_i), x_i \ne b_i' 
        \end{cases}
    \end{equation}
    In particular, $p_i(x_i)$ has a energy gap of at least $\log((1-\epsilon_i)/\epsilon_i)$ between the favored configuration $x_i = b_i'$ and other configurations $x_i \ne b_i'$.
\end{proposition}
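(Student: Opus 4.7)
The plan is to prove this by direct computation from the explicit form of the noise given in Eq.~\eqref{eq:perturbed_channel}, using only that $\mathcal{N}_i$ is a stochastic matrix so its entries lie in $[0,1]$. First I would write out the transition element: since $\mathcal{T}_i = (1-\epsilon_i)\mathcal{I} + \epsilon_i \mathcal{N}_i$ and the identity channel acts as a Kronecker delta in the computational basis, we have
\[
\mathcal{T}_i(b_i' \mid x_i) = (1-\epsilon_i)\,\delta_{b_i',\,x_i} + \epsilon_i\,\mathcal{N}_i(b_i' \mid x_i).
\]
This identifies $p_i(x_i) = -\log \mathcal{T}_i(b_i' \mid x_i)$ as a local on-site Hamiltonian term depending on the post-selected value $b_i'$, consistent with the Pinning Lemma (I take the sign convention that makes the Boltzmann weight $e^{-p_i}$ favor $x_i = b_i'$).

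Next I would split into the two cases. When $x_i = b_i'$ the delta contributes in full and the remaining term $\epsilon_i \mathcal{N}_i(b_i' \mid b_i')$ is nonnegative, so $\mathcal{T}_i(b_i' \mid x_i) \geq 1-\epsilon_i$; applying $-\log$ flips the inequality and yields $p_i(x_i) \leq -\log(1-\epsilon_i)$. When $x_i \neq b_i'$ the delta vanishes, leaving $\mathcal{T}_i(b_i' \mid x_i) = \epsilon_i \mathcal{N}_i(b_i' \mid x_i) \leq \epsilon_i$ since $\mathcal{N}_i(b_i' \mid x_i) \leq 1$; applying $-\log$ then gives $p_i(x_i) \geq -\log \epsilon_i$. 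Subtracting the two bounds, the minimum gap between the favored configuration $x_i = b_i'$ and any unfavored one is at least $\log((1-\epsilon_i)/\epsilon_i)$, which diverges logarithmically as $\epsilon_i \to 0$, recovering the advertised pinning strength.

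Honestly there is no real obstacle here; the statement is a one-line bound on a near-identity stochastic matrix, and the main bookkeeping step is only to keep the sign convention of $p_i$ consistent. The conceptual work was done in the Pinning Lemma, which reformulated the backward-inference problem as a Gibbs distribution with extra on-site fields. This proposition simply converts the small-$\epsilon_i$ closeness of the channel to the identity into the large $\log(1/\epsilon_i)$ size of the induced energy gap in the pinning field. That strong-pinning property is the input that will later be needed to argue that the conditioned distribution $P(\,\cdot\, \mid b')$ sits deep inside a trivial paramagnetic-like phase on region $B$, from which the desired exponential decay of $I_{P(AC \mid b')}(A:C)$ can be extracted via a standard high-field cluster expansion.
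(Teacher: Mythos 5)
Your proof is correct and is exactly the argument the paper has in mind, which it dispatches in one line as following ``trivially from Eq.~\eqref{eq:perturbed_channel} and the fact that transition matrix elements are bounded by one.'' You also rightly fix the sign convention to $p_i(x_i) = -\log \mathcal{T}_i(b_i'\mid x_i)$ (the proposition's displayed definition drops the minus sign, but the stated bounds and the Pinning Lemma make clear that the negative logarithm is intended).
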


The above proposition follows trivially from Eq. \eqref{eq:perturbed_channel} and the fact that transition matrix elements are bounded by one. The pinning Lemma also has the following statistical interpretation. In the orginal Markov chain $A-B-C$, once $B$ is given, $A$ and $C$ are independent. We do not have direct access to $B$, instead we are given a noisy version $B'$. Because of the noise, $A$ and $C$ are no longer independent even after conditioning on $B'$. However, if the noise is weak, then $B'$ still contains a lot of information about $B$. Therefore, conditioning on $B'$, we expect $A$ and $C$ to be approximately independent on a large length scale. In fact, the toy model we consideed here is called the \emph{hidden Markov model} in machine learning.

\begin{lemma}\label{lem:decay_of_correlation_informal}
    (informal) Consider the pinned Gibbs distribution given in Proposition \ref{prop:pinning_field_strength_single_site}. If the noise strength $\epsilon$ of the local stochastic processes is below a constant threshold $\epsilon_c$, then the mutual information between $A$ and $C$ is bounded by
    \begin{equation}
        I(A:C) = O \left( \min(|\partial A|,|\partial C|) e^{-d_{AC}/\xi} \right)
    \end{equation}
    Where $|\partial A|$ and $|\partial C|$ are the boundary sizes of $A$ and $C$, and $\xi = O(1/(\log(\epsilon_c) - \log(\epsilon)))$ is a correlation length that upper bounds the Markov length.
\end{lemma}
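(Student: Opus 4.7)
The plan is to recognize the pinned Gibbs distribution as a dilute ``defect gas'' on region $B$ and apply a convergent polymer cluster expansion to control correlations and the mutual information $I(A{:}C)$ directly. First, I would use the separating property of $B$ to split the Hamiltonian as $H = H_{AB}(x_A,x_B) + H_{BC}(x_B,x_C)$ (no hyperedge touches both $A$ and $C$). Next, introduce defect indicators $\sigma_i = \mathbf{1}[x_i \neq b'_i]$ for $i \in B$ and let $D := \{i \in B : \sigma_i = 1\}$. Factoring out the fully-pinned reference $x_B = b'_B$, the joint distribution rewrites as
\begin{equation*}
P(x_A, x_B, x_C \mid b') \;\propto\; f_A(x_A)\, f_C(x_C) \prod_{i \in D} w_i(x_i)\, \exp\!\bigl(-\beta\, \Delta H(x_A, D, x_D, x_C)\bigr),
\end{equation*}
where $f_A, f_C$ are the already-separable weights coming from $H(x_A, b'_B, x_C)$, each activity $w_i(x_i) \leq \epsilon/(1-\epsilon)$ by Proposition~\ref{prop:pinning_field_strength_single_site}, and $\Delta H$ depends only on hyperedges touching $D$.

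This is a classical polymer model whose polymers are connected defect clusters, each carrying activity bounded by $\lambda := [\epsilon/(1-\epsilon)]\, e^{O(\beta K \|h\|_{\max})}$, with $K$ the maximum hyperedge degree in $\mathcal{G}$. Choosing $\epsilon_c$ so that $\lambda$ lies below the Kotecky-Preiss threshold for the polymer graph, the cluster expansion converges absolutely: $\log Z = \sum_C \Phi(C)$ over abstract connected clusters $C$, with $|\Phi(C)|$ bounded geometrically in $|C|$ and decaying as $e^{-d(C)/\xi}$ in the diameter $d(C)$, at rate $\xi^{-1} = \log(\epsilon_c/\epsilon) + O(1)$. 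This already matches the claimed scaling of $\xi$.

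The critical step is converting the cluster expansion into the MI bound. The expansion of $\log P(x_A, x_C \mid b')$ decomposes as $\phi_A(x_A) + \phi_C(x_C) + \Psi(x_A, x_C)$ plus a normalization constant, where $\Psi$ collects precisely those clusters that reach both the $A$-neighborhood and the $C$-neighborhood. Any such ``bridging'' cluster has diameter at least $d_{AC}$, so contributes $O(e^{-d_{AC}/\xi})$; the number of bridging clusters is controlled by rooting each at a site of whichever boundary is smaller, giving the $\min(|\partial A|,|\partial C|)$ prefactor. Substituting into $I(A{:}C) = \mathbb{E}_P[\log P(x_A,x_C\mid b') - \log P(x_A\mid b') - \log P(x_C\mid b')]$, the $\phi_A, \phi_C$ pieces cancel against the marginals and only $\mathbb{E}|\Psi|$ survives, yielding the stated bound.

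The hardest part will be verifying the Kotecky-Preiss condition for the defect polymer model on a general hypergraph with an arbitrary commuting local Hamiltonian, and choosing the right notion of polymer so that $\Delta H$ factorizes cleanly over connected components of $D$. A secondary subtlety is extracting the sharp $\min(|\partial A|,|\partial C|)$ prefactor from the expansion rather than from a total-variation bound plus Pinsker (which would lose this scaling); the proposed route sidesteps that by working directly at the level of $\log P$, where bridging clusters are counted by their rooting site on the smaller boundary.
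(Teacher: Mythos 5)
Your proposal is correct and follows essentially the same route as the paper's Appendix~\ref{decay_proof}: treating unpinned sites in $B$ as a dilute defect gas, running a Kotecky--Preiss--controlled polymer/cluster expansion of the log-distribution, isolating the bridging clusters (which necessarily have weight at least $d_{AC}$ and are rooted on the smaller boundary to get the $\min(|\partial A|,|\partial C|)$ prefactor), and converting the resulting additive $O(e^{-d_{AC}/\xi})$ error in $\log P$ into a mutual-information bound. The only cosmetic difference is the final step, where the paper bounds $I(A{:}C)=S(P_{AC}\Vert P_AP_C)\le S(P_{AC}\Vert Q_AQ_C)$ against the explicit product reference rather than arguing that the single-region pieces ``cancel against the marginals,'' but both versions of that bookkeeping go through.
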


The proof of the above lemma is the key technical part of this paper. We give an overview here and leave the details to Appendix~\ref{decay_proof}. The proof is based on the cluster expansion and the abstract polymer 
model~\cite{kotecky1986cluster,dobrushin1996estimates,friedli2017statistical}
which expresses the free energy as a converged sum over local objects called polymers. Here, we define polymers as a connected subset of spins in $B$ (with connectivity defined by $\mathcal{G}$). Because of the pinning fields, spins in $B$ are energetically more stable when they align with $b'$. Therefore, we treat the spin configurations that do not align with $b'$ as excitations and perform a series expansion. We show that (1) the series converges exponentially fast when $\epsilon < \epsilon_c$, and (2) only terms on the order at least $d_{AC}$ can contribute to the MI. This gives us the desired exponential decay of MI.

\begin{proof}
    (proof of Theorem \ref{thm:classical_stability_single_site_informal}) We first use Proposition \ref{prop:post_selection} and Proposition \ref{prop:classical_data_processing} to reduce the problem to showing the decay of MI in $P(AC|b')$. Then, using the pinning Proposition \ref{prop:pinning_field_strength_single_site}, we identify $P(AC|b')$ as a pinned Gibbs distribution. Finally, we use Lemma \ref{lem:decay_of_correlation_informal} to show the decay of MI in the pinned Gibbs distribution. We give the formal proof with coefficients worked out in Appendix~\ref{constants}.
\end{proof}

\section{Classical Gibbs States Under Weak Finite-Depth Local Stochastic Processes}\label{finite_depth}

In this section, we extend our previous result to classical Gibbs states under finite-depth local stochastic processes, defined below:
\begin{definition}\label{def:finite_depth_stochastic_process}
   We define a finite-depth quantum channel $\mathcal{T}$ is a finite-depth local quantum channel (Definition \ref{def:finite_depth_channel}) where each local channel is a stochastic process. Explicitly, we have
\begin{align}\label{eq:finite_depth_stochastic_process}
    \mathcal{T} &= \mathcal{T}_T \circ \mathcal{T}_{T-1} \circ \cdots \circ \mathcal{T}_1 \\
    \mathcal{T}_{t} &= \prod_{a \in \mathcal{G}_t} \mathcal{T}_{a,t}
\end{align}
Where we use $\mathcal{T}$ instead of $\mathcal{E}$ to denote stochastic processes. All notations are the same as in Definition \ref{def:finite_depth_channel}.
\end{definition}

This generalizes our previous result to the case where local stochastic processes can have overlapping support. We first state our main result.

\begin{theorem}\label{thm:classical_stability_finite_depth_informal}
    (Informal) Consider an interaction graph $\mathcal{G}$ supporting a Gibbs distribution $P(\mathbf{x}) \propto e^{-\beta H(\mathbf{x})}$ where $H(\mathbf{x})$ is defined in Eq.~(\ref{eq:hamiltonian}) and each $h_a$ is diagonal in the computational basis. For all finite-depth local stochastic processes $\mathcal{T}$ with $d$ layers of gates defined in Definition \ref{def:finite_depth_stochastic_process} subject to Eq.~(\ref{eq:perturbed_channel}), there exists a constant $\epsilon_c$ depending on $\mathcal{G}$ and $d$ such that if $\epsilon_{a,t} < \epsilon_c$, then the Markov length of $\mathcal{T}(P)$ is finite.
\end{theorem}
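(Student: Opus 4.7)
The plan is to reduce Theorem~\ref{thm:classical_stability_finite_depth_informal} to the single-site result of Section~\ref{one_site} by unrolling the channel in time and blocking. First, I would introduce the joint distribution on the space-time lattice $V\times\{0,1,\ldots,d\}$,
\begin{equation}
Q(\mathbf{x}_0,\mathbf{x}_1,\ldots,\mathbf{x}_d) \;=\; P(\mathbf{x}_0)\prod_{t=1}^{d}\prod_{a\in\mathcal{G}_t}\mathcal{T}_{a,t}(\mathbf{x}_t^a\mid \mathbf{x}_{t-1}^a),
\end{equation}
whose $\mathbf{x}_d$-marginal is exactly $\mathcal{T}(P)$. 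Because $d$ is a fixed constant, blocking the $d+1$ time slices at each site into a super-spin of dimension $q^{d+1}$ yields a classical Gibbs distribution on the original interaction graph $\mathcal{G}$ (with interactions of bounded range determined by $\mathcal{G}$ and $d$), whose effective local Hamiltonian is
\begin{equation}
H_{\mathrm{eff}} \;=\; \beta H(\mathbf{x}_0) \;+\; \sum_{t,a} V_{a,t}(\mathbf{x}_t^a,\mathbf{x}_{t-1}^a), \qquad V_{a,t}:=-\log\mathcal{T}_{a,t}.
\end{equation}

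Next, I would mirror the single-site argument: Propositions~\ref{prop:post_selection}--\ref{prop:classical_data_processing} bound the CMI of $\mathcal{T}(P)$ on a tripartition $ABC$ by the maximum, over configurations $b'$ of the final time slice in $B$, of the mutual information between the $A$- and $C$- super-spins in the conditional distribution $Q(\cdot\mid \mathbf{x}_d^B = b')$. A direct Bayes-rule computation identical to the pinning lemma identifies $Q(\cdot\mid \mathbf{x}_d^B = b')$ as a classical Gibbs distribution on the blocked super-spin system, with additional pinning fields $-\log\mathcal{T}_{a,d}(b_i'\mid\cdot)$ inserted at the final time slice of each super-spin $i\in B$. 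The hypothesis $\mathcal{T}_{a,t}=(1-\epsilon_{a,t})\mathcal{I}+\epsilon_{a,t}\mathcal{N}_{a,t}$ forces every gate factor $e^{-V_{a,t}}$ to be $\ge 1-\epsilon_{a,t}$ on diagonal configurations ($\mathbf{x}_t^a=\mathbf{x}_{t-1}^a$) and $\le\epsilon_{a,t}$ otherwise; combined with the final-slice pinning, the ``vacuum'' configuration $\mathbf{x}_0^i=\cdots=\mathbf{x}_d^i=b_i'$ for all $i\in B$ is energetically selected with gap $\Omega(\log(1/\epsilon))$ per deviating time slice.

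The proof would then conclude by running the cluster expansion of Appendix~\ref{decay_proof} on this blocked, partially pinned Gibbs distribution. Polymers are connected subsets of $B$ (in $\mathcal{G}$) in which at least one of the $d+1$ time-slice components of the super-spin departs from $b'$; a polymer of size $k$ has activity $O(\epsilon^k)$; the number of configurations per super-site is $q^{d+1}$; and the super-spin interaction graph still has $O(1)$ degree. Koteck\'{y}--Preiss convergence therefore holds uniformly in $n$ for $\epsilon$ below an explicit threshold $\epsilon_c=\epsilon_c(\mathcal{G},d)$. The standard observation that only polymers crossing from $\partial A$ to $\partial C$ inside $B$ can contribute to $I(A:C)$ then gives $\max_{b'} I_{Q(\cdot\mid \mathbf{x}_d^B=b')}(A:C)\le c\min(|\partial A|,|\partial C|)\,e^{-d_{AC}/\xi}$ with $\xi=O(1/\log(\epsilon_c/\epsilon))$, and hence a finite Markov length for $\mathcal{T}(P)$.

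The main obstacle compared with the single-site setting is that the post-selection touches only \emph{one} of the $d+1$ time-slice components of each super-spin in $B$, so the pinning is partial rather than full. The key content of the added work is to verify that this partial pinning still suffices: any configuration in which some internal time slice $t<d$ in $B$ departs from $b'$ must contain at least one gate on which adjacent time slices disagree, and therefore pays a factor $\log(1/\epsilon)$ in $V_{a,t}$. Once this bookkeeping is made precise so that the vacuum subtraction in the polymer model is well defined, the remaining estimates become a routine re-running of the single-site cluster expansion, at the cost of a threshold $\epsilon_c$ that degrades polynomially in $q^{d+1}$ and in the degree of $\mathcal{G}$.
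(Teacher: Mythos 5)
Your proposal is correct and follows essentially the same route as the paper: unroll the channel into a space-time Gibbs distribution, condition on the final time slice of $B$, block the time slices into super-spins on the original interaction graph, observe that the combination of final-slice pinning and near-diagonal gate factors energetically selects the all-aligned configuration with gap $\Omega(\log(1/\epsilon))$, and run the Koteck\'y--Preiss cluster expansion of Appendix~\ref{decay_proof}. Your explicit treatment of the ``partial pinning'' issue---that a deviation at any internal time slice must trigger at least one off-diagonal gate penalty---is exactly the content of the paper's Lemma~\ref{lem:pinning_finite_depth}.
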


The proof of the above theorem follows from a reduction to the single-site case. We visualize a four-site Markov chain $A-B-C$ subject to a two-layer local stochastic process in Fig. \ref{fig:pinning}(f). We will use $\mathbf{x}_0$ to denote the original spins, $\mathbf{x}_i$ to denote the spins after applying the $i$ layer of local stochastic processes. Consider the joint distribution $P(\mathbf{x}_d,\ldots,\mathbf{x}_0)$ of all spins at any spacetime location, which is given by
\begin{equation}\label{eq:joint_distribution_finite_depth}
    P(\mathbf{x}_d,\ldots,\mathbf{x}_0) = \prod_{t=1}^d P(\mathbf{x}_t|\mathbf{x}_{t-1}) P(\mathbf{x}_0)
\end{equation}
Where $P(\mathbf{x}_t|\mathbf{x}_{t-1})$ is the transition matrix of the $t$-th layer of local stochastic processes, and $P(\mathbf{x}_0)$ is the original Gibbs distribution. Given the tripartition $ABC$ of the original spins $\mathbf{x}_0$, we can naturally extend this tripartition to all spins at any time step $t$ by defining $\mathbf{x}_{A,t}$, $\mathbf{x}_{B,t}$, and $\mathbf{x}_{C,t}$ to be the spins in $A$, $B$, and $C$ at time step $t$. We also define $\mathbf{x}_A = (\mathbf{x}_{A,0}, \mathbf{x}_{A,1}, \cdots, \mathbf{x}_{A,d})$ to be the collection of all spins in $A$ at all time steps. Similarly, we define $\mathbf{x}_B$ and $\mathbf{x}_C$.

We are interested in the Markov length of the marginal distribution $P(\mathbf{x}_d)$. Therefore, we post-select the spins in $\mathbf{x}_{B,d}$ and consider the conditional distribution on the rest of the spins. Let $\tilde{x}_{B} = (\tilde{x}_{B,0}, \tilde{x}_{B,1}, \cdots, \tilde{x}_{B,d-1})$ be the spins in $B$ at all time steps except the last one (since they are post-selected). We also define $\tilde{x} = (\mathbf{x}_A, \tilde{x}_B, \mathbf{x}_C)$ to be the collection of all spins except the post-selected spins in $\mathbf{x}_{B,d}$. Similar to Lemma \ref{lem:pinning}, we can show that the conditional distribution $P(\tilde{x}|\mathbf{x}_{B,d})$ is also a Gibbs distribution, but with an additional ``time-like'' dimension.

\begin{lemma}\label{lem:spacetime_gibbs_finite_depth}
    Consider a classical Gibbs distribution $P(\mathbf{x})$ with Hamiltonian $H(\mathbf{x})$ defined in Eq.~(\ref{eq:hamiltonian}). After applying a finite-depth local stochastic process $\mathcal{T}$ defined in Definition \ref{def:finite_depth_stochastic_process}, the conditional distribution $P(\tilde{x}|\mathbf{x}_{B,d})$ after post-selecting $\mathbf{x}_{B,d}$ on the noisy spins $\mathbf{x}_d$ is also a Gibbs distribution with the following Hamiltonian
    \begin{equation}
        \beta \tilde{H}(\tilde{x}) = \beta H(\mathbf{x}_0) + \sum_{a,t} p_{a,t} (\mathbf{x}_{a,t-1},\mathbf{x}_{a,t})
    \end{equation}
    Where $\mathbf{x}_{a,t}$ denotes the subset of random variables supported on $a$ at time $t$ of the channel $\mathcal{T}_{a,t}$. $p_{a,t}(\mathbf{x}_{a,t-1},\mathbf{x}_{a,t})$ is given by the element-wise logarithm of the transition matrices $\log \mathcal{T}_{a,t}(\mathbf{x}_{a,t-1},\mathbf{x}_{a,t})$ defined in Definition \ref{def:finite_depth_stochastic_process}.
\end{lemma}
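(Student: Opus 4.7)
The strategy is a direct generalization of Lemma~\ref{lem:pinning} to the full spacetime distribution on $(\mathbf{x}_d,\ldots,\mathbf{x}_0)$, followed by Bayes' theorem applied to the post-selection on $\mathbf{x}_{B,d}$.

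First, I would write out the spacetime joint distribution from Eq.~\eqref{eq:joint_distribution_finite_depth} in explicitly factorized form. By Definition~\ref{def:finite_depth_stochastic_process}, each layer $\mathcal{T}_t$ is a product of channels acting on mutually disjoint hyperedges $a\in\mathcal{G}_t$, so the one-step transition kernel factorizes as $P(\mathbf{x}_t|\mathbf{x}_{t-1})=\prod_{a\in\mathcal{G}_t}\mathcal{T}_{a,t}(\mathbf{x}_{a,t}|\mathbf{x}_{a,t-1})$. Substituting this, together with $P(\mathbf{x}_0)\propto e^{-\beta H(\mathbf{x}_0)}$, into Eq.~\eqref{eq:joint_distribution_finite_depth} yields
\begin{equation*}
P(\mathbf{x}_d,\ldots,\mathbf{x}_0)\;\propto\;e^{-\beta H(\mathbf{x}_0)}\prod_{t=1}^{d}\prod_{a\in\mathcal{G}_t}\mathcal{T}_{a,t}(\mathbf{x}_{a,t}|\mathbf{x}_{a,t-1}).
\end{equation*}

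Second, I would condition on $\mathbf{x}_{B,d}$. Since $(\tilde x,\mathbf{x}_{B,d})$ is merely a relabeling of the full spacetime configuration, Bayes' theorem gives $P(\tilde x|\mathbf{x}_{B,d})\propto P(\tilde x,\mathbf{x}_{B,d})$, with the proportionality constant depending only on the fixed data $\mathbf{x}_{B,d}$. Taking minus the logarithm and defining $p_{a,t}(\mathbf{x}_{a,t-1},\mathbf{x}_{a,t}):=-\log\mathcal{T}_{a,t}(\mathbf{x}_{a,t}|\mathbf{x}_{a,t-1})$ reproduces exactly the advertised effective Hamiltonian $\beta\tilde H(\tilde x)=\beta H(\mathbf{x}_0)+\sum_{a,t}p_{a,t}(\mathbf{x}_{a,t-1},\mathbf{x}_{a,t})$. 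The post-selected values of $\mathbf{x}_{B,d}$ enter only through the final-layer terms $p_{a,d}$ for those $a\in\mathcal{G}_d$ whose support intersects $B$, where they play the role of pinning fields acting on the time-$(d-1)$ spins, in direct analogy with the static pinning produced by Lemma~\ref{lem:pinning}.

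I do not expect any genuine technical obstacle in establishing Lemma~\ref{lem:spacetime_gibbs_finite_depth} itself; it is essentially a bookkeeping identity, and the factorization of each layer into disjoint channels combined with Bayes' theorem immediately produces the claimed spacetime Gibbs form. The only care required is to verify that the factors depending solely on $\mathbf{x}_{B,d}$ can be correctly absorbed into the partition function, which follows because each $\mathcal{T}_{a,t}$ is a function only of $(\mathbf{x}_{a,t-1},\mathbf{x}_{a,t})$. The real difficulty, which I anticipate will appear in the \emph{next} step of the argument, is showing that the enlarged spacetime Gibbs distribution has exponentially decaying correlations between $(\mathbf{x}_{A,d})$ and $(\mathbf{x}_{C,d})$. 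This will require re-running the cluster expansion of Appendix~\ref{decay_proof} on the blocked super-spin lattice of Fig.~\ref{fig:pinning}(f), with $d$ temporal copies of each site treated as a single enlarged variable, and checking that the convergence radius of the expansion depends only on $\mathcal{G}$ and $d$ — which accounts for the $d$-dependence of the threshold $\epsilon_c$ in Theorem~\ref{thm:classical_stability_finite_depth_informal}.
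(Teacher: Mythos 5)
Your proposal is correct and follows essentially the same route as the paper: substitute the layer-factorized transition kernels into Eq.~\eqref{eq:joint_distribution_finite_depth}, apply Bayes' theorem to the post-selection on $\mathbf{x}_{B,d}$, and read off $p_{a,t}=-\log\mathcal{T}_{a,t}$ as the spacetime coupling terms. Your version is in fact slightly more careful than the paper's, since you fix the sign of the logarithm and note explicitly that the $\mathbf{x}_{B,d}$-dependent factors are absorbed into the normalization.
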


\begin{proof}
    The proof is again a direct application of Bayes' rule. Starting from Eq.~(\ref{eq:joint_distribution_finite_depth}), we can write the conditional distribution as
\begin{align}
    &P(\tilde{x} | \mathbf{x}_{B,d}) \propto P(\mathbf{x}_{d}, \mathbf{x}_{d-1}, \ldots, \mathbf{x}_{1} | \mathbf{x}_0) P(\mathbf{x}_{0}) \\
    &\propto \left(\prod_{a \in \mathcal{G}} \mathcal{T}_{a,d}(\mathbf{x}_{a,d-1},\mathbf{x}_{a,d}) \right) \\
    &\times \left(\prod_{a \in \mathcal{G}} \mathcal{T}_{a,d-1}(\mathbf{x}_{a,d-2},\mathbf{x}_{a,d-1}) \right) \\
    \ldots &\times \left(\prod_{a \in \mathcal{G}} \mathcal{T}_{a,1}(\mathbf{x}_{a,0},\mathbf{x}_{a,1}) \right) e^{-\tilde{H(\mathbf{x}_0)}}
\end{align}
By Identifying $\mathcal{T}_{a,t}(\mathbf{x}_{a,t-1},\mathbf{x}_{a,t})$ to $e^{-\sum_{a,t} p_{a,t} (\mathbf{x}_{a,t-1},\mathbf{x}_{a,t})}$, we obtain the local Gibbs distribution with the additional ``time-like'' dimension.
\end{proof}

As an example, In Fig. \ref{fig:pinning}(f), the conditional distribution on the rest of the spins after post-selecting $\mathbf{x}_{B,2}$ becomes a two-dimensional Gibbs distribution. The boundary term at $t=0$ is given by the original Hamiltonian $H(\mathbf{x}_0)$, The coupling along the time direction is nearest-neighbor and given by the transition matrices. The transition matrices also induce a coupling along the spatial direction. For example, in Fig. \ref{fig:pinning}(f), the transition matrix $\mathcal{T}_{12}$ couples $A_1^0$, $B_1^0$, $A_1^1$, and $B_1^1$.

Next, we perform a blocking procedure to remove the time-like dimension. After that, we will be able to exploit the fact that local stochastic processes are weak to give a pinning argument. Consider spins at all time steps in $\tilde{x}$. We use $X_i$ to denote the collection of all spins at the $i$-th spatial location at all time steps. For sites in $A$ and $C$, $X_i$ contains $d+1$ spins, while for sites in $B$, $X_i$ contains $d$ spins. As an illustration, we showing the blocking as the dashed boxes in Fig. \ref{fig:pinning}(f).

Now we treat each $X_i$ as a single spin with $q^{d+1}$ or $q^d$ possible values. The conditional distribution $P(\tilde{x}|\mathbf{x}_{B,d})$ becomes a Gibbs distribution of the new spins $\mathbf{X} = (X_1, X_2, \cdots, X_n)$ with a new Hamiltonian $\tilde{H}(\mathbf{X})$. The new Hamiltonian does not contain a time-like dimension anymore. Furthermore, it inherits the same interaction graph $\mathcal{G}$ as the original Hamiltonian $H(\mathbf{x})$, since the local stochastic processes only couple spins that are close in $\mathcal{G}$. Now, we can show that $\tilde{H}(\mathbf{X})$ energetically favors $X_i$ in $B$ to align with $\mathbf{x}_{B,d}$ because of the weak local stochastic processes.

\begin{lemma}\label{lem:pinning_finite_depth}
    The distribution $P(\tilde{X}|\mathbf{x}_{B,d})$ is a Gibbs distribution with Hamiltonian $\tilde{H}(\mathbf{X})$ defined as follows.
    \begin{equation}
        \beta \tilde{H}(\mathbf{X}) = \sum_{a \in \mathcal{G}} \beta h_a(\mathbf{x}_{a,0}) + \sum_{a \in \mathcal{G}} p_{a}(\mathbf{X}_a)
    \end{equation}
    Where $\mathcal{G}_B$ is the set of hyperedges that contain spins in the boundary region $B$, and $p_a(\mathbf{X}_a)$ is defined as $\sum_t \log(\mathcal{T}_{a,t})$. Furthermore, if $a$ is completely c  if $p_a(\mathbf{X}_a)$ energetically favors all temporal spin to align with $\mathbf{x}_{B,d}$ as follows.
    \begin{equation}
        p_a(\mathbf{X}_a) \begin{cases}
            \le -d \log(1 - \epsilon), & \text{if } x_{i,t} = x_{i,d},\, \forall i \in a, t \\
            \ge -\log(\epsilon), & \text{otherwise}
        \end{cases}
    \end{equation}
\end{lemma}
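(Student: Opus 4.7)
The plan is to deduce Lemma~\ref{lem:pinning_finite_depth} from Lemma~\ref{lem:spacetime_gibbs_finite_depth} by a coarse-graining that collapses the temporal direction, followed by elementary entrywise bounds on the transition matrices supplied by~(\ref{eq:perturbed_channel}). First, starting from the spacetime Gibbs form already established in Lemma~\ref{lem:spacetime_gibbs_finite_depth}, I would group all variables sitting at a given spatial site $i$ across time slices into a single super-spin $X_i$; for $i\in B$ the final component $x_{i,d}$ is frozen to the post-selected value, while for $i\in A\cup C$ it remains free. The original energy $\beta H(\mathbf{x}_0)=\sum_{a}\beta h_a(\mathbf{x}_{a,0})$ depends only on the $t=0$ slice, and hence becomes a graph-local function of the super-spins $\mathbf{X}_a$ on hyperedges $a\in\mathcal{G}$.

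Next, I would regroup the sum $\sum_{a,t}p_{a,t}$ from Lemma~\ref{lem:spacetime_gibbs_finite_depth} by hyperedge, defining
\[ p_a(\mathbf{X}_a):=\sum_{t:\,a\in\mathcal{G}_t} p_{a,t}(\mathbf{x}_{a,t-1},\mathbf{x}_{a,t}),\qquad p_{a,t}=-\log\mathcal{T}_{a,t}. \]
Since each summand depends only on the components of $\mathbf{X}_a$ at two consecutive times, $p_a$ is a function of $\mathbf{X}_a$ alone, and the total rewritten Hamiltonian is $\beta\tilde H(\mathbf{X})=\sum_a\beta h_a(\mathbf{x}_{a,0})+\sum_a p_a(\mathbf{X}_a)$, local on the \emph{same} graph $\mathcal{G}$. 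This gives the first claim of the lemma.

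The entrywise bounds then come for free from~(\ref{eq:perturbed_channel}). Writing $\mathcal{T}_{a,t}=(1-\epsilon_{a,t})\mathcal{I}+\epsilon_{a,t}\mathcal{N}_{a,t}$ with $\mathcal{N}_{a,t}$ stochastic and $\epsilon_{a,t}<\epsilon$, one has $\mathcal{T}_{a,t}(\mathbf{x}_{a,t-1},\mathbf{x}_{a,t-1})\ge 1-\epsilon$ on the diagonal and $\mathcal{T}_{a,t}(\mathbf{x}_{a,t-1},\mathbf{x}_{a,t})\le\epsilon$ whenever $\mathbf{x}_{a,t-1}\neq\mathbf{x}_{a,t}$; in particular every $p_{a,t}$ is non-negative. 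When $a\subseteq B$, the configuration with $x_{i,t}=x_{i,d}$ for all $i\in a$ and all $t$ makes every $\mathcal{T}_{a,t}$ transition diagonal, and summing the at most $d$ contributions $-\log(1-\epsilon)$ gives $p_a\le -d\log(1-\epsilon)$. Any other configuration contains at least one off-diagonal transition worth $\ge -\log\epsilon$, while the remaining terms contribute non-negatively, yielding $p_a\ge -\log\epsilon$.

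The main place where one must be a little careful is verifying that the ``favored'' configuration is actually the one aligned with $\mathbf{x}_{B,d}$, rather than some spurious competitor. Because the layered structure $\{\mathcal{G}_t\}$ may leave a site $i\in B$ untouched during some time steps, the identity-transition chain that propagates the post-selected value $x_{i,d}$ backward in time passes only through whichever hyperedges actually act on $i$ at each layer. Provided one adopts the natural convention that a site unaffected at layer $t$ is governed by the trivial transition $x_{i,t}=x_{i,t-1}$ (absorbing such time slices into a single effective variable if desired), the bookkeeping closes and the pinning energy on any $a\subseteq B$ is minimized exactly at the fully-aligned configuration. Once this is checked, Lemma~\ref{lem:pinning_finite_depth} follows by direct substitution into the Gibbs form supplied by Lemma~\ref{lem:spacetime_gibbs_finite_depth}.
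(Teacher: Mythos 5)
Your proposal is correct and follows essentially the same route as the paper: inherit the spacetime Gibbs form from Lemma~\ref{lem:spacetime_gibbs_finite_depth}, block the temporal slices into super-spins on the original graph $\mathcal{G}$, and then read off the pinning gap from the entrywise bounds $\mathcal{T}_{a,t}(\mathbf{x},\mathbf{x})\ge 1-\epsilon$ and $\mathcal{T}_{a,t}(\mathbf{x},\mathbf{y})\le\epsilon$ for $\mathbf{x}\neq\mathbf{y}$ implied by Eq.~(\ref{eq:perturbed_channel}). If anything, your write-up is slightly more careful than the paper's on two points it leaves implicit: the non-negativity of each $-\log\mathcal{T}_{a,t}$ term (needed so that a single off-diagonal transition already forces $p_a\ge-\log\epsilon$), and the backward-chain argument showing that misalignment with $\mathbf{x}_{B,d}$ at any time slice forces at least one off-diagonal transition.
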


\begin{proof}
    The decomposition of the Hamiltonian follows directly from Lemma \ref{lem:spacetime_gibbs_finite_depth} and the blocking procedure. It remains to show the pinning structure of $p_a(\mathbf{X}_a)$.
    We take each $\mathcal{T}_{a,t}$ and write down the decomposition
    \begin{equation}
        \begin{split}
        &\mathcal{T}_{a,t}(\mathbf{x}_{a,t-1},\mathbf{x}_{a,t}) = \\
        &(1-\epsilon_{a,t}) I(\mathbf{x}_{a,t-1},\mathbf{x}_{a,t}) + \epsilon_{a,t} \mathcal{N}_{a,t}(\mathbf{x}_{a,t-1},\mathbf{x}_{a,t})
        \end{split}
    \end{equation}
    Where $I(\mathbf{x}_{a,t})$ is the identity stochastic process and $\mathcal{N}_{a,t}(\mathbf{x}_{a,t})$ is an arbitrary stochastic process. Suppose $a$ contains sites in $B$. We fix $\mathbf{x}_{a,t}$ and bound its energy contribution as follows.
    \begin{equation}
        \begin{split}
        &-\log(\mathcal{T}_{a,t}(\mathbf{x}_{a,t-1},\mathbf{x}_{a,t})) \\
         &\begin{cases}
            \le -\log(1 - \epsilon_{a,t}), & \text{if } \mathbf{x}_{a,t-1} = \mathbf{x}_{a,t} \\
            \ge -\log(\epsilon_{a,t}), & \text{otherwise}
        \end{cases}
        \end{split}
    \end{equation}
    We define $p_a$ as $\sum_t \log(\mathcal{T}_{a,t})$. If for every $i \in a$ that is in $B$, $x_{i,t} = x_{i,d}$ for all $t$, then we have
    \begin{equation}
        p_a(\mathbf{X}_a) \le -\sum_{t=1}^{d} \log(1 - \epsilon_{a,t})
    \end{equation}
    If at least one $i$ is not aligned with $x_{i,d}$ for some $t$, then we have
    \begin{equation}
        p_a(\mathbf{X}_a) \ge -\log(\epsilon_{a,t})
    \end{equation}
Upper bounding $\epsilon_{a,t}$ by $\epsilon$ gives the desired result.
\end{proof}

Note that we have only considered the pinning effect of $p_a(\mathbf{X}_a)$ where $a$ is entirely supported in $B$. This is sufficient for our purpose. When $a$ is only partially supported in $B$, $p_a(\mathbf{X}_a)$ still pinned the spins in $B$.

To bound the Markov length, it suffices to show the decay of MI between $A$ and $C$ in the pinned Gibbs distribution $P(\mathbf{X}|\mathbf{x}_{B,d})$. The fact that $\mathbf{X}_A$ and $\mathbf{X}_C$ contain more spins than $\mathbf{x}_A$ and $\mathbf{x}_C$ does not matter since removing spins cannot increase MI. Since the pinning term $p_a(\mathbf{X}_a)$ is not strictly local, we need an improved version of Lemma \ref{lem:decay_of_correlation_informal}. However, the proof idea is the same. We present the improved version as Lemma \ref{lem:decay_of_correlation} in Appendix~\ref{decay_proof}. 


\begin{proof}
    (Proof of Theorem \ref{thm:classical_stability_finite_depth_informal}) We first use Proposition \ref{prop:post_selection} to reduce the problem of bounding the MI of $P(\mathbf{x}_{A,d}\mathbf{x}_{C,d}|\mathbf{x}_{B,d})$ over any choice of $\mathbf{x}_{B,d}$. Next, we use Proposition \ref{prop:classical_data_processing} to further reduce the problem to bounding the MI of $P(\mathbf{x}_{A,d}\mathbf{x}_{C,d}|\mathbf{x}_{B,d})$ the MI of $P(\mathbf{X}_A \mathbf{X}_C|\mathbf{x}_{B,d})$ over any choice of $\mathbf{x}_{B,d}$. This step cannot decrease MI because of data-processing inequality. Then, we use Lemma \ref{lem:spacetime_gibbs_finite_depth} and the blocking procedure to realize that $P(\mathbf{X}_A \mathbf{X}_C|\mathbf{x}_{B,d})$ is a Gibbs distribution with the same interaction graph. Lemma \ref{lem:pinning_finite_depth} shows that the blocked Hamiltonian has a pinning structure. Finally, we use Lemma \ref{lem:decay_of_correlation} to the decay of MI in $P(\mathbf{X}_A \mathbf{X}_C|\mathbf{x}_{B,d})$. This completes the proof.
\end{proof}

\section{Commuting Gibbs States Under Weak Finite-Depth Local Channels}\label{quantum_finite_depth}

In this section, we further extend our results to commuting Pauli Gibbs states under finite-depth local quantum channels as defined in Definition \ref{def:finite_depth_channel}. Here, we only obtain a partial result where we have to impose certain restrictions on the local channels. However, we note the our current result already covers many physically relevant setups such as finite-temperature stabilizer states under weak depolarization channels. We will start by introducing some notations and tools we will need. We then state the restrictions on the local channels. Finally, we state and prove our main stability result.

\subsection{Stabilizer distribution of commuting Pauli Gibbs states}

We start by introducing some notations. Consider a commuting Gibbs state $\rho \propto e^{-\beta H}$ where $H$ is defined in Eq.~(\ref{eq:hamiltonian}) and all $h_a$ are Pauli operators that commute with each other. There are two possible scenarios: (1) all $h_a$ are independent operators, i.e., there is no non-trivial product of $h_a$ that equals identity; (2) there are some dependent operators, i.e., there exists a non-trivial product of $h_a$ that equals identity. An example of the first scenario is the toric code Hamiltonian, while an example of the second scenario is the two-dimensional Ising model, since the product of all plaquette terms equals identity.

In the first scenario, we define a set $\{\tilde{h}_{a}\} = \{h_a\}$. In the second scenario, we can always find a maximal independent subset of $\{h_a\}$ and define $\{\tilde{h}_{a}\}$ to be this independent subset. For any classical Hamiltonian, we can always choose $\{\tilde{h}_{a}\} = \{Z_a\}$. $\{\tilde{h}_{a}\}$ is, up to constant multiplier, a set of stabilizer generators.

We will extract a set of (possibly mixed) stabilizer states from $\{\tilde{h}_{a}\}$. We first diagonalize each $\tilde{h}_{a}$.
\begin{equation}
    \tilde{h}_a = \sum_{s_a=0}^{q-1} \omega^{s_a} \Pi_{a,s_a}
\end{equation}
Where $\omega \propto e^{2\pi i / q}$ is the $q$-th root of unity and $\Pi_{a,s_a}$ are the projectors onto the eigenspaces of $\tilde{h}_a$. We take all $\tilde{h}_a$ and define a projector onto the joint eigenspaces as follows.
\begin{equation}
    \Pi_{\mathbf{s}} = \prod_a \Pi_{a,s_a}
\end{equation}
Where $\mathbf{s} = \{s_a\}$ denotes the set of all eigenspaces. Each $\Pi_{\mathbf{s}}$ defines a (possibly mixed) stabilizer state since it is the projector onto a stabilizer subspace. We also define $R = \rm{Tr}[\Pi_{\mathbf{s}}]$ to be the rank of each stabilizer state. Note that $R$ is independent of the choice of $\mathbf{s}$ since all $\tilde{h}_a$ are independent operators.

We first observe that any commuting Gibbs state $\rho$ can be expressed as a mixture of these stabilizer states.
\begin{proposition}\label{prop:stabilizer_decomposition_gibbs}
    The commuting Pauli Gibbs state $\rho$ can be expressed as a mixture of stabilizer states as follows:
\begin{align}
    \rho \propto e^{-\beta H} = \sum_{\mathbf{s}}  P(\mathbf{s}) \frac{\Pi_{\mathbf{s}}}{R}
\end{align}
Where $P(\mathbf{s})$ is the probability of the stabilizer $\mathbf{s}$ in the Gibbs,state $\rho$, which we call the \emph{stabilizer distribution}:
\begin{equation}
    P(\mathbf{s}) := \frac{\mathrm{Tr}[\Pi_{\mathbf{s}}\rho]}{\mathrm{Tr}[\rho]}
\end{equation}
\end{proposition}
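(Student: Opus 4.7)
The plan is to show that $H$ is already diagonal in the basis of joint eigenprojectors $\Pi_{\mathbf{s}}$ built from $\{\tilde h_a\}$, at which point $e^{-\beta H}$ splits into a sum over stabilizer subspaces whose weights are exactly $P(\mathbf{s})$. Structurally, the claim is an identity about commuting Pauli algebras, so no analysis is needed; the work is algebraic bookkeeping.

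First I would argue that every original term $h_a$ lies (up to a phase) in the abelian group generated by the maximal independent subset $\{\tilde h_a\}$. This is where maximality is used: if some $h_a$ were not expressible as such a product, adding it would enlarge the independent subset and contradict maximality. Since the $h_a$ are commuting Pauli operators, the resulting dependence has the form $h_a = \omega^{c_a}\prod_{b\in S_a}\tilde h_b$ for some subset $S_a$ and phase $c_a$. On the joint eigenspace labelled by $\mathbf{s}$, each $\tilde h_b$ acts as $\omega^{s_b}$, so $h_a$ acts as the scalar $\omega^{c_a+\sum_{b\in S_a}s_b}$. Summing over $a$ gives an eigenvalue $E_{\mathbf{s}}$ of $H$ on that block, yielding the spectral decomposition $H=\sum_{\mathbf{s}} E_{\mathbf{s}}\,\Pi_{\mathbf{s}}$.

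Next I would use that the $\Pi_{\mathbf{s}}$ are mutually orthogonal projectors summing to the identity to conclude $e^{-\beta H}=\sum_{\mathbf{s}} e^{-\beta E_{\mathbf{s}}}\,\Pi_{\mathbf{s}}$ and hence $Z=R\sum_{\mathbf{s}} e^{-\beta E_{\mathbf{s}}}$, using $\mathrm{Tr}[\Pi_{\mathbf{s}}]=R$ for all $\mathbf{s}$. Tracing against $\Pi_{\mathbf{s}}$ computes $\mathrm{Tr}[\Pi_{\mathbf{s}}\rho]=e^{-\beta E_{\mathbf{s}}} R/Z$, and since $\mathrm{Tr}[\rho]=1$ this is $P(\mathbf{s})$. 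Substituting back into $\sum_{\mathbf{s}} P(\mathbf{s})\,\Pi_{\mathbf{s}}/R$ cancels the factor of $R$ and reproduces $e^{-\beta H}/Z=\rho$, which is the claim.

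The only step that warrants care is the first one, namely that the rank $R$ is genuinely independent of $\mathbf{s}$. This follows because $\{\tilde h_a\}$ is an independent set of commuting Pauli generators, so the projector $\Pi_{\mathbf{s}}=\prod_a\Pi_{a,s_a}$ onto the simultaneous $\omega^{s_a}$-eigenspace has the same dimension $q^n/q^{|\{\tilde h_a\}|}$ for every choice of $\mathbf{s}$; this is a standard fact about the stabilizer formalism for qudits. Once this and the dependence structure of the redundant $h_a$ on $\{\tilde h_a\}$ are written out, the rest of the proof is a one-line substitution.
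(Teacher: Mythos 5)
Your proof is correct, and it follows the route the paper itself implicitly relies on: the paper states this proposition without a written proof, but it asserts the ingredients you supply — the spectral decomposition $H=\sum_{\mathbf{s}}E_{\mathbf{s}}\,\Pi_{\mathbf{s}}$ (in its definition of stabilizer-mixing channels) and the $\mathbf{s}$-independence of $R=\mathrm{Tr}[\Pi_{\mathbf{s}}]$ (asserted just before the proposition). Your additional observations — that maximality of $\{\tilde h_a\}$ forces every redundant $h_a$ to act as a scalar $\omega^{c_a+\sum_{b\in S_a}s_b}$ on each joint eigenspace, and that equal rank follows from independence of the generators — are exactly the bookkeeping the paper omits, and the final substitution reproducing $\rho=\sum_{\mathbf{s}}P(\mathbf{s})\Pi_{\mathbf{s}}/R$ is right.
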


In other words, all information about the commuting Gibbs state $\rho$ is contained in the stabilizers $\Pi_{\mathbf{s}}$ and the stabilizer distribution $P(\mathbf{s})$. Next, we show that the stabilizer distribution $P(\mathbf{s})$ is classical Gibbs distribution with locality inherited from the interaction graph $\mathcal{G}$ of $H$.
\begin{lemma}\label{lem:stabilizer_distribution_gibbs}
    The stabilizer distribution $P(\mathbf{s})$ of a commuting Gibbs state $\rho$ on an interaction graph $\mathcal{G}$ is a classical Gibbs distribution
    \begin{equation}
        P(\mathbf{s}) \propto e^{-\beta H_{\rm{stab}}(\mathbf{s})}
    \end{equation}
    with Hamiltonian $H_{\rm{stab}}(\mathbf{s})$ given by
    \begin{equation}
        H_{\rm{stab}}(\mathbf{s}) = \sum_{a \in \mathcal{G}} h_{a}(\mathbf{s}_{a})
    \end{equation}
    Where $\mathbf{s}_a$ denotes the values of the stabilizers overlapping with hyperedge $a$, and $h_{a}(\mathbf{s}_a)$ is defined as
    \begin{equation}
        h_{a}(\mathbf{s}_a) = \Tr\left[\Pi_{a,s_a} h_a\right]
    \end{equation}
\end{lemma}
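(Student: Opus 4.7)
The plan is to derive $P(\mathbf{s})$ directly from its definition by exploiting the joint spectral decomposition of the commuting Hamiltonian. The crucial algebraic fact is that, since $\{\tilde{h}_a\}$ is a maximal independent subset of $\{h_c\}$, every term $h_c$ is (up to a root-of-unity phase) a product of stabilizer generators, so that $H$ is diagonal in the basis of joint eigenspaces labelled by $\mathbf{s}$. Consequently $e^{-\beta H}$ is block-diagonal with respect to the resolution $I = \sum_{\mathbf{s}} \Pi_{\mathbf{s}}$ and acts as a pure scalar on each block, turning the quantum partition function into a classical one.

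First I would show that on each joint eigenspace every term $h_c$ acts as a scalar $\lambda_c(\mathbf{s})$. Writing $h_c = \omega^{\phi_c} \prod_a \tilde{h}_a^{k_{c,a}}$ and using $\tilde{h}_a \Pi_{\mathbf{s}} = \omega^{s_a} \Pi_{\mathbf{s}}$ gives $h_c \Pi_{\mathbf{s}} = \lambda_c(\mathbf{s}) \Pi_{\mathbf{s}}$ with $\lambda_c(\mathbf{s}) = \omega^{\phi_c + \sum_a k_{c,a} s_a}$. This scalar agrees with the problem's definition $h_c(\mathbf{s}_c) = \Tr[\Pi_{c,s_c} h_c]$ up to the rank normalization of the joint projector: since $h_c$ commutes with every $\tilde{h}_a$ it is diagonal on $\Pi_{c,s_c}$, so $\Tr[\Pi_{c,s_c} h_c] = \lambda_c(\mathbf{s}) \Tr[\Pi_{c,s_c}]$, which absorbs cleanly into the proportionality constant of the stabilizer distribution.

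Second, I would assemble this into the Gibbs form. Because every $h_c$ is diagonal on $\Pi_{\mathbf{s}}$, so is $H$, and $e^{-\beta H} \Pi_{\mathbf{s}} = e^{-\beta \sum_c h_c(\mathbf{s}_c)} \Pi_{\mathbf{s}}$. Taking the trace and using that $R = \Tr[\Pi_{\mathbf{s}}]$ is independent of $\mathbf{s}$ (noted just before the lemma) gives
\begin{equation}
    P(\mathbf{s}) = \frac{\Tr[\Pi_{\mathbf{s}} e^{-\beta H}]}{\Tr[e^{-\beta H}]} = \frac{R \, e^{-\beta H_{\rm{stab}}(\mathbf{s})}}{\sum_{\mathbf{s}'} R \, e^{-\beta H_{\rm{stab}}(\mathbf{s}')}} \propto e^{-\beta H_{\rm{stab}}(\mathbf{s})},
\end{equation}
with $H_{\rm{stab}}(\mathbf{s}) = \sum_c h_c(\mathbf{s}_c)$, as claimed.

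The only real obstacle is the locality statement, namely that $h_c(\mathbf{s}_c)$ depends only on stabilizers whose support overlaps hyperedge $c$. In general, nothing in the abstract decomposition $h_c = \omega^{\phi_c} \prod_a \tilde{h}_a^{k_{c,a}}$ forbids $k_{c,a} \neq 0$ for $\tilde{h}_a$ geometrically far from $c$: global algebraic relations among $\{h_c\}$ could, in principle, force distant generators into the expression. To avoid this I would always choose the maximal independent subset $\{\tilde{h}_a\}$ locally---e.g., $\tilde{h}_a = Z_a$ for classical Hamiltonians, or any local generating presentation of the stabilizer algebra for commuting Pauli Hamiltonians---so that the supports of the factors in $\prod_a \tilde{h}_a^{k_{c,a}}$ must cancel outside of $c$, forcing only generators sharing qudits with hyperedge $c$ to participate. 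Under that choice $H_{\rm{stab}}$ inherits the interaction graph $\mathcal{G}$, completing the proof.
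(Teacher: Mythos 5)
Your proof is correct and follows essentially the same route as the paper: both exploit commutativity to factor $e^{-\beta H}=\prod_a e^{-\beta h_a}$, observe that each $h_a$ acts as a scalar on the joint eigenspace of the nearby stabilizer generators, and thereby reduce the trace $\Tr[\Pi_{\mathbf{s}}e^{-\beta H}]$ to $R$ times a product of local Boltzmann weights. The one place you go beyond the paper is in flagging that locality of the decomposition $h_c=\omega^{\phi_c}\prod_a \tilde{h}_a^{k_{c,a}}$ is not automatic and requires choosing the maximal independent subset $\{\tilde{h}_a\}$ locally --- a point the paper's proof silently assumes when it pairs each $e^{-\beta h_a}$ only with the projectors of overlapping stabilizers.
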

\begin{proof}
    We start from the definition of $P(\mathbf{s})$:
    \begin{align}
        P(\mathbf{s}) \propto \mathrm{Tr}[\Pi_{\mathbf{s}} e^{-\beta H}] \\
        = \mathrm{Tr}\left[\prod_{a} \left(\prod_{s \in \mathbf{s}_a}\Pi_{a,s_a} e^{-\beta h_a}\right)\right]
    \end{align}
    Where in the second line we act each $e^{-\beta h_a}$ on projectors with overlapping support with it (taken from $s \in \mathbf{s}_a$). We also used the fact that $\Pi_{a,s_a}=\Pi_{a,s_a}^2$ to insert as many $\Pi_{a,s_a}$ as we want so that each $e^{-\beta h_a}$ is paired with all projectors overlapping with it. Each term in bracket evaluates to the identity matrix multiplied by the eigenvalue of $e^{-\beta h_a}$ in the eigenspace selected by $\Pi_{a,s_a}$. We will call this eigenvalue $e^{-\beta h_a(\mathbf{s}_a)}$. Therefore, we have
    \begin{equation}
        P(\mathbf{s}) \propto \prod_{a} e^{-\beta h_a(\mathbf{s}_a)}
    \end{equation}
\end{proof}

Therefore, the stabilizer distribution $P(\mathbf{s})$ is a classical Gibbs distribution, and thus is exactly Markov. We also note that when all $h_a$ are all independent operators and if we choose $\{\tilde{h}_a\} = \{h_a\}$, then $P(\mathbf{s})$ is a product distribution since each $h_a$ only depends on $s_a$. On the other hand, when there are dependent operators in $\{h_a\}$, then $P(\mathbf{s})$ can have non-trivial correlations, as in the case of the two-dimensional Ising model where $P(\mathbf{s})$ is exactly the oriniginal Gibbs distribution.

Note that $P(\mathbf{s})$ is technically defined on a different hypergraph $\mathcal{G}_s$. One can construct $\mathcal{G}_s$ from $\mathcal{G}$ by the following procedure: for each hyperedge $a$ in $\{\tilde{h}_a\}$, we create a vertex in $\mathcal{G}_s$; for each hyperedge $b$ in $\mathcal{G}$, we look for all $a$ in $\{\tilde{h}_a\}$ that shares support with $b$ and create a hyperedge in $\mathcal{G}_s$ connecting all such $a$. One can see that $\mathcal{G}_s$ inherits the locality structure from $\mathcal{G}$. For example, if two qubits are far apart in $\mathcal{G}$, then any stabilizers supported on these two qubits must also be far apart in $\mathcal{G}_s$.

\subsection{Stabilizer mixing channels}

Next, we define the restriction we impose on the local quantum channels, which we call \emph{stabilizer mixing channels}. The main intution behind this definition is that we want the notion of stabilizer distribution to be well-defined even after applying the local quantum channels. In other words, we do not allow the local quantum channels to create coherence between different stabilizer states.

\begin{definition}[Stabilizer mixing channels]\label{def:stabilizer_mixing_channel}
A channel $\mathcal{E}$ is called \emph{stabilizer mixing} if for any set of stabilizer states $\mathbf{s}$ we have
\begin{equation}
    \mathcal{E}\left(\Pi_{\mathbf{s}}\right) = \sum_\mathbf{\mathbf{s}'} Q(\mathbf{s}') \Pi_{\mathbf{s}'},
\end{equation}
where $Q(\mathbf{s}')$ is a probability distribution over stabilizer states $\mathbf{s}'$.
\end{definition}

In human words, a stabilizer mixing channel maps a stabilizer projector to a linear combination of stabilizer projectors. For the case of Toric codes, dephasing and depolarizing channels are stabilizer mixing, while amplitude damping is not. Also for classical Hamiltonians, any classical stochastic channel is a stabilizer mixing channel.

However, sometimes one can make a non-stabilizer mixing channel stabilizer mixing by adding additional dissipation. For example, if an amplitude damping channel resets to state $\ket{0}$, we can add an additional dephasing channel that resets the state to $\ket{1}$ with equal probability. As long as the initial channel is weak, the additional channel is still weak, so we would still expect stability.

With the stabilizer mixing channels, the dissipated Gibbs state after the channel is still uniquely determined by the stabilizer distribution. Furthermore, the new stabilizer distribution can be obtained from the original stabilizer distribution by a finite-depth local stochastic process.

\begin{proposition}
\label{lem:stabilizer_distribution_after_channel}
Let $\mathcal{E}$ be the finite-depth local channel given in Eq.~\ref{eq:finite_depth_channel}, where each $\mathcal{E}_{a,t}$ is a local stabilizer mixing channel. Then the dissipated Gibbs state $\rho' = \mathcal{E}(\rho)$ admits the following decomposition:
\begin{equation}
    \rho' = \sum_{\mathbf{s}'} P'(\mathbf{s}') \Pi_{\mathbf{s}'}
\end{equation}
Where $P'(\mathbf{s}')$ is a new stabilizer distribution that can be obtained from the original stabilizer distribution $P(\mathbf{s})$ by the following stochastic process:
\begin{equation}
    P'(\mathbf{s}') = \mathcal{T}[P(\mathbf{s})]
\end{equation}
Where $\mathcal{T}$ is a finite-depth local stochastic process defined in Definition \ref{def:finite_depth_stochastic_process} with the local channels $\mathcal{T}_{a,t}$ determined by the local channels $\mathcal{E}_{a,t}$ as follows:
\begin{equation}
    \mathcal{T}_{a,t}(\mathbf{s}'_{a,t-1},\mathbf{s}_{a,t}) = \frac{1}{R} \text{Tr}\left[\mathcal{E}_{a,t}(\Pi_{\mathbf{s}_{a,t-1}}) \Pi_{\mathbf{s}'_{a,t}}\right]
\end{equation}
Where $\mathbf{s}_{a,t}$ collects the stabilizer labels supported on the hyperedge $a$ at time $t$ and $\mathbf{s}'_{a,t-1}$ is similarly defined for time $t+1$.
\end{proposition}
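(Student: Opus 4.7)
The plan is to proceed by induction on the number of layers $T$ in the finite-depth channel $\mathcal{E}$, maintaining the invariant that after $t$ layers the state takes the form $\rho^{(t)} = \sum_\mathbf{s} P^{(t)}(\mathbf{s}) \Pi_\mathbf{s}/R$ for some probability distribution $P^{(t)}$, and showing that the transition from $P^{(t-1)}$ to $P^{(t)}$ is implemented by a single layer of local stochastic processes of the claimed form. The base case $t=0$ is exactly the stabilizer decomposition of Proposition~\ref{prop:stabilizer_decomposition_gibbs}, so it suffices to analyze one layer.

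For the inductive step, fix $t$ and recall that the channels $\{\mathcal{E}_{a,t}\}_{a \in \mathcal{G}_t}$ in layer $t$ act on disjoint sets of physical qubits, so they commute and can be applied in any order. By linearity, applying $\mathcal{E}_{a,t}$ to $\rho^{(t-1)}$ reduces to evaluating $\mathcal{E}_{a,t}(\Pi_\mathbf{s})$ for each $\mathbf{s}$. The stabilizer-mixing assumption (Definition~\ref{def:stabilizer_mixing_channel}) guarantees that $\mathcal{E}_{a,t}(\Pi_\mathbf{s})$ is itself a nonnegative combination of stabilizer projectors $\Pi_{\mathbf{s}'}$. To extract the transition matrix elements I use orthogonality of the projectors, $\Pi_\mathbf{s} \Pi_{\mathbf{s}'} = \delta_{\mathbf{s},\mathbf{s}'}\Pi_\mathbf{s}$ together with $\mathrm{Tr}[\Pi_\mathbf{s}] = R$, which gives the explicit formula
\begin{equation}
    Q_\mathbf{s}(\mathbf{s}') \;=\; \frac{1}{R}\,\mathrm{Tr}\!\left[\mathcal{E}_{a,t}(\Pi_\mathbf{s})\,\Pi_{\mathbf{s}'}\right].
\end{equation}
Trace preservation of $\mathcal{E}_{a,t}$ combined with $\sum_{\mathbf{s}'}\Pi_{\mathbf{s}'} = I$ then shows $\sum_{\mathbf{s}'} Q_\mathbf{s}(\mathbf{s}') = 1$, so $Q_\mathbf{s}$ is a genuine probability distribution and the induced map on $P^{(t-1)}$ is a stochastic matrix.

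The remaining ingredient is locality on the stabilizer hypergraph $\mathcal{G}_s$: I must show that $Q_\mathbf{s}(\mathbf{s}')$ factorizes so that stabilizer labels $s_b$ with $\mathrm{supp}(\tilde h_b) \cap a = \emptyset$ are left untouched, and only the labels $\mathbf{s}_a = \{s_b : \mathrm{supp}(\tilde h_b) \cap a \neq \emptyset\}$ can change. This follows by splitting the stabilizer projector as a product $\Pi_\mathbf{s} = \Pi_{\mathbf{s}_a}\Pi_{\mathbf{s}_{\bar a}}$, where the second factor commutes with every operator supported in $a$ and is therefore invariant under $\mathcal{E}_{a,t}$. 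Pushing the invariant factor through the trace yields precisely the local transition matrix $\mathcal{T}_{a,t}$ defined in the proposition statement, acting only on the local configuration $\mathbf{s}_a$. Composing across all hyperedges in layer $t$ and then across all layers assembles $\mathcal{T}$ as a finite-depth local stochastic process in the sense of Definition~\ref{def:finite_depth_stochastic_process}.

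I expect the main technical obstacle to be precisely this last factorization step, because when the generators $\tilde h_b$ straddle several hyperedges of $\mathcal{G}$ (as in the toric code or the 2D Ising model, where some generators are products of others), the decomposition $\Pi_\mathbf{s} = \Pi_{\mathbf{s}_a}\Pi_{\mathbf{s}_{\bar a}}$ is not onto disjoint qubit regions and one has to use that $[\tilde h_b, \mathcal{O}_a] = 0$ for any operator $\mathcal{O}_a$ supported on qubits in $a$ whenever $\tilde h_b$ is a Pauli operator whose restriction to $a$ is trivial. Once this commutation-based factorization is set up carefully, the rest is bookkeeping, and the resulting local stochastic process $\mathcal{T}_{a,t}$ inherits the neighborhood structure of $\mathcal{G}_s$ directly from that of $\mathcal{G}$, making the subsequent application of Theorem~\ref{thm:classical_stability_finite_depth_informal} straightforward.
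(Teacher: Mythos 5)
The paper does not actually prove this proposition: it is stated bare, followed only by the one-line toric-code illustration (``dephasing channels create a pair of adjacent anyons\dots''), so your write-up is filling a gap rather than shadowing an existing argument. Your proof is essentially correct and is the natural one: induction over layers with the base case supplied by Proposition~\ref{prop:stabilizer_decomposition_gibbs}, linearity to reduce to a single projector, the stabilizer-mixing hypothesis to guarantee the output stays diagonal in the $\{\Pi_{\mathbf{s}}\}$ basis, orthogonality plus $\mathrm{Tr}[\Pi_{\mathbf{s}}]=R$ to read off the coefficients, and trace preservation to get stochasticity. You also correctly identify the only genuinely delicate point, namely locality on $\mathcal{G}_s$. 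One refinement there: your commutation argument (each $\tilde h_b$ with $\mathrm{supp}(\tilde h_b)\cap a=\emptyset$ commutes with the Kraus operators, so $\mathcal{E}_{a,t}(\Pi_{\mathbf{s}})$ stays in the $\omega^{s_b}$ eigenspace and the label $s_b$ cannot change) establishes that distant labels are \emph{frozen}, and the factorization $\mathcal{E}_{a,t}(\Pi_{\mathbf{s}_a}\Pi_{\mathbf{s}_{\bar a}})=\mathcal{E}_{a,t}(\Pi_{\mathbf{s}_a})\,\Pi_{\mathbf{s}_{\bar a}}$ is valid. But the claimed form $\mathcal{T}_{a,t}(\mathbf{s}'_a,\mathbf{s}_a)$ additionally requires that the transition probabilities for the \emph{local} labels do not depend on the frozen distant labels, and this does not follow from commutation alone: writing $Q(\mathbf{s}'_a)R=\mathrm{Tr}\bigl[\mathcal{E}_{a,t}(\Pi_{\mathbf{s}_a})\Pi_{\mathbf{s}'_a}\Pi_{\mathbf{s}_{\bar a}}\bigr]$, a residual $\mathbf{s}_{\bar a}$ dependence can enter whenever some nontrivial product $\prod_b \tilde h_b^{m_b}$ of generators \emph{not} overlapping $a$ happens to be supported inside the neighborhood of $a$ (the analogue of a long string of stabilizers collapsing to a small operator). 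You should either add the hypothesis that no such product exists for the hyperedges of $\mathcal{G}_t$ (true for the lattices the paper has in mind, away from topological/global relations), or note that $\mathrm{Tr}_{\overline{N(a)}}[\Pi_{\mathbf{s}_{\bar a}}]\propto I$ in that case, which makes the $\mathbf{s}_{\bar a}$ dependence drop out. With that caveat addressed, the assembly into a finite-depth local stochastic process in the sense of Definition~\ref{def:finite_depth_stochastic_process} goes through as you describe.
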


In the case of Toric code, dephasing channels create a pair of adjacent anyons with the dephasing probability. This is a local stochastic process.

\subsection{Proof of main result}
With the stabilizer distribution and stabilizer-mixing channels, we are ready to prove our main result Theorem \ref{thm:quantum_stability_finite_depth_informal}. The proof follows from a reduction to the classical finite-depth local stochastic process case. We first show that stabilizer distribution after the channel $P'(\mathbf{s}')$ has a finite Markov length. This follows from our previous result on classical finite-depth local stochastic processes (Theorem \ref{thm:classical_stability_finite_depth_informal}). Then, we show that the finite Markov length of $P'(\mathbf{s}')$ implies the finite Markov length of $\mathcal{E}(\rho)$. This is formalized in the Lemma below.

\begin{lemma}
\label{lem:stabilizer_markov_implies_state_markov}
Consider the dissipated Gibbs state given in Lemma~\ref{lem:stabilizer_distribution_after_channel}. If $P'_{\mathbf{s}}$ has a Markov length $\xi_s$, then $\rho'$ has a Markov length $\xi = \xi_s$.
\end{lemma}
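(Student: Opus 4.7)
The plan is to directly reduce the quantum conditional mutual information of $\rho'$ to a classical conditional mutual information of $P'$ by exploiting the fact that $\rho'$ is simultaneously diagonal with every stabilizer projector $\Pi_{\mathbf{s}}$, so its reduced density matrices have a very constrained ``stabilizer-marginal'' form.

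Since $\rho' = \sum_{\mathbf{s}}P'(\mathbf{s})\Pi_{\mathbf{s}}/R$ is a mixture of mutually orthogonal stabilizer projectors, I would first establish that for any subregion $X$,
\begin{equation*}
    \rho'_X = \sum_{\mathbf{s}_X} P'_X(\mathbf{s}_X)\,\frac{\Pi^X_{\mathbf{s}_X}}{R_X},
\end{equation*}
where $\mathbf{s}_X$ ranges over joint eigenvalues of stabilizers supported entirely in $X$, $P'_X$ is the corresponding marginal of $P'$, and $\Pi^X_{\mathbf{s}_X}/R_X$ is the maximally mixed $X$-local stabilizer state of rank $R_X$. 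The derivation is a short Pauli-Fourier computation: expanding $\Pi_{\mathbf{s}}=q^{-k}\sum_{\mathbf{k}}\omega^{-\mathbf{k}\cdot\mathbf{s}}\prod_a\tilde h_a^{k_a}$ and noting that $\operatorname{Tr}_{\bar X}$ annihilates every term whose Pauli representative has nontrivial support in $\bar X$ leaves only contributions from the subgroup of stabilizers supported in $X$. Because the $\Pi^X_{\mathbf{s}_X}$ are mutually orthogonal, $S(\rho'_X)=H(P'_X)+\log R_X$ with $H$ the Shannon entropy. Substituting into the definition of CMI yields
\begin{equation*}
    I_{\rho'}(A:C|B)=\bigl[H(P'_{AB})+H(P'_{BC})-H(P'_B)-H(P'_{ABC})\bigr]+\bigl[\log R_{AB}+\log R_{BC}-\log R_B-\log R_{ABC}\bigr].
\end{equation*}
The second (rank) bracket vanishes by a Hammersley--Clifford-type counting: when no stabilizer straddles $A$ and $C$---guaranteed by locality of $\tilde h_a$ and the annular geometry with $d_{AC}>2r$, where $r$ is the maximal stabilizer range---the independent-stabilizer counts obey $k_{AB}+k_{BC}=k_B+k_{ABC}$. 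The first bracket is exactly a classical CMI of $P'$ comparing the stabilizers whose supports touch $A$ against those whose supports touch $C$, conditioned on stabilizers supported purely in $B$. These stabilizer regions are separated on $\mathcal{G}_s$ by a distance at least $d_{AC}-O(r)$, so the assumed Markov length $\xi_s$ of $P'$ delivers $I_{\rho'}(A:C|B)\le c\,e^{-d_{AC}/\xi_s}$, with the additive $O(r)$ shift absorbed into the polynomial prefactor allowed in the definition of Markov length, establishing $\xi = \xi_s$ for $\rho'$.

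The main obstacle is that when $B$ is thinner than $2r$, boundary stabilizers straddling the $A$-$B$ and $B$-$C$ interfaces can share qubits directly in $B$ without any purely-$B$-supported stabilizer mediating. The classical CMI that emerges from the entropy calculation conditions only on stabilizers fully in $B$, so it is not literally of the annular form that the definition of Markov length directly bounds. Resolving this will require either a chain-rule argument that absorbs the extra boundary correlations into a classical CMI with a slightly enlarged (fully separating) conditioning set, or a geometric thickening that buffers $B$ by $2r$ before invoking the Markov length. Either fix preserves the exponential rate $1/\xi_s$, and the resulting prefactor---polynomial in the boundary area $|\partial A|+|\partial C|$---is compatible with the definition of Markov length. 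A parallel subtlety arises when $\{\tilde h_a\}$ contains local dependencies (as in the 2D Ising plaquette product constraint): it is handled by working with a maximal independent subset, with the dependencies contributing only local boundary terms that again do not affect the exponential decay rate.
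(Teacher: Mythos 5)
Your proposal is correct and follows essentially the same route as the paper's proof: decompose each reduced density matrix into orthogonal stabilizer blocks so that $S(\rho'_X)$ splits into the Shannon entropy of the marginal stabilizer distribution plus a rank term, observe that the rank terms cancel (your counting identity $k_{AB}+k_{BC}=k_B+k_{ABC}$ is exactly the statement that the stabilizer states are exactly Markov, which is how the paper phrases it), and identify the remainder as the classical CMI $I_{P'}(A\,\partial A : C\,\partial C\,|\,B)$ controlled by $\xi_s$. Your extra care about the $O(r)$ mismatch between distances on $\mathcal{G}$ and $\mathcal{G}_s$ is handled in the paper by simply absorbing the $O(1)$ shift into the prefactor, which is the same fix you propose.
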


\begin{proof}
    We take any tripartition $A$, $B$, and $C$ of qubits. Let $\mathbf{s}_A$, $\mathbf{s}_B$, and $\mathbf{s}_C$ be the stabilizer labels supported entirely on $A$, $B$, and $C$ respectively. In other words, these stabilizer are not supported on multiple regions. We also define $\mathbf{s}_{\partial A}$ and $\mathbf{s}_{\partial C}$ to be the stabilizer labels that are supported on both $A$ and $B$, and both $B$ and $C$ respectively. The union of $\mathbf{s}_A$, $\mathbf{s}_B$, $\mathbf{s}_C$, $\mathbf{s}_{\partial A}$, and $\mathbf{s}_{\partial C}$ gives the full set of stabilizer labels $\mathbf{s}$.

    We consider the CMI of $P'(\mathbf{s})$ between $\mathbf{s}_A \mathbf{s}_{\partial A}$ and $\mathbf{s}_C \mathbf{s}_{\partial C}$ conditioned on $\mathbf{s}_B$, which we will call $I_{P'}(A \partial A : C \partial C | B)$. By assumption, since $P'(\mathbf{s})$ has a Markov length $\xi_s$, we have
    \begin{equation}
        I_{P'}(A \partial A : C \partial C | B) \leqq C' e^{-d_{s,AC}/\xi_s}
    \end{equation}
    Where $d_{s,AC}$ is the distance between $A$ and $C$ in the interaction graph $\mathcal{G}_s$ of $P'(\mathbf{s})$ and is related to $d_{AC}$ defined on $\mathcal{G}$ by a $O(1)$ different (which correspond to the boundary thickness). Therefore, we will absorb the $O(1)$ difference into the constant $C'$ and do not distinguish between $d_{s,AC}$ and $d_{AC}$ in the following.

    Next, we relate $I_{P'}(A \, \partial A : C \, \partial C | B)$ to the CMI of $\rho' = \mathcal{E}[\rho]$ between $A$ and $C$ conditioned on $B$, which we will call $I_{\rho'}(A:C|B)$. We start from the definition of $I_{\rho'}(A:C|B)$:
    \begin{equation}
        I_{\rho'}(A:C|B) = S(\rho'_{AB}) + S(\rho'_{BC}) - S(\rho'_{B}) - S(\rho'_{ABC})
    \end{equation}
    Where $\rho'_{AB}$, $\rho'_{BC}$, $\rho'_{B}$, and $\rho'_{ABC}$ are the reduced density matrices of $\rho'$ on regions $AB$, $BC$, $B$, and $ABC$ respectively. We expand the reduced density matrices in terms of the stabilizer projectors as follows:
    \begin{equation}
        \rho'_{AB} = \frac{1}{R} \sum_{\mathbf{s}} P'(\mathbf{s}) \Tr_{C}[\Pi_{\mathbf{s}}]
    \end{equation}
    Next, notice that if taking the partial trace over $C$ washes out the stabilizer labels in $\mathbf{s}_C$ and $\mathbf{s}_{\partial C}$. Specifically, if $\mathbf{s}$ and $\mathbf{s}'$ are two stabilizer labels that only differ in $\mathbf{s}_C$ and $\mathbf{s}_{\partial C}$, then we have
    \begin{equation}
        \Tr_{C}[\Pi_{\mathbf{s}}] = \Tr_{C}[\Pi_{\mathbf{s}'}]
    \end{equation}
    Therefore, we denote $\Pi_{\mathbf{s}_{AB}} = \Tr_{C}[\Pi_{\mathbf{s}}]$ to be the reduced stabilizer projector on $AB$ after tracing out $C$. We can rewrite $\rho'_{AB}$ as follows:
    \begin{equation}
        \rho'_{AB} = \frac{1}{R} \sum_{\mathbf{s}_A, \mathbf{s}_{\partial A}, \mathbf{s}_B} P'(\mathbf{s}_A, \mathbf{s}_{\partial A}, \mathbf{s}_B) \Pi_{\mathbf{s}_{AB}}
    \end{equation}
    Where $P'(\mathbf{s}_A, \mathbf{s}_{\partial A}, \mathbf{s}_B)$ is the marginal distribution of $P'(\mathbf{s})$ on $\mathbf{s}_A$, $\mathbf{s}_{\partial A}$, and $\mathbf{s}_B$. Since different $\Pi_{\mathbf{s}_{AB}}$ are orthogonal projectors, we can compute the entropy of $\rho'_{AB}$ with the Holevo formula as follows:
\begin{align}
    S(\rho'_{AB}) 
    &= H(P'(\mathbf{s}_A, \mathbf{s}_{\partial A}, \mathbf{s}_B)) \notag\\
    &\quad + \sum_{\mathbf{s}_A, \mathbf{s}_{\partial A}, \mathbf{s}_B}
    P'(\mathbf{s}_A, \mathbf{s}_{\partial A}, \mathbf{s}_B)
    S\!\left(\frac{\Pi_{\mathbf{s}_{AB}}}{R}\right) \\[3pt]
    &= H\!\left(P'(\mathbf{s}_A, \mathbf{s}_{\partial A}, \mathbf{s}_B)\right)
    + S\!\left(\frac{\Pi_{\mathbf{s}_{AB}}}{R}\right)
\end{align}
    Where $H(P'(\mathbf{s}_A, \mathbf{s}_{\partial A}, \mathbf{s}_B))$ is the Shannon entropy of the marginal distribution $P'(\mathbf{s}_A, \mathbf{s}_{\partial A}, \mathbf{s}_B)$. In the second line, we used the fact that $S(\Pi_{\mathbf{s}_{AB}}/R)$ is independent of the choice of $\mathbf{s}_A$, $\mathbf{s}_{\partial A}$, and $\mathbf{s}_B$ since all stabilizer projectors have the same rank. Similarly, we can expand $S(\rho'_{BC})$, $S(\rho'_{B})$, and $S(\rho'_{ABC})$ as follows:
\begin{align}
    S(\rho'_{BC}) &= H\left(P'(\mathbf{s}_C, \mathbf{s}_{\partial C}, \mathbf{s}_B)\right) + S\left(\frac{\Pi_{\mathbf{s}_{BC}}}{R}\right) \\
    S(\rho'_{B}) &= H\left(P'(\mathbf{s}_B)\right) + S\left(\frac{\Pi_{\mathbf{s}_{B}}}{R}\right) \\
    S(\rho'_{ABC}) &= H\left(P'(\mathbf{s})\right) + S\left(\frac{\Pi_{\mathbf{s}}}{R}\right)
\end{align}
    Where $\Pi_{\mathbf{s}_{BC}}$ and $\Pi_{\mathbf{s}_{B}}$ are similarly defined reduced stabilizer projectors. We plug in the above four equations into the definition of $I_{\rho'}(A:C|B)$ and obtain
    \begin{align}
        &I_{\rho'}(A:C|B)
        = I_{P'}(A \, \partial A : C \, \partial C | B) \notag\\
        &\quad + S\left(\frac{\Pi_{\mathbf{s}_{AB}}}{R}\right) + S\left(\frac{\Pi_{\mathbf{s}_{BC}}}{R}\right) - S\left(\frac{\Pi_{\mathbf{s}_{B}}}{R}\right) - S\left(\frac{\Pi_{\mathbf{s}}}{R}\right)
    \end{align}
    Lastly, we recognize that the second line is the CMI of the stabilizer states between $A$ and $C$ conditioned on $B$, which is always zero since stabilizer states are exactly Markov. Therefore, we have
    \begin{equation}
         I_{\rho'}(A:C|B) = I_{P'}(A \partial A : C \partial C | B)
    \end{equation}
    This shows $\xi = \xi_s$.

\end{proof}

As a direct consequence of the above lemma, if $P'(\mathbf{s}')$ has a finite Markov length, then $\rho'$ is also bounded by the same Markov length.

\begin{proof}
    (Proof of Theorem \ref{thm:quantum_stability_finite_depth_informal}) We first use Lemma \ref{lem:stabilizer_distribution_after_channel} and Theorem \ref{thm:classical_stability_finite_depth_informal} to show that the eigenspace distribution $P'(\mathbf{s}')$ after the channel has a finite Markov length. Then, we use Lemma \ref{lem:stabilizer_markov_implies_state_markov} to show that the finite Markov length of $P'(\mathbf{s}')$ implies the finite Markov length of $\rho'$.
\end{proof}

\section{Discussions}\label{discussions}

We have shown that classical Gibbs states and commuting Pauli Gibbs states are stable under weak finite-depth local channels that are stabilizer mixing. Our results do not rely on the thermal correlation length being finite, so they apply to classical critical and ordered phases. Our result has particularly strong implications for stability around classical critical points. Even perturbations that act for a short time near a classical critical point lead to critical slowing down under local Metropolis or Glauber dynamics. However, our results imply that there are local channels that rapidly undo the effects of such perturbations. These apparently contradictory results are in fact consistent because the local recovery map we construct carries information about the noise model, while Metropolis and Glauber dynamics are not noise-aware. 

Our result opens up several interesting directions for future research. First, it would be interesting to extend our result to generic states with finite Markov length, for example, Gibbs states of non-commuting Hamiltonians. It seems plausible that generic states with finite Markov length are stable under weak finite-depth local channels, but our current techniques do not apply. In particular, the ``conditioning'' operation becomes ambiguous for approximate quantum Markov chains. Therefore, we expect that significant technical advances are needed to tackle this problem.

The second direction is to understand mixed-state phases beyond thermal Gibbs states, and universal properties of non-equilibrium mixed-state phases. We expect that techniques developed in \cite{sang2025mixed} together with our result (which proves their local reversibility assumption) can be helpful. In particular, we have shown the existence of new ``critical phases'' that correspond to only one point in the thermodynamic phase diagram. A natural question is to understand the property of these critical phases, in relation to the critical Gibbs states. 
The phase transition between different mixed-state phases is also an interesting open problem. In particular, under local channels, a local Gibbs state can evolved into another state whose parent Hamiltonian is non-local. This is already observed in the study of renormalization group transformation of the infinite Gibbs measures~\cite{van1991renormalization,van1993regularity}. In modern language, this corresponds to a breakdown of exact Markov property under local channels, and a finite Markov length indicates that the state is still ``local'' in some sense. However, at the mixed-state phase transition, the Markov length diverges, and it seems that the divergence of Markov length manifests a breakdown of locality in the presence of noise. An intriguing question is to understand the nature of this ``non-local critical point'' in mixed-state phase transitions.

The stability of mixed-state phases at zero temperature is another open problem. There has been numerical evidence that zero-temperature fixed-point mixed states are stable under weak local noise \cite{sang2024stability,yang2025topological}. However, a rigorous proof is still lacking. We envision that our current techniques can be adapted to tackle this problem by taking the zero-temperature limit of a commuting projector Hamiltonian carefully.

Lastly, it would be interesting to extend our result to show a threshold theorem for quasi-local decoders. In particular, one needs to handle the fact that code states having a logical observable destroys the exact Markov property. Because of the intuition from thermodynamic limit~\cite{dobrushin1968description,lanford1969observables}, we expect that the deviation from exact Markov property decays with system size. If one can quantify this deviation, then it seems plausible to extend our current techniques to show a threshold theorem for quasi-local decoders under stablizer-mixing noise channels.

\begin{acknowledgments}
The authors would like to thank Tim Hsieh, Vedika Khemani, Isaac Kim, Ethan Lake, Jong Yeon Lee, Ruochen Ma, Tibor Rakovszky, Daniel Ranard, and Shengqi Sang for useful discussions. Y.F.Z. and S.G. acknowledge support from NSF QuSEC-TAQS OSI 2326767.
\end{acknowledgments}


\begin{widetext}
\appendix

\section{Decay of Correlation Above a Critical Pinning Strength}\label{decay_proof}
In this section, we prove that correlation decays above a critical pinning strength for classical Gibbs states on hypergraph $G$ where each vertex is included in at most $\mathfrak{d}$ hyperedges, and each hyperedge acts on at most $\mathfrak{k}$ spins. We first define the Hamiltonian considered here. Let $\mathcal{G}_B$ be the subgraph of $\mathcal{G}$ where all hyperedges in $\mathcal{G}_B$ contain at least one site in $B$. We consider the most general form of Hamiltonian where the pinning term can be acting on any hyperedge in $\mathcal{G}$, not just on single sites. The Hamiltonian is defined as follows.
\begin{align}\label{eq:pinned_hamiltonian}
  H(\mathbf{x})
  &=
  \sum_{a\in\mathcal G} h_a\bigl(\mathbf{x}_{a}\bigr)
  \;+\;
  \sum_{a \in \mathcal{G}} p_a(x_{a}),
  \\
  h_a \ge 0, \qquad
  \lVert h_a\rVert_\infty &\le h_{\max},
  \qquad
  p_a(x_{a}) 
     \begin{cases}
       = 0,&x_{a}=0,\\[2pt]
       \ge p_{\min},&\text{otherwise}.
     \end{cases}, \, \text{when} \, a \, \text{is entirely supported on} B\nonumber
\end{align}
The hamiltonian consists of two parts: the first part is the interaction energy, and the second part is the pinning energy that penalises spins not being zero. $h_a \ge 0$ sets the minimal energy to at least zero. Note that the pinning term only acts on hyperedges supported on $B$. The interaction strength is bounded by $h_{\max}$. The pinning strength $p_{\min}$ is controlled by $\epsilon$.

We also define the distance between two regions $A$ and $C$ in $\mathcal{G}$ as follows.
\begin{definition}\label{def:distance_in_hypergraph}
    The distance $d_{AC}$ between two regions $A$ and $C$ in $\mathcal{G}$ is defined as the length of the shortest path between any site in $A$ to any site in $C$ in the interaction graph $\mathcal{G}$, where each pair of consecutive sites in the path must be connected by a hyperedge.
\end{definition}

We will always consider the tripartition $A$, $B$, and $C$ such that $ABC$ together form the entire system and (2) $B$ separates $A$ and $C$ in $\mathcal{G}$.
\begin{definition}
    A region $B$ separates regions $A$ and $C$ in $\mathcal{G}$ if any path from any site in $A$ to any site in $C$ must pass through $B$.
\end{definition}

The mutual information between $A$ and $C$ should be controlled by their boundary size, in the spirit of area law. We define the boundary set as follows.
\begin{definition}
    The boundary set $\partial A$ of region $A$ is defined as the set of sites in $A$ that are connected to sites outside of $A$ by at least one hyperedge in $\mathcal{G}$.
\end{definition}

We prove that there exists a constant $p_{\min,c}$ depending on $\mathcal{G}$ and $h_{\max}$ such that if $p_{\min} > p_{\min,c}$, then the connected correlation between any two observables $O_A$ and $O_C$ decays exponentially with the distance between $A$ and $C$ for any choice of $O_A$ and $O_C$. We state the result more formally below.
\begin{lemma} \label{lem:decay_of_correlation} 
   Consider a Gibbs distribution $P(\mathbf{x}) \propto e^{-H(\mathbf{x})}$ with $H(\mathbf{x})$ defined in Eq.~(\ref{eq:pinned_hamiltonian}). There exists a constant $p_{\min,c} = \mathfrak{d}(\delta_c + \mathfrak{d} h_{\max} + \log(q^d-1))$ such that if $p_{\min} > p_{\min,c}$, then the mutual information decays exponentially with the distance between $A$ and $C$:
   \begin{equation}
       I_P(A:C) \leq c \min(|\partial A|, |\partial C|) e^{-d_{AC}/\xi}
   \end{equation}
   Where $d_{AC}$ is the distance between $A$ and $C$ in $\mathcal{G}$. $|\partial A|$ is the number of sites on the boundary of $A$. $| \partial C|$ is similarly defined. $c$ is a constant, and $\xi$ is the Markov length.
\end{lemma}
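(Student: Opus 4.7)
The plan is to establish a cluster expansion around the fully-pinned reference configuration $x_B = 0$, show its convergence under the stated threshold on $p_{\min}$ via the Koteck\'y--Preiss criterion, and then use the geometry of $\mathcal{G}$ together with the separation hypothesis to convert convergence into exponential decay of the mutual information.

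First, I would exploit that $B$ separates $A$ from $C$: every hyperedge of $\mathcal{G}$ has support entirely in $A \cup B$, entirely in $B$, or entirely in $B \cup C$. Hence, forcing $x_B = 0$ decouples the effective Hamiltonian on $A \cup C$, and the resulting reference partition function factors as $Z_0 = Z_A^{(0)} Z_C^{(0)}$. Writing $Z = Z_0 \, \Xi$, the ratio $\Xi$ is a sum over excitation patterns $X \subseteq B$ (with a choice of nonzero value at each excited site), weighted by pinning penalties times ratios of interaction Boltzmann weights relative to the reference. I would group these excitations into \emph{polymers}, namely connected subsets of excited sites under hypergraph adjacency, with two polymers declared compatible iff no hyperedge of $\mathcal{G}$ touches both. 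This rewrites $\Xi$ as the partition function of a polymer gas.

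Second, I would estimate polymer activities. After integrating out the free $A$- and $C$-variables inside each polymer's influence region, the activity $\zeta(\gamma)$ admits the bound
\begin{equation}
  |\zeta(\gamma)| \;\le\; \bigl[(q^d-1)\,e^{-p_{\min} + \mathfrak{d} h_{\max}}\bigr]^{|\gamma|},
\end{equation}
since $(q^d-1)^{|\gamma|}$ counts nonzero values at excited sites, $e^{-p_{\min} |\gamma|}$ captures the pinning penalty, and $e^{\mathfrak{d} h_{\max} |\gamma|}$ dominates the ratio of interaction weights given $0 \le h_a \le h_{\max}$ and the bounded degree. A standard tree-counting bound yields at most $(e\mathfrak{d})^n$ connected polymers of size $n$ rooted at a fixed site, so the Koteck\'y--Preiss criterion $\sum_{\gamma \ni i} |\zeta(\gamma)| e^{|\gamma|} \le 1$ holds whenever $p_{\min} > \mathfrak{d}(\delta_c + \mathfrak{d} h_{\max} + \log(q^d-1))$ for an appropriate combinatorial constant $\delta_c$ absorbing the $e$ factors and the effective coordination.

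Third, I would translate convergence into clustering. The quantity $\log Z - \log Z_A - \log Z_C$, where $Z_A$ and $Z_C$ are the marginal partition functions obtained by summing out the spins outside $A$ and outside $C$ respectively, admits a cluster representation in which only clusters whose combined support simultaneously touches both $A$ and $C$ survive after cancellation. Because $B$ separates $A$ from $C$, every such cluster contains a chain of overlapping polymers crossing $B$, of total polymer length at least $d_{AC}$. Summing with the absolute-convergence estimate yields
\begin{equation}
  |\log Z - \log Z_A - \log Z_C| \;\le\; c \, \min(|\partial A|, |\partial C|) \, e^{-d_{AC}/\xi},
\end{equation}
with $\xi = O\bigl(1/(p_{\min} - p_{\min,c})\bigr)$ and the area-law prefactor arising by rooting each contributing cluster on the smaller of the two boundaries. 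Relating this free-energy difference to $I_P(A:C)$ then gives the claimed bound.

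The main obstacle I expect is this last conversion: the cluster expansion naturally controls free-energy differences and connected cumulants, which are $L^\infty$-type objects, whereas $I(A:C)$ is an $L^1$-type relative entropy, and naive Pinsker-style inequalities point the wrong way. The cleanest workaround is to expand $I(A:C)$ directly in connected cumulants of $A$- and $C$-observables using the $\log Z$ representation, so that each cumulant inherits exponential clustering from the polymer expansion and the series is dominated by its leading term with the correct boundary prefactor. If obtaining the precise $\min(|\partial A|,|\partial C|)$ scaling proves technically heavy, one could first settle for any polynomial-in-$|\partial A|$ prefactor, which is already sufficient for establishing a finite Markov length as required by Theorems~\ref{thm:classical_stability_single_site_informal} and~\ref{thm:classical_stability_finite_depth_informal}.
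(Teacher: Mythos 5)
Your polymer-expansion machinery is essentially the paper's: you expand around the fully pinned reference $x_B=0$, bound polymer activities by the pinning penalty versus the interaction cost, verify Kotecký--Preiss, and observe that only clusters stretching across $B$ can couple $A$ to $C$, forcing weight at least $d_{AC}$. All of that matches the paper's Appendix~\ref{decay_proof}. However, there is a genuine gap precisely where you flag your own discomfort: the conversion from cluster-expansion clustering to a mutual-information bound. Your proposed workaround --- expanding $I(A:C)$ ``directly in connected cumulants of $A$- and $C$-observables'' --- is not a developed argument; mutual information is not a finite combination of connected correlators, and there is no standard identity that lets you dominate it by the leading cumulant with the right prefactor. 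As written, your third step controls $\log Z - \log Z_A - \log Z_C$ for the \emph{global} partition function after integrating out the $A$ and $C$ spins, which is a single number and does not by itself say anything about the relative entropy $S(P_{AC}\,\Vert\,P_A P_C)$.

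The missing idea is to run the \emph{same} cluster expansion but with $\mathbf{x}_{AC}$ held fixed throughout, summing only over $\mathbf{x}_B$. The polymer activities then depend on $\mathbf{x}_{AC}$, but all your bounds are uniform in that dependence, so you obtain a \emph{pointwise} factorization $\tilde P(\mathbf{x}_{AC}) = Z_A(\mathbf{x}_A)\,Z_C(\mathbf{x}_C)\,e^{F_{AC}(\mathbf{x}_{AC})}$ with $\lvert F_{AC}(\mathbf{x}_{AC})\rvert\le\epsilon$ for \emph{every} configuration, where $\epsilon = \min(|\partial A|,|\partial C|)\,e^{-b\,d_{AC}}$. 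Normalizing, this gives $\lvert \log P(\mathbf{x}_{AC}) - \log\bigl(Q_A(\mathbf{x}_A)Q_C(\mathbf{x}_C)\bigr)\rvert \le 2\epsilon$ uniformly, i.e.\ a multiplicative approximation of the joint law by an explicit product law. The $L^\infty$-versus-$L^1$ obstruction you worry about then evaporates, because $I_P(A:C)=S(P_{AC}\,\Vert\,P_A P_C)\le S(P_{AC}\,\Vert\,Q_A Q_C)=\sum_{\mathbf{x}_{AC}}P_{AC}\bigl(\log P_{AC}-\log Q_A Q_C\bigr)\le 2\epsilon$: the true marginals minimize the relative entropy over product distributions, and the summand is bounded pointwise. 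One further small inaccuracy: in the general setting of Eq.~\eqref{eq:pinned_hamiltonian} the pinning terms live on hyperedges, so exciting $|\gamma|$ sites only guarantees triggering $|\gamma|/\mathfrak{d}$ pinning penalties; your activity bound should carry $e^{-p_{\min}|\gamma|/\mathfrak{d}}$ rather than $e^{-p_{\min}|\gamma|}$, which is exactly where the overall factor of $\mathfrak{d}$ in the stated threshold $p_{\min,c}$ comes from.
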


\subsection{Outline of the Proof}\label{sec:proof_outline}
We first outline the main steps of the proof. We write down the following \emph{un-normalized} distribution on $AC$ by summing over all spins in $B$:
\begin{equation}\label{eq:tilde_P_definition}
    \tilde{P}(\mathbf{x}_{AC}) = \sum_{\mathbf{x}_B} e^{-\sum_{a \in \mathcal{G}} h_a(\mathbf{x}_a)+ p_a(x_a)}
\end{equation}

The main technical step is to show that $\tilde{P}(\mathbf{x}_{AC})$ factorizes as follows:
\begin{equation}\label{eq:tilde_P_factorization}
    \tilde{P}(\mathbf{x}_{AC}) = Z_{A}(\mathbf{x}_A) Z_{C}(\mathbf{x}_C) \exp (F_{AC}(\mathbf{x}_A, \mathbf{x}_C))
\end{equation}
Where $Z_{A}(\mathbf{x}_A)$ depends only on $\mathbf{x}_A$, $Z_{C}(\mathbf{x}_C)$ depends only on $\mathbf{x}_C$, and $F_{AC}(\mathbf{x}_{AC})$ is a function that depends on both $\mathbf{x}_A$ and $\mathbf{x}_C$, but $|F_{AC}(\mathbf{x}_{AC})|$ decays exponentially with the distance between $A$ and $C$. We will dedicate the rest of this section to prove this factorization and the decay of $F_{AC}(\mathbf{x}_{AC})$.

For now we assume the above factorization and show how it implies the decay of CMI. Let $\bar{Z}_A = \sum_{\mathbf{x}_A} Z_A(\mathbf{x}_A)$ and $\bar{Z}_C = \sum_{\mathbf{x}_C} Z_C(\mathbf{x}_C)$ be the normalization constants. We define two normalized marginal distributions $Q(\mathbf{x}_A)$ and $Q(\mathbf{x}_C)$ as follows:
\begin{align}
    Q_A(\mathbf{x}_A) &= Z_A(\mathbf{x}_A) / \bar{Z}_A \\
    Q_C(\mathbf{x}_C) &= Z_C(\mathbf{x}_C) / \bar{Z}_C
\end{align}

We now normalize $\tilde{P}(\mathbf{x}_{AC})$ to obtain $P(\mathbf{x}_{AC})$. We define the normalization constant $\bar{P}$ as follows:
\begin{align}
    \bar{P} &= \sum_{\mathbf{x}_{AC}} \tilde{P}(\mathbf{x}_{AC}) \\
    &= \bar{Z}_A \bar{Z}_C \sum_{\mathbf{x}_{A}} \sum_{\mathbf{x}_{C}} Q_A(\mathbf{x}_A) Q_C(\mathbf{x}_C) \exp(F_{AC}(\mathbf{x}_A, \mathbf{x}_C))
\end{align}
We recognize the last term as taking the average value of $\exp(F_{AC}(\mathbf{x}_A, \mathbf{x}_C))$ over the product distribution $Q_A(\mathbf{x}_A) Q_C(\mathbf{x}_C)$.Suppose for any choice of $\mathbf{x}_{AC}$, $|F_{AC}(\mathbf{x}_{AC})| \leq \epsilon$, where $\epsilon$ is a small number that decays exponentially with $d_{AC}$. Then, we have
\begin{equation}
    |\log(\bar{P}) - \log(\bar{Z}_A) - \log(\bar{Z}_C)| \leq \epsilon
\end{equation}

now we can write down the normalized distribution $P(\mathbf{x}_{AC})$ as follows:
\begin{equation}
    P(\mathbf{x}_{AC}) = \frac{\tilde{P}(\mathbf{x}_{AC})}{\bar{P}} = \frac{\bar{Z_A} \bar{Z}_C}{\bar{P}} \left( Q_A(\mathbf{x}_A) Q_C(\mathbf{x}_C) \exp(F_{AC}(\mathbf{x}_{AC})) \right)
\end{equation}
We can compare $\log(P(\mathbf{x}_{AC}))$ to $\log(Q_A(\mathbf{x}_A) Q_C(\mathbf{x}_C))$ as follows:
\begin{equation}
    |\log(P(\mathbf{x}_{AC})) - \log(Q_A(\mathbf{x}_A) Q_C(\mathbf{x}_C))| \leq \left| \log\left(\frac{\bar{Z}_A \bar{Z}_C}{\bar{P}}\right) \right| + |F_{AC}(\mathbf{x}_{AC})| \leq 2\epsilon
\end{equation}
This shows that $P(\mathbf{x}_{AC})$ is close to the product distribution $Q_A(\mathbf{x}_A) Q_C(\mathbf{x}_C)$ with an \emph{additive} error $2\epsilon$ in the log distribution, implying a \emph{multiplicative} error in the distribution itself.

Finally, we can use the above bound to show the decay of mutual information.
\begin{equation}
    I_P(A:C) =S(P_{AC} \| P_A P_C) \leq S(P_{AC} \| Q_A Q_C)
\end{equation}
Where $S(P||Q)$ is the relative entropy between distributions $P$ and $Q$, defined As $S(P||Q) = \sum_x P(x) \log(P(x)/Q(x))$. We use $P_{AC}$ as a short-handed notation of $P(\mathbf{x}_{AC})$. $Q_A$ and $Q_C$ are similar short-handed notations. $P_A$ and $P_C$ denote the marginal distributions on $A$ and $C$, respectively. $S(\cdot \| \cdot)$ is the relative entropy. The above inequality follows from the fact that the actuall marginals $P_A$ and $P_C$ minimize the relative entropy. We explicitly evaluate the relative entropy
\begin{align}
    S(P_{AC} \| Q_A Q_C) &= \sum_{\mathbf{x}_{AC}} P_{AC} \left( \log(P_{AC}) - \log(Q_A Q_C) \right) \le 2 \epsilon
\end{align}
Since $\epsilon$ decays exponentially with $d_{AC}$, we have shown the decay of mutual information.

One can see that the main technical step is to prove the factorization in Eq.~(\ref{eq:tilde_P_factorization}) and the decay of $F_{AC}(\mathbf{x}_{AC})$. We dedicate the rest of this section to prove this factorization.

\subsection{Preliminaries}
We start by rewriting the unnoramlized distribution $\tilde{P}(\mathbf{x}_{AC})$ in Eq.~(\ref{eq:tilde_P_definition}) as follows:
\begin{equation}
    \tilde{P}(\mathbf{x}_{AC}) = Z_0 \sum_{D \subset B} Z_D
\end{equation}
Where $D$ runs over all subsets of $B$. $Z_0$ is defined as the partition function but setting all sites in $B$ to zero:
\begin{equation}
    Z_0 = e^{-\sum_{a \in \mathcal{G}} h_a(\mathbf{x}_a)+ p_a(\mathbf{x}_a)}  \Bigr|_{\mathbf{x}_B = 0}
\end{equation}
And $Z_D$ is defined as the partition function with sites in $D \subset B$ unpinned:
\begin{equation}
    Z_D = \frac{1}{Z_0} \sum_{\substack{\mathbf{x}_D \\ x_i \neq 0, \forall x_i \in \mathbf{x}_D}} e^{-\sum_{a\in\mathcal{G}} h_a(\mathbf{x}_a)+ p_a(\mathbf{x}_a)}
    \end{equation}
Where $\mathbf{x}_D$ runs over all configurations on $D$ where none of the sites are set to zero. Note that when $D$ is empty, $Z_D = 1$. Also, $Z_0$ and $Z_D$ always depend on $\mathbf{x}_{AC}$, and we omit this dependence in the notation for simplicity.

We first note that in the geometry where $B$ separates $A$ and $C$, $Z_0$ factorizes as follows:
\begin{proposition}
\label{prop:Z_0_factorization}
In the geometry where $B$ separates $A$ and $C$ in the sense that one cannot find a path of hyperedges connecting $A$ and $C$, $Z_0$ factorizes as follows:
\begin{equation}
    Z_0 = Z_{0,A} Z_{0,C}
\end{equation}
Where $Z_{0,A}$ only depends on $\mathbf{x}_A$ and $Z_{0,C}$ only depends on $\mathbf{x}_C$.
\end{proposition}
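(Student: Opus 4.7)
The plan is to exploit the separation hypothesis directly at the level of the hyperedge sum in the exponent of $Z_0$. First I would observe that since $B$ separates $A$ and $C$, no hyperedge $a \in \mathcal{G}$ can simultaneously contain a vertex in $A$ and a vertex in $C$: such a hyperedge would itself constitute a length-one path between $A$ and $C$ that avoids $B$, contradicting the separation assumption. Together with the fact that $A$, $B$, $C$ are pairwise disjoint and exhaust the system, this means every hyperedge falls into exactly one of three disjoint classes:
\begin{equation}
\mathcal{G} = \mathcal{G}_A \sqcup \mathcal{G}_C \sqcup \mathcal{G}_{BB},
\end{equation}
where $\mathcal{G}_A$ collects hyperedges contained in $A \cup B$ with at least one vertex in $A$, $\mathcal{G}_C$ those contained in $B \cup C$ with at least one vertex in $C$, and $\mathcal{G}_{BB}$ those contained entirely in $B$.

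Next I would substitute $\mathbf{x}_B = 0$ and inspect each class. For $a \in \mathcal{G}_A$, the term $h_a(\mathbf{x}_a) + p_a(\mathbf{x}_a)$ depends only on $\mathbf{x}_{a \cap A} \subseteq \mathbf{x}_A$; for $a \in \mathcal{G}_C$ it depends only on $\mathbf{x}_C$; and for $a \in \mathcal{G}_{BB}$ every argument is frozen to zero, leaving a pure constant. Splitting the exponent accordingly gives
\begin{equation}
\sum_{a \in \mathcal{G}} \bigl(h_a(\mathbf{x}_a) + p_a(\mathbf{x}_a)\bigr)\Big|_{\mathbf{x}_B = 0} = E_A(\mathbf{x}_A) + E_C(\mathbf{x}_C) + E_0,
\end{equation}
with $E_A$, $E_C$, $E_0$ the partial sums over the three classes.

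Finally I would absorb the additive constant into one factor and set
\begin{equation}
Z_{0,A}(\mathbf{x}_A) = e^{-E_A(\mathbf{x}_A) - E_0}, \qquad Z_{0,C}(\mathbf{x}_C) = e^{-E_C(\mathbf{x}_C)},
\end{equation}
giving the claimed factorization $Z_0 = Z_{0,A}(\mathbf{x}_A)\, Z_{0,C}(\mathbf{x}_C)$. There is no genuine obstacle here; the proposition is essentially a bookkeeping statement, and the only step that requires a moment of care is the observation that the hypergraph notion of separation forbids any single hyperedge from spanning both $A$ and $C$, which follows immediately from the path definition in $\mathcal{G}$ by treating such a hyperedge as its own length-one path.
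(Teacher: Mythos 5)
Your proposal is correct and follows essentially the same route as the paper: the paper splits $H = H_{AB} + H_{BC}$ (possible precisely because separation forbids any hyperedge from touching both $A$ and $C$) and evaluates at $\mathbf{x}_B = 0$, which is your three-class decomposition with the purely-$B$ terms absorbed into one factor. Your version merely spells out the bookkeeping more explicitly; no substantive difference.
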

\begin{proof}
    We break the Hamiltonian into two parts: $H_{AB}$ that only acts on $A$ and $B$, $H_{BC}$ that only acts on $B$ and $C$. This is possible because $B$ separates $A$ and $C$. Therefore, we have
    \begin{equation}
        Z_0 = e^{-H_{AB}(\mathbf{x}_{AB})} e^{-H_{BC}(\mathbf{x}_{BC})} \Bigr|_{\mathbf{x}_B = 0} = Z_{0,A} Z_{0,C}
    \end{equation}
    Where $Z_{0,A} = e^{-H_{AB}(\mathbf{x}_{AB})} \Bigr|_{\mathbf{x}_B = 0} $ and $Z_{0,C} = e^{-H_{BC}(\mathbf{x}_{BC})} \Bigr|_{\mathbf{x}_B = 0}$.
\end{proof}

Next, we make two observations about $Z_D$. The first observation is that $|Z_D|$ decays exponentially with $|D|$ when $p_{\min}$ is above a critical threshold.

\begin{lemma}
\label{lem:Z_D_bound}
For any $D \subset B$ and for any $\mathbf{x}_{AC}$, $Z_D$ is bounded as follows:
\begin{equation}
    |Z_D| \leq e^{-\delta |D|}
\end{equation}
Where $\delta = p_{\min}/\mathfrak{d} - \mathfrak{d} h_{\max} - \log(q-1)$.
\end{lemma}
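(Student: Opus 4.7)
The plan is a direct energy-entropy estimate: I will dominate $Z_D$ term-by-term by comparing the Boltzmann weight of a configuration in which the sites of $D$ are non-zero (and $B\setminus D$ is zero) against the all-zero reference on $B$, and then sum over the $(q-1)^{|D|}$ admissible configurations on $D$. Since the definition of $Z_D$ already divides out $Z_0$, only hyperedges that touch $D$ contribute to the energy difference, so the estimate decouples cleanly into a contribution per site of $D$.

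More concretely, I would first write
\begin{equation}
    Z_D = \sum_{\substack{\mathbf{x}_D\\ x_i\neq 0}} \exp\!\bigl(-\Delta H(\mathbf{x}_D;\mathbf{x}_{AC})\bigr),
\end{equation}
where $\Delta H$ is the change in $\sum_a h_a + \sum_a p_a$ under the update $\mathbf{x}_B = 0 \mapsto (\mathbf{x}_D,\mathbf{x}_{B\setminus D} = 0)$ at fixed $\mathbf{x}_{AC}$, and decompose $\Delta H = \Delta H_{\text{int}} + \Delta H_{\text{pin}}$. For the interaction piece, only hyperedges intersecting $D$ can change, and there are at most $\mathfrak{d}|D|$ of them since each vertex lies in at most $\mathfrak{d}$ hyperedges; each such hyperedge obeys $0 \le h_a \le h_{\max}$, so $|\Delta H_{\text{int}}| \le \mathfrak{d}\, h_{\max}\,|D|$. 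For the pinning piece, every hyperedge $a\subseteq B$ that meets $D$ acquires at least one non-zero site, hence contributes $p_a \ge p_{\min}$ in the updated configuration and exactly $0$ at the reference; a short double counting of (site, hyperedge) incidences --- each site of $D$ is contained in at least one and at most $\mathfrak{d}$ hyperedges lying entirely in $B$ --- produces $\Delta H_{\text{pin}} \ge (p_{\min}/\mathfrak{d})|D|$.

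Combining the three estimates,
\begin{equation}
    |Z_D| \le (q-1)^{|D|} \exp\!\bigl(\mathfrak{d}\, h_{\max}\,|D| - (p_{\min}/\mathfrak{d})\,|D|\bigr) = e^{-\delta|D|},
\end{equation}
which is the claim, and the estimate is manifestly uniform in $\mathbf{x}_{AC}$ because the $\mathbf{x}_{AC}$-dependence of $Z_0$ and of the summand cancels on every hyperedge disjoint from $B$, while on the remaining hyperedges it is already absorbed into the same $h_{\max}$ bound. The only non-routine ingredient is the pinning lower bound: it tacitly uses that every site of $B$ lies in some hyperedge contained entirely in $B$, so that each site of $D$ actually picks up a $p_{\min}$ contribution. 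This is automatic in the annular geometry in which the lemma is applied, because $B$ is taken thicker than the interaction range; if one wished to drop this mild geometric assumption, the ratio in $\delta$ would only weaken by an $O(1)$ amount. Otherwise the proof is a routine bookkeeping exercise, and I expect the double counting to be the only step worth writing out in detail.
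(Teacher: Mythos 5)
Your proposal is correct and follows essentially the same route as the paper: a term-by-term comparison of each flipped configuration against the all-zero reference on $B$, with the pinning penalty lower-bounded by $(p_{\min}/\mathfrak{d})|D|$, the interaction gain upper-bounded by $\mathfrak{d}\,h_{\max}|D|$, and the entropic factor $(q-1)^{|D|}$ absorbed into $\delta$. Your explicit remark that each site of $D$ must lie in some pinning hyperedge contained entirely in $B$ makes precise an assumption the paper leaves implicit, but otherwise the two arguments coincide.
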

\begin{proof}
    We initialize initialize $\mathbf{x}_B$ to all zeros and then flip all sites in $D$ to any non-zero values. The pinning term penalies an energy by at least $|D| p_{\min} / \mathfrak{d}$, since flipping $|D|$ sites must trigger at least $|D|/\mathfrak{d}$ pinning terms $p_a$. The interaction term can contribute at most $\mathfrak{d} |D| h_{\max}$ to the energy since each site belongs to at most $d$ hyperedges. Therefore, the total energy cost is at least $|D| p_{\min}/\mathfrak{d} - \mathfrak{d} |D| h_{\max}$. Let $\delta E = p_{\min}/\mathfrak{d} - \mathfrak{d} h_{\max}$. Therefore, for any $\mathbf{x}_{AC}$ we have
    \begin{equation}
\frac{
  \displaystyle
  \sum_{\substack{\mathbf{x}_D \\ x_i \neq 0\ \forall i\in D}}
  e^{-\sum_{a\in\mathcal{G}} h_a(\mathbf{x}_a)+ p_a(\mathbf{x}_a)}
}{
  \displaystyle
  e^{-\sum_{a\in\mathcal{G}} h_a(\mathbf{x}_a)+ p_a(\mathbf{x}_a)}\Bigr|_{\mathbf{x}_D = 0}
}
\;\le\;
(q-1)^{|D|}\, \exp\!\big(-|D|\,(p_{\min}/\mathfrak{d} -\mathfrak{d}\,h_{\max})\big).
\end{equation}
Therefore, we have
\begin{equation}
    |Z_D| \leq (q-1)^{|D|} e^{-|D| (p_{\min}/\mathfrak{d}  - \mathfrak{d} h_{\max})} \leq e^{-|D| (p_{\min}/\mathfrak{d} - \mathfrak{d} h_{\max} - \log(q-1))}
\end{equation}
\end{proof}

Next, we observe that when $D$ is formed by disconnected components, then $Z_D$ factorizes.
\begin{lemma}
\label{lem:Z_D_factorization}
If $D$ can be decomposed into disconnected components $D = D_1 \cup D_2$ where $D_1$ and $D_2$ induce two disconnected subgraphs in $\mathcal{G}$, then $Z_D$ factorizes as follows:
\begin{equation}
    Z_D = Z_{D_1} Z_{D_2}
\end{equation}
\end{lemma}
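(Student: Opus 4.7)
The plan is to partition the hyperedge set of $\mathcal{G}$ according to which component $D_i$ each hyperedge touches, and then use the hypothesis that $D_1$ and $D_2$ induce disjoint subgraphs to rule out hyperedges that touch both. Once this is done, both the Boltzmann weight and the constrained sum $\sum_{\mathbf{x}_D \neq 0}$ split cleanly into independent pieces on $D_1$ and on $D_2$, and the normalization by $Z_0$ will recombine these into the desired product $Z_{D_1} Z_{D_2}$.

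Concretely, I would introduce
\begin{align}
\mathcal{G}_i &= \{ a \in \mathcal{G} : a \cap D_i \neq \emptyset \}, \quad i = 1, 2, \\
\mathcal{G}_0 &= \mathcal{G} \setminus (\mathcal{G}_1 \cup \mathcal{G}_2),
\end{align}
and observe that by the disconnection hypothesis $\mathcal{G}_1 \cap \mathcal{G}_2 = \emptyset$. In the sum defining $Z_D$, the sites in $B \setminus D$ are held at zero (as in $Z_0$) and $\mathbf{x}_{AC}$ is held fixed, so each factor $e^{-h_a - p_a}$ with $a \in \mathcal{G}_i$ depends on $\mathbf{x}_D$ only through $\mathbf{x}_{D_i}$ (for $i = 1, 2$), while factors with $a \in \mathcal{G}_0$ do not depend on $\mathbf{x}_D$ at all. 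Let $W_i$ denote the product of the $\mathcal{G}_i$-factors evaluated at $\mathbf{x}_{D_i} = 0$, and $S_i = \sum_{\mathbf{x}_{D_i} \neq 0} \prod_{a \in \mathcal{G}_i} e^{-h_a - p_a}$ the partial sum over non-zero configurations on $D_i$.

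Then directly from the definitions one reads off $Z_0 = W_0 W_1 W_2$, $Z_D \cdot Z_0 = W_0 S_1 S_2$, $Z_{D_1} \cdot Z_0 = W_0 W_2 S_1$, and $Z_{D_2} \cdot Z_0 = W_0 W_1 S_2$. Multiplying the last two identities and using $Z_0 = W_0 W_1 W_2$ gives $Z_{D_1} Z_{D_2} Z_0^2 = W_0^2 W_1 W_2 S_1 S_2 = W_0 S_1 S_2 \cdot Z_0$, so $Z_{D_1} Z_{D_2} = W_0 S_1 S_2 / Z_0 = Z_D$, which is the claim.

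I do not expect a serious obstacle: the argument is essentially the standard observation that a partition function factorizes over disjoint subsystems, promoted here to a ratio. The only point to watch is that a hyperedge in $\mathcal{G}_i$ may still extend into $A$, $C$, or $B \setminus D$, but since those additional sites are either frozen by the choice of $\mathbf{x}_{AC}$ or pinned to zero, the dependence on the summation variables $\mathbf{x}_D$ stays compartmentalized within $D_i$. The disconnection hypothesis is exactly what prevents a hyperedge from straddling $D_1$ and $D_2$, which would otherwise couple the two sums and spoil the factorization.
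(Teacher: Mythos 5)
Your proposal is correct and follows essentially the same route as the paper: both arguments rest on the observation that the disconnection hypothesis forbids any hyperedge from touching both $D_1$ and $D_2$, so the Boltzmann weight and the constrained sum over $\mathbf{x}_D$ factor across the two components, with the $Z_0$ normalizations recombining into $Z_{D_1}Z_{D_2}$. The paper simply cancels the $\mathbf{x}_D$-independent factors between numerator and denominator first and then splits the remaining ratio, whereas you track all factors ($W_0, W_1, W_2, S_1, S_2$) and verify the identity algebraically; the content is the same.
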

\begin{proof}
    Let $H_D$ be the Hamiltonian components that act on $D$.We first rewrite $Z_D$ in terms of a Hamiltonian that only acts on $D$ and its neighborhood:
    \begin{equation}
        Z_D = \frac{\displaystyle \sum_{\substack{\mathbf{x}_{D} \\ x_{i}\neq 0\ \forall i\in D}}
    \displaystyle  e^{-\sum_{a \, \text{supp}\, D} h_{a}(\mathbf{x}_a) + p_{a}(\mathbf{x}_{a})}}{
    \displaystyle e^{-\sum_{a \, \text{supp}\, D}h_{a}(\mathbf{x}_a) + p_{a}(\mathbf{x}_{a})}\Bigr|_{\mathbf{x}_{D}=0}}
    \end{equation}
    Where $\mathbf{x}_{\bar{D}}$ is the collection of spins seen by $H_D$ which includes $\mathbf{x}_D$ and its neighborhood. Note that $\mathbf{x}_{\bar{D}}$ may contain spins in $AC$. $a \, \text{supp} \, D$ means that $a$ contains spins in $D$, and we sum over all such $a$. In human words, we have cancelled out all terms that do not depend on $\mathbf{x}_D$ in the fraction.

    Given that $D$ factorizes into two disconnected components $D_1$ and $D_2$, we can decompose $H_D$ into two parts that act on $D_1$ and $D_2$ separately: $H_D = H_{D_1} + H_{D_2}$. Correspondingly, we can decompose $\mathbf{x}_{\bar{D}}$ into two parts $\mathbf{x}_{\bar{D}_1}$ and $\mathbf{x}_{\bar{D}_2}$, where $\mathbf{x}_{\bar{D}_1}$ is the collection of spins seen by $H_{D_1}$ and $\mathbf{x}_{\bar{D}_2}$ is similarly defined. Therefore, we have
    \begin{equation}
        Z_D = \frac{\left(\displaystyle \sum_{\substack{\mathbf{x}_{D_{1}} \\ x_{i}\neq 0\ \forall i\in D_{1}}}
    \displaystyle  e^{-\sum_{a \, \text{supp}\, D_{1}} h_{a}(\mathbf{x}_a) + p_{a}(\mathbf{x}_{a})}\right)
    \left(\displaystyle \sum_{\substack{\mathbf{x}_{D_{2}} \\ x_{i}\neq 0\ \forall i\in D_{2}}}
    \displaystyle  e^{-\sum_{a \, \text{supp}\, D_{2}} h_{a}(\mathbf{x}_a) + p_{a}(\mathbf{x}_{a})}\right)}{\left(\displaystyle e^{-\sum_{a \, \text{supp}\, D_{1}}h_{a}(\mathbf{x}_a) + p_{a}(\mathbf{x}_{a})}\Bigr|_{\mathbf{x}_{D_{1}}=0}\right)
    \left(\displaystyle e^{-\sum_{a \, \text{supp}\, D_{2}}h_{a}(\mathbf{x}_a) + p_{a}(\mathbf{x}_{a})}\Bigr|_{\mathbf{x}_{D_{2}}=0}\right)}
    \end{equation}
    One can see that the above equation is exactly $Z_{D_1} Z_{D_2}$, which proves the lemma.
\end{proof}

We will exploit the exponential decay of $Z_D$ and the factorization property of $Z_D$ to write down a polymer expansion of $\tilde{P}(\mathbf{x}_{AC})$.
\subsection{Polymer Expansion}
We introduce the polymer expansion formally in this section. We first define a polymer as a connected component of $D$.

\begin{definition}[Polymers and compatibility]
A \emph{polymer} \(\gamma\subseteq\{1,\dots,n\}\) is a connected
subset of sites with respect to the interaction graph \(\mathcal G\).
Its \emph{weight}, denoted as $|\gamma|$, is the number of sites in \(\gamma\).
Two polymers are \emph{compatible} (\(\gamma\sim\gamma'\)) if they induce two disconnected subgraphs of \(\mathcal G\). Otherwise, they are \emph{incompatible} (\(\gamma\not\sim\gamma'\)).
\end{definition}

Any subset $D \subset B$ can be uniquely decomposed into a collection of mutually compatible polymers $\Gamma = \{\gamma_1, \gamma_2, \ldots, \gamma_k\}$ such that $D = \cup_{i=1}^k \gamma_i$. Therefore, we can rewrite $\tilde{P}(\mathbf{x}_{AC})$ as follows:

\begin{equation}\label{eq:polymer_expansion_P}
    \tilde{P}(\mathbf{x}_{AC}) = Z_0 \sum_{\Gamma \text{ compatible}} \prod_{\gamma \in \Gamma} Z_\gamma
\end{equation}
Where the sum runs over all collections of mutually compatible polymers $\Gamma$. $Z_\gamma$ is defined as $Z_D$ but with $D$ replaced by $\gamma$.

Eq.~(\ref{eq:polymer_expansion_P}) in its current form does not converge beacuse the number of compatible polymer collections grows combinatorially. To see that, consider the polymer set $\Gamma$ that contains $k$ single-site polymers. The number of such sets grows as $\binom{|B|}{k}$, which overwhelms the exponential decay of $Z_\gamma$.

To derive a convergent expansion, an important observation is that partition functions is not a stable object in the following sense: adding or removing a single site changes the partition function by a multiplicative factor. On the other hand, free energy $F = \log(Z)$ is a stable object: adding or removing a single site changes the free energy by an additive factor. Therefore, we will write down a polymer expansion for $\log( \tilde{P}(\mathbf{x}_{AC}))$ instead of $\tilde{P}(\mathbf{x}_{AC})$. 

We first define the notion of clusters to simplify notations.
We can also write it as cluster expansion.
\begin{definition}[Clusters]
    A \emph{cluster} $\mathbf{W}$ is a set of tuples $\{(\gamma_1, \mu_1), (\gamma_2, \mu_2), \ldots, (\gamma_m, \mu_m)\}$ where $\gamma_i$ are distinct polymers and $\mu_i$ are positive integers that denote the multiplicity of each polymer. We also Define
    \begin{enumerate}
        \item $|\mathbf{W}| = \sum_{i=1}^m \mu_i |\gamma_i|$ as the cluster weight.
        \item $\mu_\mathbf{W} = \sum_{i=1}^m \mu_i$
        \item $\lambda_{\mathbf{W},L} = \prod_{i=1}^m (\lambda_{\gamma_i,L})^{\mu_i}$
        \item $Z_{\mathbf{W}} = \prod_{i=1}^m (Z_{\gamma_i})^{\mu_i}$
        \item $\mathbf{W}! = \prod_{i=1}^m \mu_i!$
    \end{enumerate}
\end{definition}

Cluster also carries a notion of connectedness. We will define the interaction graph of a cluster to formalize this notion.

\begin{definition}[Cluster Interaction Graph and Connectedness]
    Given a cluster $\mathbf{W} = \{(\gamma_1, \mu_1), (\gamma_2, \mu_2), \ldots, (\gamma_m, \mu_m)\}$, we define the interaction graph $G_{\mathbf{W}}$ as follows: each vertex in $G_{\mathbf{W}}$ corresponds to a polymer $\gamma_i$ in $\mathbf{W}$. There are $\mu_i$ vertices corresponding to $\gamma_i$. Two vertices are connected by an edge if the corresponding polymers are compatible. The cluster $\mathbf{W}$ is said to be connected if $G_{\mathbf{W}}$ is connected.
\end{definition}

With this definition in hand, we quote the expansion for $\log( \tilde{P}(\mathbf{x}_{AC}))$ from standard references below (See Chapter 5 of \cite{friedli2017statistical}).

\begin{lemma}
\label{lem:polymer_expansion_free_energy}
Eq.~(\ref{eq:polymer_expansion_P}) can be reorganized into the following expansion for $\log( \tilde{P}(\mathbf{x}_{AC}))$:
\begin{equation}
    \log( \tilde{P}(\mathbf{x}_{AC})) = \log(Z_0) + \sum_{\mathbf{W} \text{ connected}} \frac{\phi(G_{\mathbf{W}})}{\mathbf{W}!} Z_{\mathbf{W}}
\end{equation}
Where the sum runs over all connected clusters $\mathbf{W}$. $\phi(G_{\mathbf{W}})$ is the Ursell function of the interaction graph $G_{\mathbf{W}}$ defined as follows. When $\mu_\mathbf{W} = 1$, $\phi(G_{\mathbf{W}}) = 1$. When $\mu_\mathbf{W} \geq 2$,
\begin{equation}
    \phi(G_{\mathbf{W}}) = \sum_{C \subseteq G_{\mathbf{W}} \text{ spanning tree}} (-1)^{|E(C)|}
\end{equation}
Where the sum runs over all spanning trees $C$ of $G_{\mathbf{W}}$ and $|E(C)|$ is the number of edges in the spanning tree.
\end{lemma}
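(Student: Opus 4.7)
The plan is to establish this identity via the classical Mayer/cluster expansion, following Chapter 5 of Friedli--Velenik. Since $\tilde P = Z_0 \cdot \Xi$ with $\Xi := \sum_{\Gamma \text{ compatible}} \prod_{\gamma \in \Gamma} Z_\gamma$, it suffices to show that $\log \Xi$ equals the claimed cluster sum; taking logarithms then peels off the $\log Z_0$ term. The derivation proceeds in three steps: rewrite $\Xi$ as an unconstrained sum over ordered tuples of polymers with Mayer-style indicator weights; apply the exponential formula to identify $\log \Xi$ with a sum restricted to connected Mayer graphs; then re-collect ordered tuples into clusters and recognize the residual graph sum as $\phi(G_{\mathbf{W}})$.

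For the first step, I define the compatibility weight $\zeta(\gamma, \gamma') := \mathbf{1}[\gamma \sim \gamma']$ for $\gamma \neq \gamma'$ and $\zeta(\gamma, \gamma) := 0$, so that
\begin{equation}
\Xi = \sum_{n \geq 0} \frac{1}{n!} \sum_{(\gamma_1, \ldots, \gamma_n)} \Big(\prod_{i=1}^n Z_{\gamma_i}\Big) \prod_{1 \leq i < j \leq n} \zeta(\gamma_i, \gamma_j),
\end{equation}
the factor $1/n!$ accounting for orderings of the set $\Gamma$ and $\zeta(\gamma,\gamma)=0$ killing any repetitions. Substituting $\zeta = 1 + f$ with $f := \zeta - 1 \in \{-1, 0\}$ and expanding the pair-product yields an unconstrained sum indexed by graphs $H$ on vertex set $[n]$,
\begin{equation}
\Xi = \sum_{n \geq 0} \frac{1}{n!} \sum_{(\gamma_i)} \prod_i Z_{\gamma_i} \sum_{H \subseteq K_n} \prod_{\{i,j\} \in E(H)} f(\gamma_i, \gamma_j).
\end{equation}
The inner weight factorizes over the connected components of $H$, so $\Xi$ has the structure of an exponential generating function; the classical exponential formula then gives
\begin{equation}
\log \Xi = \sum_{n \geq 1} \frac{1}{n!} \sum_{(\gamma_i)} \prod_i Z_{\gamma_i} \sum_{\substack{H \subseteq K_n \\ \text{connected}}} \prod_{\{i,j\} \in E(H)} f(\gamma_i, \gamma_j).
\end{equation}

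For the final step, I group ordered tuples $(\gamma_1, \ldots, \gamma_n)$ by their underlying multiset: a cluster $\mathbf{W}$ with distinct polymers of multiplicities $(\mu_1, \ldots, \mu_m)$ arises from exactly $n!/\mathbf{W}!$ tuples. Moreover, since $f$ vanishes on compatible pairs and equals $-1$ on incompatible pairs, the only connected graphs $H$ contributing are the connected spanning subgraphs of the cluster interaction graph $G_\mathbf{W}$, each weighted by $(-1)^{|E(H)|}$. This reorganizes the sum into $\sum_{\mathbf{W} \text{ connected}} \frac{\phi(G_\mathbf{W})}{\mathbf{W}!} Z_\mathbf{W}$, matching the stated formula.

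The main subtle point is justifying the exponential-formula manipulation: the identities must first be understood as equalities of formal power series in the polymer weights $\{Z_\gamma\}$, and their interpretation as equalities of numbers requires absolute convergence of the cluster sum, which in turn uses the bound $|Z_D| \leq e^{-\delta|D|}$ from Lemma~\ref{lem:Z_D_bound} together with a standard combinatorial bound on the number of connected clusters of a given weight. A minor additional step is reconciling the paper's spanning-tree definition of $\phi(G_\mathbf{W})$ with the connected-spanning-subgraph expression that emerges naturally above; this is a standard application of the Penrose tree identity (or of deletion--contraction) and is carried out in the cited reference.
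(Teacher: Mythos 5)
Your proposal is correct and follows the standard Mayer-expansion derivation; the paper itself offers no proof of this lemma, simply quoting the result from Chapter 5 of Friedli--Velenik, so your argument is exactly the fleshed-out version of what the paper relies on. Your closing caveat is also apt: for $\phi(G_{\mathbf{W}})$ to be the Ursell function that actually emerges from the exponential formula, the lemma's ``spanning tree'' should read ``connected spanning subgraph'' (a sum of $(-1)^{|E(C)|}$ over genuine spanning trees would just be $(-1)^{n-1}$ times their number), and the cluster graph's edges should join \emph{incompatible} polymers.
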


After writing down the cluster expansion of $\log( \tilde{P}(\mathbf{x}_{AC}))$, we exploit the locality of connected clusters to decompose $\log( \tilde{P}(\mathbf{x}_{AC}))$ into three parts: a part that only depends on $\mathbf{x}_A$, a part that only depends on $\mathbf{x}_C$, and a part that depends on both $\mathbf{x}_A$ and $\mathbf{x}_C$.
\begin{lemma}
\label{lem:free_energy_decomposition}
The cluster expansion of $\log( \tilde{P}(\mathbf{x}_{AC}))$ can be decomposed as follows:
\begin{equation}
    \log( \tilde{P}(\mathbf{x}_{AC})) = \log(Z_{0,A}) + \log(Z_{0,C}) + F_{\emptyset} + F_A(\mathbf{x}_A) + F_C(\mathbf{x}_C) + F_{AC}(\mathbf{x}_{AC})
\end{equation}
Where $F_{\emptyset}$ is independent of $\mathbf{x}_{AC}$, $F_A(\mathbf{x}_A)$ only depends on $\mathbf{x}_A$, $F_C(\mathbf{x}_C)$ only depends on $\mathbf{x}_C$, and $F_{AC}(\mathbf{x}_{AC})$ depends on both $\mathbf{x}_A$ and $\mathbf{x}_C$. They are defined as follows:
\begin{align}
    F_{\emptyset} &= \sum_{\substack{\mathbf{W} \text{ connected} \\ \mathbf{W} \sim AC}} \frac{\phi(G_{\mathbf{W}})}{\mathbf{W}!} Z_{\mathbf{W}} \\
    F_A(\mathbf{x}_A) &= \sum_{\substack{\mathbf{W} \text{ connected} \\ \mathbf{W} \sim C}} \frac{\phi(G_{\mathbf{W}})}{\mathbf{W}!} Z_{\mathbf{W}} \\
    F_C(\mathbf{x}_C) &= \sum_{\substack{\mathbf{W} \text{ connected} \\ \mathbf{W} \sim A}} \frac{\phi(G_{\mathbf{W}})}{\mathbf{W}!} Z_{\mathbf{W}} \\
    F_{AC}(\mathbf{x}_{AC}) &= \sum_{\substack{\mathbf{W} \text{ connected} \\ \mathbf{W} \not\sim A,C}} \frac{\phi(G_{\mathbf{W}})}{\mathbf{W}!} Z_{\mathbf{W}}
\end{align}
Where $\mathbf{W} \sim A$ denotes clusters in which all polymers in $\mathbf{W}$ are compatible with $A$. $\mathbf{W} \sim C$ and $\mathbf{W} \sim AC$ are similarly defined. $\mathbf{W} \not\sim A,C$ denotes clusters in which at least one polymer in $\mathbf{W}$ is incompatible with $A$ and at least one polymer in $\mathbf{W}$ is incompatible with $C$.

Moreover, let $d_{AC}$ be the distance between $A$ and $C$ in $\mathcal{G}$. Then, the minimal cluster weight $|\mathbf{W}|$ in the expansion of $F_{AC}(\mathbf{x}_{AC})$ is at least $d_{AC}$.
\end{lemma}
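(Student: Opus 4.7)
The plan is to deduce the decomposition directly from the cluster expansion of Lemma \ref{lem:polymer_expansion_free_energy} by reorganizing the sum according to how each cluster interacts geometrically with $A$ and $C$. First, I would split off $\log Z_0 = \log Z_{0,A} + \log Z_{0,C}$ using Proposition \ref{prop:Z_0_factorization}. The remainder is a sum over connected clusters $\mathbf{W}$ of polymers in $B$, and the key observation is that $Z_\gamma$ only involves Hamiltonian terms $h_a$ whose hyperedge $a$ intersects $\gamma$; hence $Z_\gamma$, and by multiplicativity $Z_\mathbf{W}$, depends on $\mathbf{x}_A$ (respectively $\mathbf{x}_C$) only if some polymer in $\mathbf{W}$ is incompatible with $A$ (respectively $C$), i.e., shares a hyperedge with $A$ (respectively $C$).

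Given this, I would partition the set of connected clusters into four disjoint classes according to the pair of compatibility conditions (with $A$, with $C$). The class in which every polymer is compatible with both $A$ and $C$ contributes terms independent of $\mathbf{x}_{AC}$ and defines $F_\emptyset$. The class in which every polymer is compatible with $C$ but at least one is incompatible with $A$ gives $Z_\mathbf{W}$ depending only on $\mathbf{x}_A$, defining $F_A(\mathbf{x}_A)$; swapping roles gives $F_C(\mathbf{x}_C)$. All remaining clusters (incompatible with both) yield $F_{AC}(\mathbf{x}_{AC})$. This is the natural partition the statement of the lemma describes, and the decomposition is then exact by construction. I take the notation $\mathbf{W} \sim C$ in the definition of $F_A$ to implicitly exclude the doubly-compatible clusters already absorbed into $F_\emptyset$, so that the four pieces form a true partition.

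For the geometric bound $|\mathbf{W}| \ge d_{AC}$ on clusters contributing to $F_{AC}$, I would argue as follows. Such a cluster contains a polymer $\gamma_A$ sharing a hyperedge with $A$ and a polymer $\gamma_C$ sharing a hyperedge with $C$. Since $G_\mathbf{W}$ is connected, there is a chain $\gamma_A = \gamma_{i_0}, \gamma_{i_1}, \ldots, \gamma_{i_k} = \gamma_C$ of polymers in $\mathbf{W}$ in which consecutive elements share a hyperedge. The union of the sites of this chain, together with the connecting hyperedges, forms a connected subgraph of $\mathcal{G}$ that meets a hyperedge touching $A$ and a hyperedge touching $C$. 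Because $B$ separates $A$ from $C$, this subgraph must contain a path in $\mathcal{G}$ from $A$ to $C$, and hence include at least $d_{AC}$ vertices in $B$. Counting each polymer with its multiplicity $\mu_i$ only inflates this bound, so $|\mathbf{W}| = \sum_i \mu_i |\gamma_i| \ge d_{AC}$.

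The main obstacle I anticipate is a clean bookkeeping of this last geometric argument: I need to make sure that ``$\gamma_{i_j}$ shares a hyperedge with $\gamma_{i_{j+1}}$'' rigorously implies that $\gamma_{i_j} \cup \gamma_{i_{j+1}}$ is connected in the standard graph-distance sense on $\mathcal{G}$, and that being anchored at $A$ or $C$ via a shared hyperedge costs at most an $O(1)$ additive shift (absorbable into the prefactor). The rest of the proof is a straightforward rearrangement of the convergent cluster sum, so I do not expect additional analytic difficulty here beyond what is already provided by Lemmas \ref{lem:Z_D_bound} and \ref{lem:Z_D_factorization}.
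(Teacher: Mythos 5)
Your proposal is correct and follows essentially the same route as the paper's proof: factorize $Z_0$ via Proposition \ref{prop:Z_0_factorization}, partition the connected clusters into the same four mutually exclusive compatibility classes, and obtain the bound $|\mathbf{W}|\ge d_{AC}$ by extracting a path of spins in $\mathcal{G}$ from the chain of pairwise-incompatible polymers linking $A$ to $C$. Your added remarks (that $Z_{\mathbf{W}}$ depends on $\mathbf{x}_A$ only through polymers incompatible with $A$, and that the $O(1)$ endpoint bookkeeping is absorbable into the prefactor) only make explicit what the paper leaves implicit.
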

\begin{proof}
    The factorization of $Z_0$ follows from Proposition \ref{prop:Z_0_factorization}. Next, we note that any connected cluster $\mathbf{W}$ must fall into one of the following four categories:
    \begin{enumerate}
        \item All polymers in $\mathbf{W}$ are compatible with $AC$.
        \item At least one polymer in $\mathbf{W}$ is incompatible with $A$ but all polymers in $\mathbf{W}$ are compatible with $C$.
        \item At least one polymer in $\mathbf{W}$ is incompatible with $C$ but all polymers in $\mathbf{W}$ are compatible with $A$.
        \item At least one polymer in $\mathbf{W}$ is incompatible with $A$ and at least one polymer in $\mathbf{W}$ is incompatible with $C$.
    \end{enumerate}
    Therefore, we can decompose the cluster expansion of $\log( \tilde{P}(\mathbf{x}_{AC}))$ into four parts according to the above three categories. Next, observe that for a conected cluster $\mathbf{W}$ in the third category, there must exist a path of polymers connecting $A$ and $C$ in the interaction graph $G_{\mathbf{W}}$. Since each polymer is connected in $\mathcal{G}$, one can always create a path from spins contained in $\mathbf{W}$ that connects $A$ and $C$. Therefore, the minimal cluster weight $|\mathbf{W}|$ in the expansion of $F_{AC}(\mathbf{x}_{AC})$ is at least $d_{AC}$.
\end{proof}

Lemma \ref{lem:free_energy_decomposition} establishes the desired factorization in Eq.~(\ref{eq:tilde_P_factorization}) with $Z_A(\mathbf{x}_A) = Z_{0,A} F_{\emptyset} \exp(F_A(\mathbf{x}_A))$, $Z_C(\mathbf{x}_C) = Z_{0,C} \exp(F_C(\mathbf{x}_C))$, and $F_{AC}(\mathbf{x}_{AC})$ defined as above (again we absorb $F_{\emptyset}$ into $Z_A$ without loss of generality). The last step is to show that $|F_{AC}(\mathbf{x}_{AC})|$ decays exponentially with $d_{AC}$ when $p_{\min}$ is above a critical threshold. This follows from the convergence of the polymer expansion, which we prove in the next section.

\subsection{Convergence of the Polymer Expansion}
We now prove that the polymer expansion for $\log( \tilde{P}(\mathbf{x}_{AC}))$ converges exponentially fast when $p_{\min}$ is above a critical threshold. This establishes the desired decay of $|F_{AC}(\mathbf{x}_{AC})|$ with $d_{AC}$. We use the Kotecky-Preiss condition which gives a blackox criterion for the convergence of polymer expansions.

\begin{lemma}[Kotecký–Preiss criterion for the cluster expansion \cite{kotecky1986cluster}]\label{lem:KP}
  If there exists two constants $a$, $b$ such that for every polymer $\gamma$ we have
\begin{equation}\label{eq:KPcondition}
  \sum_{\gamma':\;\gamma'\not\sim \gamma} |Z_{\gamma'}|\,e^{(a+b)|\gamma'|} \le a|\gamma|,
\end{equation}
then the expansion in Lemma \ref{lem:polymer_expansion_free_energy} converges absolutely. Moreover, for any spin $i$, we have the bound on the convergence:
\begin{equation}\label{eq:KPbound}
  \sum_{\substack{
    \text{connected }\, \mathbf{W} \not\sim i
  }} \left|\frac{\phi(G_{\mathbf{W}})}{\mathbf{W}!} \lambda_{\mathbf{W},L}\right| e^{b|\mathbf{W}|} \le a
\end{equation}
Where $\mathbf{W} \not\sim i$ denotes clusters in which at least one polymer in $\mathbf{W}$ contains site $i$.
\end{lemma}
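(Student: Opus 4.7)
The plan is to follow the classical inductive proof of the Kotecký--Preiss criterion (see, e.g., Chapter~5 of Friedli--Velenik), which is a black-box lemma in the current paper. The argument has four steps.

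First, I would invoke Penrose's tree-graph identity to rewrite the Ursell function $\phi(G_{\mathbf{W}})$ as a signed sum over spanning trees of the cluster incompatibility graph $G_{\mathbf{W}}$, obtaining the bound $|\phi(G_{\mathbf{W}})| \le \#\{\text{spanning trees in a distinguished subfamily}\}$. This converts each connected-cluster sum into a branching-process expression on polymers, in which the ``children'' of a polymer $\gamma$ are polymers $\gamma' \not\sim \gamma$.

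Second, I would introduce the anchored sum
\[
\Phi(\gamma) \;:=\; \sum_{\substack{\mathbf{W}\ \text{connected}\\ \mathbf{W}\ni\gamma}}
\left|\frac{\phi(G_{\mathbf{W}})}{\mathbf{W}!}\right|\,|Z_{\mathbf{W}}|\,e^{b|\mathbf{W}|},
\]
and combine the tree-graph bound with Cayley's counting of labelled rooted trees (which is what absorbs the $1/\mathbf{W}!$ factors against the tree multiplicities) to derive a self-consistent inequality of the form
\[
\Phi(\gamma)\;\le\;|Z_\gamma|\,e^{b|\gamma|}\,\exp\!\Bigl(\,\sum_{\gamma'\not\sim\gamma}\Phi(\gamma')\Bigr).
\]

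Third, I would prove by induction on the total cluster weight $|\mathbf{W}|$ that $\Phi(\gamma)\le|Z_\gamma|\,e^{(a+b)|\gamma|}$ for every polymer $\gamma$. The single-polymer base case is immediate; the inductive step substitutes the hypothesis into the self-consistent inequality and invokes the KP premise \eqref{eq:KPcondition}, namely $\sum_{\gamma'\not\sim\gamma}|Z_{\gamma'}|e^{(a+b)|\gamma'|}\le a|\gamma|$, to absorb the exponent. Finally, to reach the site bound \eqref{eq:KPbound}, I observe that every connected cluster $\mathbf{W}\not\sim i$ contains at least one polymer $\gamma\ni i$, so the target sum is dominated by $\sum_{\gamma\ni i}\Phi(\gamma)$; a last application of the KP premise, using the single-site polymer $\{i\}$ as the reference $\gamma$ (whose weight is $1$), gives $\sum_{\gamma\ni i}|Z_\gamma|e^{(a+b)|\gamma|}\le a$, closing the argument.

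The main obstacle is Step~2: establishing the self-consistent inequality with exactly the right combinatorial constants requires the full Penrose identity (a crude spanning-tree upper bound is insufficient to cancel the $1/\mathbf{W}!$ weights), and the interplay between those factorials and the Cayley-type rooted-tree counts is where the careful bookkeeping lives. Once that inequality is in place, Steps~3 and~4 are essentially mechanical, and a standard Cauchy-product argument then upgrades the termwise bound to absolute convergence of the full cluster expansion in Lemma~\ref{lem:polymer_expansion_free_energy}.
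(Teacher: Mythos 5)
The paper does not prove this lemma at all: it is imported verbatim as a black-box criterion, with the proof deferred to the cited literature (Koteck\'y--Preiss and Chapter~5 of Friedli--Velenik). Your proposal is therefore not an alternative to anything in the paper; it is a sketch of the standard textbook proof of the cited result, and as a sketch it is correct. The anchored quantity $\Phi(\gamma)$, the self-consistent inequality $\Phi(\gamma)\le |Z_\gamma|e^{b|\gamma|}\exp\bigl(\sum_{\gamma'\not\sim\gamma}\Phi(\gamma')\bigr)$, and the induction that substitutes the hypothesis \eqref{eq:KPcondition} to get $\Phi(\gamma)\le |Z_\gamma|e^{(a+b)|\gamma|}$ is exactly the canonical argument, and your step~4 correctly specializes the conclusion to the singleton polymer $\{i\}$ (whose weight is $1$) to obtain the right-hand side $a$ in \eqref{eq:KPbound}. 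Two small points of care. First, you correctly identify that all the real work is in establishing the self-consistent inequality via the Penrose/tree-graph bound on $|\phi(G_{\mathbf{W}})|$; a proof that stopped at the crude bound $|\phi|\le\#\{\text{spanning subgraphs}\}$ would not close, so this step cannot be waved through. Second, note a mismatch in the paper's own statement: the lemma defines $\mathbf{W}\not\sim i$ as ``some polymer \emph{contains} $i$,'' whereas the later application (Lemma~\ref{lem:F_AC_bound}, with $i\in\partial A$ or $\partial C$, outside $B$) needs ``some polymer is \emph{incompatible with} $\{i\}$''; your argument covers both readings, since polymers containing $i$ form a subset of polymers incompatible with $\{i\}$ and the premise \eqref{eq:KPcondition} applied to the reference polymer $\{i\}$ bounds the larger sum. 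Also, the $\lambda_{\mathbf{W},L}$ appearing in \eqref{eq:KPbound} is a typo for $Z_{\mathbf{W}}$, which is what you (correctly) bound.
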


To apply the Kotecký–Preiss criterion, we need to bound the left-hand side of Eq.~(\ref{eq:KPcondition}). This is done using Lemma \ref{lem:Z_D_bound}, which shows that $|Z_\gamma|$ decays exponentially with $|\gamma|$, and a combinatorial bound on the number of polymers of size $k$ that are incompatible with a given polymer $\gamma$.
\begin{lemma}
\label{lem:incompatible_polymers_bound}
For any polymer $\gamma$, the number of polymers $\gamma'$ of size $k$ that are incompatible with $\gamma$ is bounded as follows:
\begin{equation}
    N_{\gamma,k} \leq |\gamma| (e\mathfrak{d})^{k-1}
\end{equation}
\end{lemma}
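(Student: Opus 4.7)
My plan is to bound $N_{\gamma,k}$ by the standard combinatorial technique used in polymer-model convergence arguments: anchor each incompatible polymer to a canonical vertex in $\gamma$, and then invoke a tree-enumeration estimate for connected subgraphs of bounded size passing through a prescribed vertex.

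First I would observe that if $\gamma'\not\sim\gamma$, then by the definition of incompatibility there exists a hyperedge $a\in\mathcal{G}$ with $a\cap\gamma\neq\emptyset$ and $a\cap\gamma'\neq\emptyset$. Fixing an arbitrary linear order on the vertex set once and for all, let $v^*(\gamma')$ denote the smallest vertex in $\gamma$ that witnesses such an incompatibility. The augmented set $\tilde\gamma':=\gamma'\cup\{v^*(\gamma')\}$ is a connected subset of the interaction hypergraph of cardinality at most $k+1$, contains $v^*(\gamma')$, and determines $\gamma'$ uniquely together with $v^*(\gamma')$. This gives an injection
\begin{equation}
N_{\gamma,k}\le \sum_{v\in\gamma}\#\bigl\{\text{connected subsets of size }\le k+1\text{ containing }v\bigr\}.
\end{equation}

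The next step is the classical tree-enumeration estimate (essentially due to Kesten/Penrose, as codified in Chapter 5 of \cite{friedli2017statistical}): in a locally finite hypergraph with local branching parameter at most $\mathfrak{d}$, the number of connected vertex subsets of size $k$ containing a fixed vertex is at most $(e\mathfrak{d})^{k-1}$. The derivation picks a spanning tree of the subset, walks it depth-first, and bounds the number of resulting rooted plane trees by a Catalan identity $\binom{2(k-1)}{k-1}\le 4^{k-1}$, absorbing the constant factor into the $e$. I would apply this at each of the $|\gamma|$ anchor choices, phrasing the branching in terms of hyperedges incident to the current vertex, and multiplying to obtain $N_{\gamma,k}\le |\gamma|(e\mathfrak{d})^{k-1}$.

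The main obstacle is getting the exponential base right. Naively the site graph of $\mathcal{G}$ has vertex degree up to $\mathfrak{d}(\mathfrak{k}-1)$, which would yield the weaker bound $(e\mathfrak{d}\mathfrak{k})^{k-1}$. To recover the stated $(e\mathfrak{d})^{k-1}$ one must execute the spanning-tree expansion in the polymer/hyperedge structure of $\mathcal{G}$ rather than on the induced site graph: at each expansion step, a single new vertex selects one of at most $\mathfrak{d}$ hyperedges incident to its parent, and the $\mathfrak{k}$-dependence is folded into the definition of a polymer's connectedness rather than multiplying the branching factor. This refinement is the only genuine bookkeeping required; the remainder of the proof is a routine application of the spanning-tree argument combined with the injection established above.
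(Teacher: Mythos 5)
Your strategy is the same one the paper uses---anchor each incompatible polymer at a site of $\gamma$, take a union bound over the $|\gamma|$ anchors, and invoke the standard $(e\mathfrak{d})^{k-1}$ count of connected subsets through a fixed vertex---and in one respect you are more careful than the paper: the paper's proof only counts polymers that \emph{contain} a site of $\gamma$, whereas incompatibility as defined (the two polymers do not induce disconnected subgraphs) also admits polymers disjoint from $\gamma$ but joined to it by a hyperedge, which your augmentation $\tilde\gamma'=\gamma'\cup\{v^*\}$ correctly captures. The problem is that your own bookkeeping does not deliver the stated constant: the augmented sets have size $k+1$ whenever $\gamma'$ is adjacent to but disjoint from $\gamma$, so your injection bounds $N_{\gamma,k}$ by $|\gamma|$ times the number of connected subsets of size up to $k+1$ through a fixed vertex, which is of order $(e\mathfrak{d})^{k}$ rather than $(e\mathfrak{d})^{k-1}$. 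You then simply assert the lemma's inequality, which does not follow from the steps you describe. This is harmless downstream---the Koteck\'y--Preiss argument only needs some exponential bound, and a constant factor in the base merely shifts $\delta_c$ and hence $p_{\min,c}$---but as a proof of the inequality as written it has a gap.

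The second discrepancy you flag, $\mathfrak{d}$ versus $\mathfrak{d}(\mathfrak{k}-1)$ in the branching factor, is a genuine issue, but your proposed resolution is asserted rather than proved and I do not think it can work: when a connected vertex subset is grown one vertex at a time, the new vertex is specified by choosing one of at most $\mathfrak{d}$ hyperedges incident to its parent \emph{and} one of at most $\mathfrak{k}-1$ positions inside that hyperedge, so the honest base of the tree-enumeration bound is $e\mathfrak{d}(\mathfrak{k}-1)$; the $\mathfrak{k}$-dependence cannot be absorbed into the definition of connectedness without changing what is being counted. (To be fair, the paper's own three-line proof silently makes both simplifications: it ignores adjacent-but-disjoint polymers and quotes $(e\mathfrak{d})^{k-1}$ for the animal count. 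The clean fix is to prove the lemma with a corrected base such as $e\mathfrak{d}\mathfrak{k}$ and an extra factor accounting for the augmentation, and then propagate the modified constant into $\delta_c$ and $p_{\min,c}$; the qualitative conclusion of exponential decay is unaffected.)
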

\begin{proof}
    We first choose a site in $\gamma$ and bound the number of polymers of size $k$ that contain this site. This reduces the problem to counting the number of connected subgraphs of size $k$ that contain a given site, a standard problem known to be $(e \mathfrak{d})^{k-1}$. Since there are $|\gamma|$ choices of the site in $\gamma$, we have the desired bound.
\end{proof}

With Lemma \ref{lem:Z_D_bound} and Lemma \ref{lem:incompatible_polymers_bound}, we can now bound the left-hand side of Eq.~(\ref{eq:KPcondition}) as follows:

\begin{proposition}\label{prop:KP_condition_satisfaction}
    Eq.~(\ref{eq:KPcondition}) can be satisfied when $\delta \geq \delta_c = 1 + \log(1+\mathfrak{d})$. In particular, we can set $a=1$ and any $b \le \delta - \delta_c$. In terms of $p_{\min}$, we need $p_{\min} \geq p_{\min,c} = \mathfrak{d}(\delta_c + \mathfrak{d} h_{\max} + \log(q-1))$. Correspondingly, we can set any $b \leq p_{\min} - p_{\min,c}$.
\end{proposition}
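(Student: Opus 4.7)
The plan is to verify the Kotecký--Preiss inequality~\eqref{eq:KPcondition} by a direct substitution of the two ingredients already at hand: the exponential weight bound $|Z_{\gamma'}|\le e^{-\delta|\gamma'|}$ from Lemma~\ref{lem:Z_D_bound}, and the combinatorial bound $N_{\gamma,k}\le|\gamma|(e\mathfrak{d})^{k-1}$ on the number of incompatible polymers of a given size from Lemma~\ref{lem:incompatible_polymers_bound}. Grouping the sum in~\eqref{eq:KPcondition} by $k=|\gamma'|$ and inserting these two bounds collapses the left-hand side into a single geometric series:
\begin{equation}
\sum_{\gamma'\not\sim\gamma} |Z_{\gamma'}|\,e^{(a+b)|\gamma'|}
\;\le\; |\gamma|\sum_{k\ge 1}(e\mathfrak{d})^{k-1}\,e^{(a+b-\delta)k}
\;=\; \frac{|\gamma|\,e^{a+b-\delta}}{1-e\mathfrak{d}\,e^{a+b-\delta}},
\end{equation}
valid whenever $e\mathfrak{d}\,e^{a+b-\delta}<1$.

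Next I would fix the gauge $a=1$, divide by $|\gamma|$, and rearrange. A short manipulation shows that~\eqref{eq:KPcondition} is equivalent to the single scalar inequality $e^{1+b-\delta}(1+e\mathfrak{d})\le 1$, i.e.\ $\delta\ge 1+b+\log(1+e\mathfrak{d})$. Setting $b=0$ identifies the critical value $\delta_c=1+\log(1+e\mathfrak{d})$ (up to the minor cosmetic discrepancy between $\log(1+\mathfrak{d})$ and $\log(1+e\mathfrak{d})$ in the stated proposition, which can be absorbed into the constant), and any $b\le\delta-\delta_c$ is then admissible. Strict positivity of $b$ is precisely what is needed downstream, since Lemma~\ref{lem:KP} converts it into the weighting $e^{b|\mathbf{W}|}$ used to bound clusters of weight $\ge d_{AC}$ and thereby decay $|F_{AC}(\mathbf{x}_{AC})|$ with $d_{AC}$ in Lemma~\ref{lem:decay_of_correlation}. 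The final move is the trivial change of variable $\delta\mapsto p_{\min}$ via $\delta=p_{\min}/\mathfrak{d}-\mathfrak{d} h_{\max}-\log(q-1)$ from Lemma~\ref{lem:Z_D_bound}, which produces the stated threshold $p_{\min,c}=\mathfrak{d}(\delta_c+\mathfrak{d} h_{\max}+\log(q-1))$ and the linear relation $b\le(p_{\min}-p_{\min,c})/\mathfrak{d}$.

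There is no serious technical obstacle at this step; it is a one-line application of the geometric series together with two lemmas already in hand. The only thing worth being careful about is tracking the admissible range of $b$, rather than just existence of some $b>0$: the correlation length $\xi\sim 1/b$ emerging from the Kotecký--Preiss bound on $|F_{AC}|$ scales as $\mathfrak{d}/(p_{\min}-p_{\min,c})$, so one should resist optimizing $a$ away from $1$ at the cost of shrinking $b$, and should check that the combinatorial factor $(e\mathfrak{d})^{k-1}$ in Lemma~\ref{lem:incompatible_polymers_bound} is sharp enough that $\xi$ remains meaningful in the regimes (large $\mathfrak{d}$, weak pinning) where the stability claim is most delicate.
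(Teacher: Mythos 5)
Your proposal follows essentially the same route as the paper's proof: insert the bounds of Lemmas~\ref{lem:Z_D_bound} and~\ref{lem:incompatible_polymers_bound}, sum the resulting geometric series, fix $a=1$, and solve for $\delta$ before translating to $p_{\min}$. In fact you track the constants slightly more carefully than the paper does --- carrying the factor of $e$ in $(e\mathfrak{d})^{k-1}$ through the geometric series gives $\delta_c = 1+\log(1+e\mathfrak{d})$ rather than $1+\log(1+\mathfrak{d})$, and the admissible range is $b \le (p_{\min}-p_{\min,c})/\mathfrak{d}$ rather than $b \le p_{\min}-p_{\min,c}$; both are harmless constant-level corrections to the stated proposition that you are right to flag.
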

\begin{proof}
    We apply Lemma \ref{lem:Z_D_bound} and Lemma \ref{lem:incompatible_polymers_bound} to bound the left-hand side of Eq.~(\ref{eq:KPcondition}):
\begin{align}
    \sum_{\gamma':\;\gamma'\not\sim \gamma} |Z_{\gamma'}|\,e^{(a+b)|\gamma'|} &= \sum_{k=1}^\infty \sum_{\substack{\gamma':\;\gamma'\not\sim \gamma \\ |\gamma'| = k}} |Z_{\gamma'}|\,e^{(a+b)k} \\
    &\leq \sum_{k=1}^\infty N_{\gamma,k} e^{-\delta k} e^{(a+b)k} \\
    &\leq |\gamma| \sum_{k=1}^\infty (e\mathfrak{d})^{k-1} e^{-\delta k} e^{(a+b)k} \\
\end{align}
As a reminder, $\delta = p_{\min}/\mathfrak{d} - \mathfrak{d} h_{\max} - \log(q-1)$ is controlled by the pinning strength $p_{\min}$. Therefore, when $p_{\min}$ is sufficiently large such that $\delta > a + b + \log(\mathfrak{d})$, the above sum converges to
\begin{equation}
    \sum_{\gamma':\;\gamma'\not\sim \gamma} |Z_{\gamma'}|\,e^{(a+b)|\gamma'|} \leq |\gamma| \frac{e^{a+b-\delta}}{1-e^{a+b-\delta}\mathfrak{d}}
\end{equation}
We will set $a=1$. To satisfy the Kotecký–Preiss condition in Eq.~(\ref{eq:KPcondition}), we need
\begin{equation}
    \frac{e^{1+b-\delta}}{1-e^{1+b-\delta}\mathfrak{d}} \leq 1
\end{equation}
This is achieved when $\delta \geq 1 + b + \log(1+\mathfrak{d})$. Combining with the convergence condition $\delta > 1 + b + \log(\mathfrak{d})$, we have the desired result.
\end{proof}

With the above proposition, we have established the convergence of the polymer expansion when $p_{\min}$ is above a critical threshold. Using the exponential convergence, we can now bound $|F_{AC}(\mathbf{x}_{AC})|$ as follows:
\begin{lemma}
\label{lem:F_AC_bound}
    When $p_{\min} \geq p_{\min,c}$ as defined in Proposition \ref{prop:KP_condition_satisfaction}, we have
    \begin{equation}
        |F_{AC}(\mathbf{x}_{AC})| \leq \min(| \partial A |, | \partial C |) e^{- (p_{\min} - p_{\min,c}) d_{AC}}
    \end{equation}
\end{lemma}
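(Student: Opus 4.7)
The plan is to bound $|F_{AC}(\mathbf{x}_{AC})|$ term-by-term in its cluster expansion, using two ingredients that are already in place: the geometric lower bound $|\mathbf{W}| \geq d_{AC}$ from Lemma \ref{lem:free_energy_decomposition}, and the Kotecký–Preiss convergence estimate~\eqref{eq:KPbound} with $a=1$ and $b \leq p_{\min} - p_{\min,c}$ from Proposition \ref{prop:KP_condition_satisfaction}. The overall strategy is to ``anchor'' each cluster contributing to $F_{AC}$ at a site in $\partial A$ (and, by symmetry, at a site in $\partial C$), absorb the volume factor by the KP bound applied at that anchor, and extract the exponential factor $e^{-b\,d_{AC}}$ from the forced cluster-weight lower bound.

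More concretely, every connected cluster $\mathbf{W}$ contributing to $F_{AC}$ is incompatible with both $A$ and $C$. Hence $\mathbf{W}$ contains at least one polymer touching a site $i \in B$ adjacent to $\partial A$ in $\mathcal{G}$; call this an $A$-anchor of $\mathbf{W}$. The number of possible $A$-anchors is at most a constant times $|\partial A|$ (the constant depending only on $\mathfrak{d}$). I would therefore start from the triangle-inequality bound
\begin{equation}
|F_{AC}(\mathbf{x}_{AC})| \;\leq\; \sum_{i\,\text{adj.}\,\partial A}\;\sum_{\substack{\mathbf{W}\text{ connected}\\ \mathbf{W}\not\sim i,\,\mathbf{W}\not\sim C}} \left|\frac{\phi(G_{\mathbf{W}})}{\mathbf{W}!}\right|\,|Z_{\mathbf{W}}|,
\end{equation}
where each cluster is counted at least once (possibly multiple times, which is harmless for an upper bound).

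Next, I would use $|\mathbf{W}|\geq d_{AC}$, valid for every cluster in the second constraint set (by Lemma \ref{lem:free_energy_decomposition}), to write
\begin{equation}
|Z_{\mathbf{W}}| \;=\; |Z_{\mathbf{W}}|\,e^{b|\mathbf{W}|}\,e^{-b|\mathbf{W}|} \;\leq\; |Z_{\mathbf{W}}|\,e^{b|\mathbf{W}|}\,e^{-b\,d_{AC}}.
\end{equation}
Pulling the $d_{AC}$-factor outside the inner sum and dropping the extra constraint $\mathbf{W}\not\sim C$ (which only removes positive terms) reduces the inner sum to exactly the left-hand side of~\eqref{eq:KPbound}, which is bounded by $a=1$. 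Summing over $O(|\partial A|)$ anchor sites gives $|F_{AC}| \leq |\partial A|\,e^{-b\,d_{AC}}$ (absorbing the geometric constant into the prefactor). Running the same argument with $C$ in place of $A$ yields the analogous bound with $|\partial C|$, so taking the minimum and setting $b = p_{\min} - p_{\min,c}$ produces the stated inequality.

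The main thing to be careful with is the anchor bookkeeping: one must interpret ``incompatible with $A$'' in a way that forces each such polymer to contain a site in a definite, $O(|\partial A|)$-sized ``inflated boundary'' of $A$ (i.e.\ sites in $B$ adjacent to $\partial A$ in $\mathcal{G}$), and verify that each cluster in the $F_{AC}$ sum indeed admits such an anchor. This is straightforward given the definition of polymer–region incompatibility used in Lemma \ref{lem:free_energy_decomposition}. Beyond this bookkeeping, the argument is a direct application of the KP tail estimate, so no genuinely new technical obstacle is expected.
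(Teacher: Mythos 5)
Your proposal is correct and follows essentially the same route as the paper's proof: anchor each cluster contributing to $F_{AC}$ at a boundary site, use the forced weight bound $|\mathbf{W}|\geq d_{AC}$ to extract the factor $e^{-b\,d_{AC}}$, and absorb the remaining sum via the Kotecký–Preiss tail estimate~\eqref{eq:KPbound} with $a=1$ and $b=p_{\min}-p_{\min,c}$. Your extra care about where the anchor lives (sites of $B$ adjacent to $\partial A$ rather than $\partial A$ itself, at the cost of a $\mathfrak{d}$-dependent constant) is a harmless refinement of the same argument.
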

\begin{proof}
    We first apply the decomposition in Lemma \ref{lem:free_energy_decomposition} to write down $F_{AC}(\mathbf{x}_{AC})$ as a sum over connected clusters. with weight at least $d_{AC}$.
    \begin{equation}
        |F_{AC}(\mathbf{x}_{AC})| \leq \sum_{k=d_{AC} }^\infty \sum_{\substack{\mathbf{W} \text{ connected} \\ |\mathbf{W}| = k \\ \mathbf{W} \not\sim A,C}} \left|\frac{\phi(G_{\mathbf{W}})}{\mathbf{W}!} Z_{\mathbf{W}} \right|  \leq \sum_{k=d_{AC} }^\infty \sum_{\substack{\mathbf{W} \text{ connected} \\ |\mathbf{W}| = k}} \left|\frac{\phi(G_{\mathbf{W}})}{\mathbf{W}!} Z_{\mathbf{W}} \right|
    \end{equation}
    Next, we note that any cluster $\mathbf{W}$ in the above sum must contain at least one polymer that is incompatible with $\partial A$ and $\partial C$ because $\mathbf{W}$ must cross both boundaries to connect $A$ and $C$. Therefore, we can bound the above sum as follows:
    \begin{equation}
        |F_{AC}(\mathbf{x}_{AC})| \leq \min(| \partial A |, | \partial C |) \sum_{k=d_{AC} }^\infty \sum_{\substack{\mathbf{W} \text{ connected} \\ |\mathbf{W}| = k \\ \mathbf{W} \not\sim i}} \left|\frac{\phi(G_{\mathbf{W}})}{\mathbf{W}!} Z_{\mathbf{W}} \right|
    \end{equation}
    Where $i$ is any site in $\partial A$ or $\partial C$, and we choose the smaller of the two to sum over sites. Finally, we apply the Kotecký–Preiss bound in Eq.~(\ref{eq:KPbound}) to bound the above sum. We multiply the above equation by $e^{b d_{AC}}$:
    \begin{align}
        &\sum_{k=d_{AC} }^\infty \sum_{\substack{\mathbf{W} \text{ connected} \\ |\mathbf{W}| = k \\ \mathbf{W} \not\sim i}} \left|\frac{\phi(G_{\mathbf{W}})}{\mathbf{W}!} Z_{\mathbf{W}} \right| e^{b d_{AC}} \\
         \leq &\sum_{k=d_{AC} }^\infty \sum_{\substack{\mathbf{W} \text{ connected} \\ |\mathbf{W}| = k \\ \mathbf{W} \not\sim i}} \left|\frac{\phi(G_{\mathbf{W}})}{\mathbf{W}!} Z_{\mathbf{W}} \right| e^{b k} \\
        \leq &\sum_{k=1}^\infty \sum_{\substack{\mathbf{W} \text{ connected} \\ |\mathbf{W}| = k \\ \mathbf{W} \not\sim i}} \left|\frac{\phi(G_{\mathbf{W}})}{\mathbf{W}!} Z_{\mathbf{W}} \right| e^{b k} \leq 1
    \end{align}
    Therefore, we have for any $i$,
    \begin{equation}
        \sum_{k=d_{AC} }^\infty \sum_{\substack{\mathbf{W} \text{ connected} \\ |\mathbf{W}| = k \\ \mathbf{W} \not\sim i}} \left|\frac{\phi(G_{\mathbf{W}})}{\mathbf{W}!} Z_{\mathbf{W}} \right| \le e^{-b d_{AC}}
    \end{equation}
    Setting $b=p_{\min} - p_{\min,c}$ gives the exponential decay of $|F_{AC}(\mathbf{x}_{AC})|$ with $d_{AC}$.
\end{proof}

We are now ready to conclude the proof of our main result Lemma \ref{lem:decay_of_correlation}
\begin{proof}[Proof of Lemma \ref{lem:decay_of_correlation}]
    The desired factorization of $\tilde{P}(\mathbf{x}_{AC})$ follows from Lemma \ref{lem:free_energy_decomposition}. The exponential decay of $|F_{AC}(\mathbf{x}_{AC})|$ with $d_{AC}$ follows from Lemma \ref{lem:F_AC_bound}. The rest of the proof follows from Section \ref{sec:proof_outline}.
\end{proof}

\section{Formally Stating the Main Theorems}\label{constants}
In this section, we formally state and prove all main theorems stated in the main text using Lemma \ref{lem:decay_of_correlation}. We start from classical Gibbs distributions under finite-depth local channels. From Lemma \ref{lem:pinning_finite_depth} we obtain a lower bound on $p_{\min}$ defined in Eq.~\eqref{eq:pinned_hamiltonian}.
\begin{equation}
    p_{\min} = d \log(1-\epsilon) - \log(\epsilon)
\end{equation}

Next, recall from Lemma \ref{lem:decay_of_correlation} that the critical pinning strength is given by
\begin{equation}
    p_{\min,c} = \mathfrak{d}(\delta_c + \mathfrak{d} \beta + \log(q^{d}-1)) \le \mathfrak{d}(1 + \log(1+\mathfrak{d}) + \mathfrak{d} \beta + d\log(q))
\end{equation}
Here we have made two substitutions: $h_{\max} = \beta$ since we are considering classical Gibbs distributions at inverse temperature $\beta$ and we have normalized the maximal value of $h_a$ to be $1$. In addition, the local dimension is $q^{d}$ since we consider the blocked spins $X_i = (x_{i,0}, x_{i,2}, \ldots, x_{i,d-1})$. In the second inequality, we have plugged in the value of $\delta_c = 1 + \log(1+\mathfrak{d})$ and used the bound $\log(q^{d}-1) \leq d\log(q)$.

Therefore, to ensure $p_{\min} \geq p_{\min,c}$, it suffices to require
\begin{equation}\label{eq:critical_epsilon_condition}
    d \log(1-\epsilon) - \log(\epsilon) \geq \mathfrak{d}(1 + \log(1+\mathfrak{d}) + \mathfrak{d} \beta + d\log(q))
\end{equation}
This gives the desired critical value of $\epsilon_c$. We now state Theorem \ref{thm:classical_stability_finite_depth_informal} in a formal manner.
\begin{theorem}\label{thm:classical_stability_finite_depth}
    Consider an interaction graph $\mathcal{G}$ supporting a Gibbs distribution $P(\mathbf{x}) \propto e^{-\beta H(\mathbf{x})}$ where $H(\mathbf{x})$ is defined in Eq.~(\ref{eq:hamiltonian}) and each $h_a$ is diagonal in the computational basis. Consider finite-depth stochastic processes $\mathcal{T}$ defined in Definition \ref{def:finite_depth_stochastic_process} where each $\mathcal{T}_{a,t}$ is subject to Eq.~(\ref{eq:perturbed_channel}).There exist a constant $\epsilon_c$ defined as
    \begin{equation}
        d \log(1-\epsilon_c) - \log(\epsilon_c) = \mathfrak{d}(1 + \log(1+\mathfrak{d}) + \mathfrak{d} \beta + d\log(q))
    \end{equation}
     If $\epsilon < \epsilon_c$, then for any tripartition of the spins into disjoint subsets $A,B,C$ such that $B$ separates $A$ and $C$ in $\mathcal{G}$, we have
    \begin{equation}
        I_{P}(A:C|B) = O\left(\min(|\partial A|, |\partial C|) e^{- \frac{d_{AC}}{\xi}}\right)
    \end{equation}
    Where $d_{AC}$ is the distance between $A$ and $C$ in $\mathcal{G}$, and the Markov length $\xi = O(1/(p_{\min}-p_{\min,c}))$.
\end{theorem}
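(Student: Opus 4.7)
The plan is to formally execute the chain of reductions laid out informally for Theorem~\ref{thm:classical_stability_finite_depth_informal}, now carefully propagating the constants through Lemma~\ref{lem:decay_of_correlation}. I will fix a tripartition $A, B, C$ of the spins such that $B$ separates $A$ from $C$ in $\mathcal{G}$, apply Proposition~\ref{prop:post_selection} to reduce $I_{\mathcal{T}(P)}(A:C\mid B)$ to the worst-case mutual information of the conditional distribution $P(\mathbf{x}_{A,d}, \mathbf{x}_{C,d} \mid \mathbf{x}_{B,d})$ over post-selected configurations $\mathbf{x}_{B,d}$, and then use Proposition~\ref{prop:classical_data_processing} (together with the data-processing inequality) to upgrade this to the mutual information $I(\mathbf{X}_A : \mathbf{X}_C)$ of the richer spacetime conditional, where $\mathbf{X}_A$ and $\mathbf{X}_C$ collect all time slices.

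Next, Lemma~\ref{lem:spacetime_gibbs_finite_depth} together with the blocking procedure recasts this conditional as a Gibbs distribution on the same interaction graph $\mathcal{G}$, with super-spins of local dimension at most $q^d$ on sites in $B$. The key structural fact is that because each local stochastic process $\mathcal{T}_{a,t}$ only couples sites already connected in $\mathcal{G}$, the blocking preserves both the degree $\mathfrak{d}$ and the hyperedge arity $\mathfrak{k}$. Lemma~\ref{lem:pinning_finite_depth} then guarantees that any hyperedge $a$ entirely supported in $B$ contributes a pinning term with energy gap at least
\begin{equation*}
p_{\min} \;=\; d\log(1-\epsilon) - \log(\epsilon),
\end{equation*}
which is precisely the quantity on the left-hand side of the theorem's defining equation. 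Hyperedges that only partially overlap $B$ are absorbed into the bounded interaction part $h_a$ with $h_{\max} = \beta$, as already permitted by the template in Eq.~\eqref{eq:pinned_hamiltonian}.

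The remaining step is to invoke Lemma~\ref{lem:decay_of_correlation} with local dimension $q^d$ and $h_{\max} = \beta$. Substituting $\delta_c = 1 + \log(1+\mathfrak{d})$ from Proposition~\ref{prop:KP_condition_satisfaction} and bounding $\log(q^d - 1) \le d\log q$ yields the critical pinning strength
\begin{equation*}
p_{\min,c} \;\le\; \mathfrak{d}\bigl(1 + \log(1+\mathfrak{d}) + \mathfrak{d}\beta + d\log q\bigr).
\end{equation*}
Imposing $p_{\min} \ge p_{\min,c}$ produces the implicit definition of $\epsilon_c$ in the theorem statement. The exponential decay of $|F_{AC}(\mathbf{x}_{AC})|$ with rate $b = p_{\min} - p_{\min,c}$ from Lemma~\ref{lem:F_AC_bound}, passed through the $I_P(A:C) \le S(P_{AC} \| Q_A Q_C) \le 2\epsilon$ chain in Section~\ref{sec:proof_outline}, then delivers the claimed bound $I_P(A:C\mid B) = O(\min(|\partial A|, |\partial C|)\, e^{-d_{AC}/\xi})$ with $\xi = O(1/(p_{\min} - p_{\min,c}))$.

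The main obstacle is essentially bookkeeping rather than conceptual: I must verify that the blocked local dimension is correctly taken as $q^d$ in Lemma~\ref{lem:decay_of_correlation} (even though super-spins on $A \cup C$ have dimension $q^{d+1}$, only $B$-sites feed into the polymer counting), and that straddling hyperedges contribute only bounded interactions rather than spurious pinnings that would invalidate Proposition~\ref{prop:Z_0_factorization}. Both subtleties are already absorbed by the statements of Lemmas~\ref{lem:pinning_finite_depth} and \ref{lem:decay_of_correlation}, so the formal proof amounts to composing the existing ingredients with explicit constants and invoking Lemma~\ref{lem:decay_of_correlation}.
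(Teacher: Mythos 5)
Your proposal follows the paper's own proof essentially verbatim: the same chain of reductions via Proposition~\ref{prop:post_selection}, Proposition~\ref{prop:classical_data_processing}, Lemma~\ref{lem:spacetime_gibbs_finite_depth} with the blocking procedure, and Lemma~\ref{lem:pinning_finite_depth}, followed by Lemma~\ref{lem:decay_of_correlation} with $h_{\max}=\beta$, blocked local dimension $q^d$, and $p_{\min}=d\log(1-\epsilon)-\log(\epsilon)$ matched against $p_{\min,c}\le\mathfrak{d}(1+\log(1+\mathfrak{d})+\mathfrak{d}\beta+d\log q)$ to obtain $\epsilon_c$. The bookkeeping points you flag (the $q^d$ versus $q^{d+1}$ distinction and the treatment of straddling hyperedges) are exactly the ones the paper's appendix handles, so the argument is correct and identical in substance.
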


For the commuting quantum case, $P(\mathbf{s})$ is defined on a different interaction graph $\mathcal{G}_s$ which can be understood as a subgraph of the dual graph of $\mathcal{G}$. The maximal degree of $\mathcal{G}_s$ is upper bounded by $\mathfrak{k}$, which is the number of terms in the Hamiltonian that each spin participates in. Therefore, we need to replace $\mathfrak{d}$ by $\mathfrak{k}$ in the above derivation. We now state Theorem \ref{thm:quantum_stability_finite_depth_informal} in a formal manner.
\begin{theorem}\label{thm:quantum_stability_finite_depth}
    Consider an interaction graph $\mathcal{G}$ supporting a commuting Gibbs state $\rho \propto e^{-\beta H}$ where $H$ is defined in Eq.~(\ref{eq:hamiltonian}) and each $h_a$ is a product of Pauli operators that commute mutually. 
    Consider finite-depth local channel $\mathcal{E}$ defined in Definition \ref{def:finite_depth_channel} where each $\mathcal{E}_{a,t}$ (a) is stabilizer-mixing and (b) satisfies Eq.~(\ref{eq:perturbed_channel}). 
    There exist a constant $\epsilon_c$ defined as
    \begin{equation}
        d \log(1-\epsilon_c) - \log(\epsilon_c) = \mathfrak{k}(1 + \log(1+\mathfrak{k}) + \mathfrak{k} \beta + d\log(q))
    \end{equation}
    If $\epsilon < \epsilon_c$, then we have for any disjoint subsets $A,C$ of spins separated by $B$,
    \begin{equation}
        I_{\rho}(A:C|B) = O\left(\min(|\partial A|, |\partial C|) e^{- \frac{d_{AC}}{\xi}}\right)
    \end{equation}
    Where $d_{AC}$ is the distance between $A$ and $C$ in $\mathcal{G}$, and the Markov length $\xi = O(1/(p_{\min}-p_{\min,c}))$.
\end{theorem}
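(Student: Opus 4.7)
The strategy is to reduce the quantum problem to the classical problem already handled in Theorem~\ref{thm:classical_stability_finite_depth}, and then transfer the resulting Markov-length bound back to $\rho' = \mathcal{E}(\rho)$. First I would invoke Proposition~\ref{prop:stabilizer_decomposition_gibbs} and Lemma~\ref{lem:stabilizer_distribution_gibbs} to rewrite $\rho \propto e^{-\beta H}$ as a classical mixture $\sum_{\mathbf{s}} P(\mathbf{s})\,\Pi_{\mathbf{s}}/R$, where $P(\mathbf{s})$ is a classical Gibbs distribution on the dual hypergraph $\mathcal{G}_s$. Since every local building block of $\mathcal{E}$ is stabilizer-mixing, Proposition~\ref{lem:stabilizer_distribution_after_channel} says that $\rho' = \sum_{\mathbf{s}'} P'(\mathbf{s}')\,\Pi_{\mathbf{s}'}$ with the same stabilizer projectors, and that the new weights are obtained from $P(\mathbf{s})$ by a finite-depth classical stochastic process $\mathcal{T}$ whose local factors inherit both the locality of $\mathcal{G}_s$ and the perturbative form~\eqref{eq:perturbed_channel}.

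Next, I would apply Theorem~\ref{thm:classical_stability_finite_depth} directly to $P'(\mathbf{s}')$. The key point is to correctly identify the graph parameters: on $\mathcal{G}_s$, a vertex corresponds to a stabilizer label $s_a$ and the hyperedges of $\mathcal{G}_s$ arise from Hamiltonian spins that participate in multiple stabilizers, so the maximum degree of $\mathcal{G}_s$ is controlled by $\mathfrak{k}$ (the number of Pauli spins per Hamiltonian term) rather than by $\mathfrak{d}$. Substituting $\mathfrak{d} \mapsto \mathfrak{k}$ in the critical-threshold formula of Theorem~\ref{thm:classical_stability_finite_depth} yields exactly the bound
\begin{equation}
d\log(1-\epsilon_c) - \log(\epsilon_c) = \mathfrak{k}\bigl(1 + \log(1+\mathfrak{k}) + \mathfrak{k}\beta + d\log q\bigr),
\end{equation}
so whenever $\epsilon < \epsilon_c$ the stabilizer distribution $P'$ satisfies an exponential CMI bound of the desired form, with distances measured in $\mathcal{G}_s$.

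Finally, I would invoke Lemma~\ref{lem:stabilizer_markov_implies_state_markov} to pull the classical CMI bound for $P'(\mathbf{s}')$ back to a quantum CMI bound for $\rho'$ on any tripartition $A,B,C$ of physical qudits: the lemma's computation shows $I_{\rho'}(A:C|B) = I_{P'}(A\,\partial A : C\,\partial C \mid B)$ on the nose, because the rank factor $\log R$ contributes identically to all four entropy terms and the stabilizer projectors form an exact classical Markov chain. Since the distance $d^{(s)}_{AC}$ in $\mathcal{G}_s$ differs from $d_{AC}$ in $\mathcal{G}$ only by an $O(1)$ boundary-thickness that can be absorbed into the prefactor, and the relevant boundary sizes $|\partial A|,|\partial C|$ transform analogously, the resulting bound has the stated form with $\xi = O\bigl(1/(p_{\min} - p_{\min,c})\bigr)$.

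The part I expect to require the most care is the graph-theoretic translation from $\mathcal{G}$ to $\mathcal{G}_s$: verifying that a separating tripartition of physical spins induces a separating tripartition of stabilizer labels (so that Lemma~\ref{lem:stabilizer_markov_implies_state_markov} applies with the right geometry), that the correct degree parameter entering the Kotecky--Preiss threshold is indeed $\mathfrak{k}$ and not $\mathfrak{d}$, and that boundary hyperedges that straddle $B$ and its complement only contribute $O(1)$ corrections. These are bookkeeping obstacles rather than conceptual ones, but they are where the exponent of $\epsilon_c$ is pinned down and where the statement could most easily go wrong.
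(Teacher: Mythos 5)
Your proposal follows essentially the same route as the paper's own proof: decompose the Gibbs state into stabilizer projectors weighted by the classical Gibbs distribution $P(\mathbf{s})$ on the dual graph $\mathcal{G}_s$ (Proposition~\ref{prop:stabilizer_decomposition_gibbs}, Lemma~\ref{lem:stabilizer_distribution_gibbs}), push the stabilizer-mixing channel onto a finite-depth classical stochastic process (Proposition~\ref{lem:stabilizer_distribution_after_channel}), apply the classical theorem with the degree parameter $\mathfrak{d}$ replaced by $\mathfrak{k}$ for $\mathcal{G}_s$, and transfer the CMI bound back via the exact identity $I_{\rho'}(A:C|B) = I_{P'}(A\,\partial A : C\,\partial C\mid B)$ of Lemma~\ref{lem:stabilizer_markov_implies_state_markov}. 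The steps, the order, and the bookkeeping caveats (the $O(1)$ distance correction between $\mathcal{G}$ and $\mathcal{G}_s$) all match the paper's argument.
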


The proof of the two theorems follows from the main text, Lemma \ref{lem:decay_of_correlation}, and the above derivation of the critical $\epsilon_c$.


\end{widetext}


\bibliography{apssamp}

\begin{thebibliography}{58}%
\makeatletter
\providecommand \@ifxundefined [1]{%
 \@ifx{#1\undefined}
}%
\providecommand \@ifnum [1]{%
 \ifnum #1\expandafter \@firstoftwo
 \else \expandafter \@secondoftwo
 \fi
}%
\providecommand \@ifx [1]{%
 \ifx #1\expandafter \@firstoftwo
 \else \expandafter \@secondoftwo
 \fi
}%
\providecommand \natexlab [1]{#1}%
\providecommand \enquote  [1]{``#1''}%
\providecommand \bibnamefont  [1]{#1}%
\providecommand \bibfnamefont [1]{#1}%
\providecommand \citenamefont [1]{#1}%
\providecommand \href@noop [0]{\@secondoftwo}%
\providecommand \href [0]{\begingroup \@sanitize@url \@href}%
\providecommand \@href[1]{\@@startlink{#1}\@@href}%
\providecommand \@@href[1]{\endgroup#1\@@endlink}%
\providecommand \@sanitize@url [0]{\catcode `\\12\catcode `\$12\catcode `\&12\catcode `\#12\catcode `\^12\catcode `\_12\catcode `\%12\relax}%
\providecommand \@@startlink[1]{}%
\providecommand \@@endlink[0]{}%
\providecommand \url  [0]{\begingroup\@sanitize@url \@url }%
\providecommand \@url [1]{\endgroup\@href {#1}{\urlprefix }}%
\providecommand \urlprefix  [0]{URL }%
\providecommand \Eprint [0]{\href }%
\providecommand \doibase [0]{https://doi.org/}%
\providecommand \selectlanguage [0]{\@gobble}%
\providecommand \bibinfo  [0]{\@secondoftwo}%
\providecommand \bibfield  [0]{\@secondoftwo}%
\providecommand \translation [1]{[#1]}%
\providecommand \BibitemOpen [0]{}%
\providecommand \bibitemStop [0]{}%
\providecommand \bibitemNoStop [0]{.\EOS\space}%
\providecommand \EOS [0]{\spacefactor3000\relax}%
\providecommand \BibitemShut  [1]{\csname bibitem#1\endcsname}%
\let\auto@bib@innerbib\@empty
\bibitem [{\citenamefont {Hastings}\ and\ \citenamefont {Wen}(2005)}]{hastings2005quasiadiabatic}%
  \BibitemOpen
  \bibfield  {author} {\bibinfo {author} {\bibfnamefont {M.~B.}\ \bibnamefont {Hastings}}\ and\ \bibinfo {author} {\bibfnamefont {X.-G.}\ \bibnamefont {Wen}},\ }\bibfield  {title} {\bibinfo {title} {Quasiadiabatic continuation of quantum states: The stability of topological ground-state degeneracy and emergent gauge invariance},\ }\href@noop {} {\bibfield  {journal} {\bibinfo  {journal} {Physical Review B—Condensed Matter and Materials Physics}\ }\textbf {\bibinfo {volume} {72}},\ \bibinfo {pages} {045141} (\bibinfo {year} {2005})}\BibitemShut {NoStop}%
\bibitem [{\citenamefont {Anshu}\ and\ \citenamefont {Nirkhe}(2020)}]{anshu2020circuit}%
  \BibitemOpen
  \bibfield  {author} {\bibinfo {author} {\bibfnamefont {A.}~\bibnamefont {Anshu}}\ and\ \bibinfo {author} {\bibfnamefont {C.}~\bibnamefont {Nirkhe}},\ }\bibfield  {title} {\bibinfo {title} {Circuit lower bounds for low-energy states of quantum code hamiltonians},\ }\href@noop {} {\bibfield  {journal} {\bibinfo  {journal} {arXiv preprint arXiv:2011.02044}\ } (\bibinfo {year} {2020})}\BibitemShut {NoStop}%
\bibitem [{\citenamefont {Parham}(2025)}]{parham2025quantum}%
  \BibitemOpen
  \bibfield  {author} {\bibinfo {author} {\bibfnamefont {N.}~\bibnamefont {Parham}},\ }\bibfield  {title} {\bibinfo {title} {Quantum circuit lower bounds in the magic hierarchy},\ }\href@noop {} {\bibfield  {journal} {\bibinfo  {journal} {arXiv preprint arXiv:2504.19966}\ } (\bibinfo {year} {2025})}\BibitemShut {NoStop}%
\bibitem [{\citenamefont {Rosenthal}(2020)}]{rosenthal2020bounds}%
  \BibitemOpen
  \bibfield  {author} {\bibinfo {author} {\bibfnamefont {G.}~\bibnamefont {Rosenthal}},\ }\bibfield  {title} {\bibinfo {title} {Bounds on the qac0 complexity of approximating parity},\ }\href@noop {} {\bibfield  {journal} {\bibinfo  {journal} {arXiv preprint arXiv:2008.07470}\ } (\bibinfo {year} {2020})}\BibitemShut {NoStop}%
\bibitem [{\citenamefont {Nadimpalli}\ \emph {et~al.}(2024)\citenamefont {Nadimpalli}, \citenamefont {Parham}, \citenamefont {Vasconcelos},\ and\ \citenamefont {Yuen}}]{nadimpalli2024pauli}%
  \BibitemOpen
  \bibfield  {author} {\bibinfo {author} {\bibfnamefont {S.}~\bibnamefont {Nadimpalli}}, \bibinfo {author} {\bibfnamefont {N.}~\bibnamefont {Parham}}, \bibinfo {author} {\bibfnamefont {F.}~\bibnamefont {Vasconcelos}},\ and\ \bibinfo {author} {\bibfnamefont {H.}~\bibnamefont {Yuen}},\ }\bibfield  {title} {\bibinfo {title} {On the pauli spectrum of qac0},\ }in\ \href@noop {} {\emph {\bibinfo {booktitle} {Proceedings of the 56th Annual ACM Symposium on Theory of Computing}}}\ (\bibinfo {year} {2024})\ pp.\ \bibinfo {pages} {1498--1506}\BibitemShut {NoStop}%
\bibitem [{\citenamefont {Tantivasadakarn}\ \emph {et~al.}(2023{\natexlab{a}})\citenamefont {Tantivasadakarn}, \citenamefont {Verresen},\ and\ \citenamefont {Vishwanath}}]{tantivasadakarn2023shortest}%
  \BibitemOpen
  \bibfield  {author} {\bibinfo {author} {\bibfnamefont {N.}~\bibnamefont {Tantivasadakarn}}, \bibinfo {author} {\bibfnamefont {R.}~\bibnamefont {Verresen}},\ and\ \bibinfo {author} {\bibfnamefont {A.}~\bibnamefont {Vishwanath}},\ }\bibfield  {title} {\bibinfo {title} {Shortest route to non-abelian topological order on a quantum processor},\ }\href@noop {} {\bibfield  {journal} {\bibinfo  {journal} {Physical Review Letters}\ }\textbf {\bibinfo {volume} {131}},\ \bibinfo {pages} {060405} (\bibinfo {year} {2023}{\natexlab{a}})}\BibitemShut {NoStop}%
\bibitem [{\citenamefont {Tantivasadakarn}\ \emph {et~al.}(2023{\natexlab{b}})\citenamefont {Tantivasadakarn}, \citenamefont {Vishwanath},\ and\ \citenamefont {Verresen}}]{tantivasadakarn2023hierarchy}%
  \BibitemOpen
  \bibfield  {author} {\bibinfo {author} {\bibfnamefont {N.}~\bibnamefont {Tantivasadakarn}}, \bibinfo {author} {\bibfnamefont {A.}~\bibnamefont {Vishwanath}},\ and\ \bibinfo {author} {\bibfnamefont {R.}~\bibnamefont {Verresen}},\ }\bibfield  {title} {\bibinfo {title} {Hierarchy of topological order from finite-depth unitaries, measurement, and feedforward},\ }\href@noop {} {\bibfield  {journal} {\bibinfo  {journal} {PRX Quantum}\ }\textbf {\bibinfo {volume} {4}},\ \bibinfo {pages} {020339} (\bibinfo {year} {2023}{\natexlab{b}})}\BibitemShut {NoStop}%
\bibitem [{\citenamefont {Bravyi}\ \emph {et~al.}(2022)\citenamefont {Bravyi}, \citenamefont {Kim}, \citenamefont {Kliesch},\ and\ \citenamefont {Koenig}}]{bravyi2022adaptive}%
  \BibitemOpen
  \bibfield  {author} {\bibinfo {author} {\bibfnamefont {S.}~\bibnamefont {Bravyi}}, \bibinfo {author} {\bibfnamefont {I.}~\bibnamefont {Kim}}, \bibinfo {author} {\bibfnamefont {A.}~\bibnamefont {Kliesch}},\ and\ \bibinfo {author} {\bibfnamefont {R.}~\bibnamefont {Koenig}},\ }\bibfield  {title} {\bibinfo {title} {Adaptive constant-depth circuits for manipulating non-abelian anyons},\ }\href@noop {} {\bibfield  {journal} {\bibinfo  {journal} {arXiv preprint arXiv:2205.01933}\ } (\bibinfo {year} {2022})}\BibitemShut {NoStop}%
\bibitem [{\citenamefont {Smith}\ \emph {et~al.}(2023)\citenamefont {Smith}, \citenamefont {Crane}, \citenamefont {Wiebe},\ and\ \citenamefont {Girvin}}]{smith2023deterministic}%
  \BibitemOpen
  \bibfield  {author} {\bibinfo {author} {\bibfnamefont {K.~C.}\ \bibnamefont {Smith}}, \bibinfo {author} {\bibfnamefont {E.}~\bibnamefont {Crane}}, \bibinfo {author} {\bibfnamefont {N.}~\bibnamefont {Wiebe}},\ and\ \bibinfo {author} {\bibfnamefont {S.}~\bibnamefont {Girvin}},\ }\bibfield  {title} {\bibinfo {title} {Deterministic constant-depth preparation of the aklt state on a quantum processor using fusion measurements},\ }\href@noop {} {\bibfield  {journal} {\bibinfo  {journal} {PRX Quantum}\ }\textbf {\bibinfo {volume} {4}},\ \bibinfo {pages} {020315} (\bibinfo {year} {2023})}\BibitemShut {NoStop}%
\bibitem [{\citenamefont {Smith}\ \emph {et~al.}(2024)\citenamefont {Smith}, \citenamefont {Khan}, \citenamefont {Clark}, \citenamefont {Girvin},\ and\ \citenamefont {Wei}}]{smith2024constant}%
  \BibitemOpen
  \bibfield  {author} {\bibinfo {author} {\bibfnamefont {K.~C.}\ \bibnamefont {Smith}}, \bibinfo {author} {\bibfnamefont {A.}~\bibnamefont {Khan}}, \bibinfo {author} {\bibfnamefont {B.~K.}\ \bibnamefont {Clark}}, \bibinfo {author} {\bibfnamefont {S.}~\bibnamefont {Girvin}},\ and\ \bibinfo {author} {\bibfnamefont {T.-C.}\ \bibnamefont {Wei}},\ }\bibfield  {title} {\bibinfo {title} {Constant-depth preparation of matrix product states with adaptive quantum circuits},\ }\href@noop {} {\bibfield  {journal} {\bibinfo  {journal} {arXiv preprint arXiv:2404.16083}\ } (\bibinfo {year} {2024})}\BibitemShut {NoStop}%
\bibitem [{\citenamefont {Sahay}\ and\ \citenamefont {Verresen}(2024{\natexlab{a}})}]{sahay2024classifying}%
  \BibitemOpen
  \bibfield  {author} {\bibinfo {author} {\bibfnamefont {R.}~\bibnamefont {Sahay}}\ and\ \bibinfo {author} {\bibfnamefont {R.}~\bibnamefont {Verresen}},\ }\bibfield  {title} {\bibinfo {title} {Classifying one-dimensional quantum states prepared by a single round of measurements},\ }\href@noop {} {\bibfield  {journal} {\bibinfo  {journal} {arXiv preprint arXiv:2404.16753}\ } (\bibinfo {year} {2024}{\natexlab{a}})}\BibitemShut {NoStop}%
\bibitem [{\citenamefont {Sahay}\ and\ \citenamefont {Verresen}(2024{\natexlab{b}})}]{sahay2024finite}%
  \BibitemOpen
  \bibfield  {author} {\bibinfo {author} {\bibfnamefont {R.}~\bibnamefont {Sahay}}\ and\ \bibinfo {author} {\bibfnamefont {R.}~\bibnamefont {Verresen}},\ }\bibfield  {title} {\bibinfo {title} {Finite-depth preparation of tensor network states from measurement},\ }\href@noop {} {\bibfield  {journal} {\bibinfo  {journal} {arXiv preprint arXiv:2404.17087}\ } (\bibinfo {year} {2024}{\natexlab{b}})}\BibitemShut {NoStop}%
\bibitem [{\citenamefont {Stephen}\ and\ \citenamefont {Hart}(2024)}]{stephen2024preparing}%
  \BibitemOpen
  \bibfield  {author} {\bibinfo {author} {\bibfnamefont {D.~T.}\ \bibnamefont {Stephen}}\ and\ \bibinfo {author} {\bibfnamefont {O.}~\bibnamefont {Hart}},\ }\bibfield  {title} {\bibinfo {title} {Preparing matrix product states via fusion: constraints and extensions},\ }\href@noop {} {\bibfield  {journal} {\bibinfo  {journal} {arXiv preprint arXiv:2404.16360}\ } (\bibinfo {year} {2024})}\BibitemShut {NoStop}%
\bibitem [{\citenamefont {Zhang}\ \emph {et~al.}(2024)\citenamefont {Zhang}, \citenamefont {Gopalakrishnan},\ and\ \citenamefont {Styliaris}}]{zhang2024characterizing}%
  \BibitemOpen
  \bibfield  {author} {\bibinfo {author} {\bibfnamefont {Y.}~\bibnamefont {Zhang}}, \bibinfo {author} {\bibfnamefont {S.}~\bibnamefont {Gopalakrishnan}},\ and\ \bibinfo {author} {\bibfnamefont {G.}~\bibnamefont {Styliaris}},\ }\bibfield  {title} {\bibinfo {title} {Characterizing matrix-product states and projected entangled-pair states preparable via measurement and feedback},\ }\href@noop {} {\bibfield  {journal} {\bibinfo  {journal} {PRX Quantum}\ }\textbf {\bibinfo {volume} {5}},\ \bibinfo {pages} {040304} (\bibinfo {year} {2024})}\BibitemShut {NoStop}%
\bibitem [{\citenamefont {Hu}\ \emph {et~al.}(2025)\citenamefont {Hu}, \citenamefont {Liu}, \citenamefont {Zhang},\ and\ \citenamefont {Gao}}]{hu2025local}%
  \BibitemOpen
  \bibfield  {author} {\bibinfo {author} {\bibfnamefont {F.}~\bibnamefont {Hu}}, \bibinfo {author} {\bibfnamefont {G.}~\bibnamefont {Liu}}, \bibinfo {author} {\bibfnamefont {Y.}~\bibnamefont {Zhang}},\ and\ \bibinfo {author} {\bibfnamefont {X.}~\bibnamefont {Gao}},\ }\bibfield  {title} {\bibinfo {title} {Local diffusion models and phases of data distributions},\ }\href@noop {} {\bibfield  {journal} {\bibinfo  {journal} {arXiv preprint arXiv:2508.06614}\ } (\bibinfo {year} {2025})}\BibitemShut {NoStop}%
\bibitem [{\citenamefont {Coser}\ and\ \citenamefont {P{\'e}rez-Garc{\'\i}a}(2019)}]{coser2019classification}%
  \BibitemOpen
  \bibfield  {author} {\bibinfo {author} {\bibfnamefont {A.}~\bibnamefont {Coser}}\ and\ \bibinfo {author} {\bibfnamefont {D.}~\bibnamefont {P{\'e}rez-Garc{\'\i}a}},\ }\bibfield  {title} {\bibinfo {title} {Classification of phases for mixed states via fast dissipative evolution},\ }\href@noop {} {\bibfield  {journal} {\bibinfo  {journal} {Quantum}\ }\textbf {\bibinfo {volume} {3}},\ \bibinfo {pages} {174} (\bibinfo {year} {2019})}\BibitemShut {NoStop}%
\bibitem [{\citenamefont {Bravyi}\ \emph {et~al.}(2006)\citenamefont {Bravyi}, \citenamefont {Hastings},\ and\ \citenamefont {Verstraete}}]{bravyi2006lieb}%
  \BibitemOpen
  \bibfield  {author} {\bibinfo {author} {\bibfnamefont {S.}~\bibnamefont {Bravyi}}, \bibinfo {author} {\bibfnamefont {M.~B.}\ \bibnamefont {Hastings}},\ and\ \bibinfo {author} {\bibfnamefont {F.}~\bibnamefont {Verstraete}},\ }\bibfield  {title} {\bibinfo {title} {Lieb-robinson bounds and the generation of correlations and topological quantum order},\ }\href@noop {} {\bibfield  {journal} {\bibinfo  {journal} {Physical review letters}\ }\textbf {\bibinfo {volume} {97}},\ \bibinfo {pages} {050401} (\bibinfo {year} {2006})}\BibitemShut {NoStop}%
\bibitem [{\citenamefont {Sang}\ \emph {et~al.}(2023)\citenamefont {Sang}, \citenamefont {Li}, \citenamefont {Hsieh},\ and\ \citenamefont {Yoshida}}]{PRXQuantum.4.040332}%
  \BibitemOpen
  \bibfield  {author} {\bibinfo {author} {\bibfnamefont {S.}~\bibnamefont {Sang}}, \bibinfo {author} {\bibfnamefont {Z.}~\bibnamefont {Li}}, \bibinfo {author} {\bibfnamefont {T.~H.}\ \bibnamefont {Hsieh}},\ and\ \bibinfo {author} {\bibfnamefont {B.}~\bibnamefont {Yoshida}},\ }\bibfield  {title} {\bibinfo {title} {Ultrafast entanglement dynamics in monitored quantum circuits},\ }\href {https://doi.org/10.1103/PRXQuantum.4.040332} {\bibfield  {journal} {\bibinfo  {journal} {PRX Quantum}\ }\textbf {\bibinfo {volume} {4}},\ \bibinfo {pages} {040332} (\bibinfo {year} {2023})}\BibitemShut {NoStop}%
\bibitem [{\citenamefont {Zhang}\ and\ \citenamefont {Gopalakrishnan}(2024)}]{zhang2024nonlocal}%
  \BibitemOpen
  \bibfield  {author} {\bibinfo {author} {\bibfnamefont {Y.}~\bibnamefont {Zhang}}\ and\ \bibinfo {author} {\bibfnamefont {S.}~\bibnamefont {Gopalakrishnan}},\ }\bibfield  {title} {\bibinfo {title} {Nonlocal growth of quantum conditional mutual information under decoherence},\ }\href@noop {} {\bibfield  {journal} {\bibinfo  {journal} {Physical Review A}\ }\textbf {\bibinfo {volume} {110}},\ \bibinfo {pages} {032426} (\bibinfo {year} {2024})}\BibitemShut {NoStop}%
\bibitem [{\citenamefont {Lee}\ \emph {et~al.}(2024)\citenamefont {Lee}, \citenamefont {Oh}, \citenamefont {Wong}, \citenamefont {Chen},\ and\ \citenamefont {Jiang}}]{lee2024universal}%
  \BibitemOpen
  \bibfield  {author} {\bibinfo {author} {\bibfnamefont {S.-u.}\ \bibnamefont {Lee}}, \bibinfo {author} {\bibfnamefont {C.}~\bibnamefont {Oh}}, \bibinfo {author} {\bibfnamefont {Y.}~\bibnamefont {Wong}}, \bibinfo {author} {\bibfnamefont {S.}~\bibnamefont {Chen}},\ and\ \bibinfo {author} {\bibfnamefont {L.}~\bibnamefont {Jiang}},\ }\bibfield  {title} {\bibinfo {title} {Universal spreading of conditional mutual information in noisy random circuits},\ }\href@noop {} {\bibfield  {journal} {\bibinfo  {journal} {Physical Review Letters}\ }\textbf {\bibinfo {volume} {133}},\ \bibinfo {pages} {200402} (\bibinfo {year} {2024})}\BibitemShut {NoStop}%
\bibitem [{\citenamefont {Sang}\ and\ \citenamefont {Hsieh}(2024)}]{sang2024stability}%
  \BibitemOpen
  \bibfield  {author} {\bibinfo {author} {\bibfnamefont {S.}~\bibnamefont {Sang}}\ and\ \bibinfo {author} {\bibfnamefont {T.~H.}\ \bibnamefont {Hsieh}},\ }\bibfield  {title} {\bibinfo {title} {Stability of mixed-state quantum phases via finite markov length},\ }\href@noop {} {\bibfield  {journal} {\bibinfo  {journal} {arXiv preprint arXiv:2404.07251}\ } (\bibinfo {year} {2024})}\BibitemShut {NoStop}%
\bibitem [{\citenamefont {Brown}\ and\ \citenamefont {Poulin}(2012)}]{brown2012quantum}%
  \BibitemOpen
  \bibfield  {author} {\bibinfo {author} {\bibfnamefont {W.}~\bibnamefont {Brown}}\ and\ \bibinfo {author} {\bibfnamefont {D.}~\bibnamefont {Poulin}},\ }\bibfield  {title} {\bibinfo {title} {Quantum markov networks and commuting hamiltonians},\ }\href@noop {} {\bibfield  {journal} {\bibinfo  {journal} {arXiv preprint arXiv:1206.0755}\ } (\bibinfo {year} {2012})}\BibitemShut {NoStop}%
\bibitem [{\citenamefont {Lloyd}\ \emph {et~al.}(2025)\citenamefont {Lloyd}, \citenamefont {Abanin},\ and\ \citenamefont {Gopalakrishnan}}]{lloyd2025diverging}%
  \BibitemOpen
  \bibfield  {author} {\bibinfo {author} {\bibfnamefont {J.}~\bibnamefont {Lloyd}}, \bibinfo {author} {\bibfnamefont {D.~A.}\ \bibnamefont {Abanin}},\ and\ \bibinfo {author} {\bibfnamefont {S.}~\bibnamefont {Gopalakrishnan}},\ }\bibfield  {title} {\bibinfo {title} {Diverging conditional correlation lengths in the approach to high temperature},\ }\href@noop {} {\bibfield  {journal} {\bibinfo  {journal} {arXiv preprint arXiv:2508.02567}\ } (\bibinfo {year} {2025})}\BibitemShut {NoStop}%
\bibitem [{\citenamefont {Sang}\ \emph {et~al.}(2025)\citenamefont {Sang}, \citenamefont {Lessa}, \citenamefont {Mong}, \citenamefont {Grover}, \citenamefont {Wang},\ and\ \citenamefont {Hsieh}}]{sang2025mixed}%
  \BibitemOpen
  \bibfield  {author} {\bibinfo {author} {\bibfnamefont {S.}~\bibnamefont {Sang}}, \bibinfo {author} {\bibfnamefont {L.~A.}\ \bibnamefont {Lessa}}, \bibinfo {author} {\bibfnamefont {R.~S.}\ \bibnamefont {Mong}}, \bibinfo {author} {\bibfnamefont {T.}~\bibnamefont {Grover}}, \bibinfo {author} {\bibfnamefont {C.}~\bibnamefont {Wang}},\ and\ \bibinfo {author} {\bibfnamefont {T.~H.}\ \bibnamefont {Hsieh}},\ }\bibfield  {title} {\bibinfo {title} {Mixed-state phases from local reversibility},\ }\href@noop {} {\bibfield  {journal} {\bibinfo  {journal} {arXiv preprint arXiv:2507.02292}\ } (\bibinfo {year} {2025})}\BibitemShut {NoStop}%
\bibitem [{\citenamefont {Zhang}\ and\ \citenamefont {Gopalakrishnan}(2025)}]{zhang2025conditional}%
  \BibitemOpen
  \bibfield  {author} {\bibinfo {author} {\bibfnamefont {Y.~F.}\ \bibnamefont {Zhang}}\ and\ \bibinfo {author} {\bibfnamefont {S.}~\bibnamefont {Gopalakrishnan}},\ }\bibfield  {title} {\bibinfo {title} {Conditional mutual information and information-theoretic phases of decohered gibbs states},\ }\href@noop {} {\bibfield  {journal} {\bibinfo  {journal} {Physical Review Letters}\ }\textbf {\bibinfo {volume} {135}},\ \bibinfo {pages} {160401} (\bibinfo {year} {2025})}\BibitemShut {NoStop}%
\bibitem [{Note1()}]{Note1}%
  \BibitemOpen
  \bibinfo {note} {Strictly speaking, Ref.~\cite {zhang2025conditional} only shows the finite Markov length under certain technical restrictions on the channel. These properties resemble the stabilzier mixing condition in this paper in the sense that it preserves the commutation property of the algebra genearting the Gibbs state. However, we believe this is a technical constraint; these results should hold for any high-temperature (non-commuting) Gibbs states under arbitrary strictly local channels.}\BibitemShut {Stop}%
\bibitem [{\citenamefont {Ma}\ \emph {et~al.}(2025)\citenamefont {Ma}, \citenamefont {Khemani},\ and\ \citenamefont {Sang}}]{ma2025circuit}%
  \BibitemOpen
  \bibfield  {author} {\bibinfo {author} {\bibfnamefont {R.}~\bibnamefont {Ma}}, \bibinfo {author} {\bibfnamefont {V.}~\bibnamefont {Khemani}},\ and\ \bibinfo {author} {\bibfnamefont {S.}~\bibnamefont {Sang}},\ }\bibfield  {title} {\bibinfo {title} {Circuit-based characterization of finite-temperature quantum phases and self-correcting quantum memory},\ }\href@noop {} {\bibfield  {journal} {\bibinfo  {journal} {arXiv preprint arXiv:2509.15204}\ } (\bibinfo {year} {2025})}\BibitemShut {NoStop}%
\bibitem [{\citenamefont {Yi}\ \emph {et~al.}(2025)\citenamefont {Yi}, \citenamefont {Li}, \citenamefont {Liu}, \citenamefont {Li},\ and\ \citenamefont {Zou}}]{yi2025universal}%
  \BibitemOpen
  \bibfield  {author} {\bibinfo {author} {\bibfnamefont {J.}~\bibnamefont {Yi}}, \bibinfo {author} {\bibfnamefont {K.}~\bibnamefont {Li}}, \bibinfo {author} {\bibfnamefont {C.}~\bibnamefont {Liu}}, \bibinfo {author} {\bibfnamefont {Z.}~\bibnamefont {Li}},\ and\ \bibinfo {author} {\bibfnamefont {L.}~\bibnamefont {Zou}},\ }\bibfield  {title} {\bibinfo {title} {Universal decay of (conditional) mutual information in gapped pure- and mixed-state quantum matter},\ }\href {https://arxiv.org/abs/2510.22867} {\bibfield  {journal} {\bibinfo  {journal} {arXiv preprint arXiv:2510.22867}\ } (\bibinfo {year} {2025})}\BibitemShut {NoStop}%
\bibitem [{Note2()}]{Note2}%
  \BibitemOpen
  \bibinfo {note} {Technically speaking, their result only shows that if the ground states of one Hamiltonian has CMI decaying superpolynomially fast, then the ground states of all Hamiltonians in this family have CMI decaying superpolynomially fast. This is slightly weaker than a expoentially decay. The relaxation to superpolynomial decay also seems necessary because of the nature of quasi-adiabatic continuation. However,}\BibitemShut {NoStop}%
\bibitem [{\citenamefont {Lieb}\ and\ \citenamefont {Ruskai}(1973)}]{lieb1973proof}%
  \BibitemOpen
  \bibfield  {author} {\bibinfo {author} {\bibfnamefont {E.~H.}\ \bibnamefont {Lieb}}\ and\ \bibinfo {author} {\bibfnamefont {M.~B.}\ \bibnamefont {Ruskai}},\ }\bibfield  {title} {\bibinfo {title} {Proof of the strong subadditivity of quantum-mechanical entropy},\ }\href@noop {} {\bibfield  {journal} {\bibinfo  {journal} {Les rencontres physiciens-math{\'e}maticiens de Strasbourg-RCP25}\ }\textbf {\bibinfo {volume} {19}},\ \bibinfo {pages} {36} (\bibinfo {year} {1973})}\BibitemShut {NoStop}%
\bibitem [{Note3()}]{Note3}%
  \BibitemOpen
  \bibinfo {note} {The rationale for this terminology is that if we think of $A,B,C$ as respectively the past, present, and future, in a Markov chain the past affects the future only through its effect on the present.}\BibitemShut {Stop}%
\bibitem [{\citenamefont {Hayden}\ \emph {et~al.}(2004)\citenamefont {Hayden}, \citenamefont {Jozsa}, \citenamefont {Petz},\ and\ \citenamefont {Winter}}]{hayden2004structure}%
  \BibitemOpen
  \bibfield  {author} {\bibinfo {author} {\bibfnamefont {P.}~\bibnamefont {Hayden}}, \bibinfo {author} {\bibfnamefont {R.}~\bibnamefont {Jozsa}}, \bibinfo {author} {\bibfnamefont {D.}~\bibnamefont {Petz}},\ and\ \bibinfo {author} {\bibfnamefont {A.}~\bibnamefont {Winter}},\ }\bibfield  {title} {\bibinfo {title} {Structure of states which satisfy strong subadditivity of quantum entropy with equality},\ }\href@noop {} {\bibfield  {journal} {\bibinfo  {journal} {Communications in mathematical physics}\ }\textbf {\bibinfo {volume} {246}},\ \bibinfo {pages} {359} (\bibinfo {year} {2004})}\BibitemShut {NoStop}%
\bibitem [{\citenamefont {Fawzi}\ and\ \citenamefont {Renner}(2015)}]{fawzi2015quantum}%
  \BibitemOpen
  \bibfield  {author} {\bibinfo {author} {\bibfnamefont {O.}~\bibnamefont {Fawzi}}\ and\ \bibinfo {author} {\bibfnamefont {R.}~\bibnamefont {Renner}},\ }\bibfield  {title} {\bibinfo {title} {Quantum conditional mutual information and approximate markov chains},\ }\href@noop {} {\bibfield  {journal} {\bibinfo  {journal} {Communications in Mathematical Physics}\ }\textbf {\bibinfo {volume} {340}},\ \bibinfo {pages} {575} (\bibinfo {year} {2015})}\BibitemShut {NoStop}%
\bibitem [{\citenamefont {Junge}\ \emph {et~al.}(2018)\citenamefont {Junge}, \citenamefont {Renner}, \citenamefont {Sutter}, \citenamefont {Wilde},\ and\ \citenamefont {Winter}}]{junge2018universal}%
  \BibitemOpen
  \bibfield  {author} {\bibinfo {author} {\bibfnamefont {M.}~\bibnamefont {Junge}}, \bibinfo {author} {\bibfnamefont {R.}~\bibnamefont {Renner}}, \bibinfo {author} {\bibfnamefont {D.}~\bibnamefont {Sutter}}, \bibinfo {author} {\bibfnamefont {M.~M.}\ \bibnamefont {Wilde}},\ and\ \bibinfo {author} {\bibfnamefont {A.}~\bibnamefont {Winter}},\ }\bibfield  {title} {\bibinfo {title} {Universal recovery maps and approximate sufficiency of quantum relative entropy},\ }in\ \href@noop {} {\emph {\bibinfo {booktitle} {Annales Henri Poincar{\'e}}}},\ Vol.~\bibinfo {volume} {19}\ (\bibinfo {organization} {Springer},\ \bibinfo {year} {2018})\ pp.\ \bibinfo {pages} {2955--2978}\BibitemShut {NoStop}%
\bibitem [{\citenamefont {Kuwahara}(2025)}]{kuwahara2025clustering}%
  \BibitemOpen
  \bibfield  {author} {\bibinfo {author} {\bibfnamefont {T.}~\bibnamefont {Kuwahara}},\ }\bibfield  {title} {\bibinfo {title} {Clustering of conditional mutual information and quantum markov structure at arbitrary temperatures},\ }\href@noop {} {\bibfield  {journal} {\bibinfo  {journal} {Physical Review X}\ }\textbf {\bibinfo {volume} {15}},\ \bibinfo {pages} {041010} (\bibinfo {year} {2025})}\BibitemShut {NoStop}%
\bibitem [{\citenamefont {Kato}\ and\ \citenamefont {Kuwahara}()}]{kato2504clustering}%
  \BibitemOpen
  \bibfield  {author} {\bibinfo {author} {\bibfnamefont {K.}~\bibnamefont {Kato}}\ and\ \bibinfo {author} {\bibfnamefont {T.}~\bibnamefont {Kuwahara}},\ }\bibfield  {title} {\bibinfo {title} {Clustering of conditional mutual information via quantum belief-propagation channels},\ }\href@noop {} {\bibinfo  {journal} {arXiv preprint arXiv:2504.02235}\ ,\ \bibinfo {pages} {2}}\BibitemShut {NoStop}%
\bibitem [{\citenamefont {Chen}\ and\ \citenamefont {Rouz{\'e}}(2025)}]{chen2025quantum}%
  \BibitemOpen
\bibfield  {journal} {  }\bibfield  {author} {\bibinfo {author} {\bibfnamefont {C.-F.}\ \bibnamefont {Chen}}\ and\ \bibinfo {author} {\bibfnamefont {C.}~\bibnamefont {Rouz{\'e}}},\ }\bibfield  {title} {\bibinfo {title} {Quantum gibbs states are locally markovian},\ }\href@noop {} {\bibfield  {journal} {\bibinfo  {journal} {arXiv preprint arXiv:2504.02208}\ } (\bibinfo {year} {2025})}\BibitemShut {NoStop}%
\bibitem [{\citenamefont {Bakshi}\ \emph {et~al.}(2025)\citenamefont {Bakshi}, \citenamefont {Liu}, \citenamefont {Moitra},\ and\ \citenamefont {Tang}}]{bakshi2025dobrushin}%
  \BibitemOpen
  \bibfield  {author} {\bibinfo {author} {\bibfnamefont {A.}~\bibnamefont {Bakshi}}, \bibinfo {author} {\bibfnamefont {A.}~\bibnamefont {Liu}}, \bibinfo {author} {\bibfnamefont {A.}~\bibnamefont {Moitra}},\ and\ \bibinfo {author} {\bibfnamefont {E.}~\bibnamefont {Tang}},\ }\bibfield  {title} {\bibinfo {title} {A dobrushin condition for quantum markov chains: Rapid mixing and conditional mutual information at high temperature},\ }\href@noop {} {\bibfield  {journal} {\bibinfo  {journal} {arXiv preprint arXiv:2510.08542}\ } (\bibinfo {year} {2025})}\BibitemShut {NoStop}%
\bibitem [{\citenamefont {Bluhm}\ \emph {et~al.}(2025)\citenamefont {Bluhm}, \citenamefont {Capel},\ and\ \citenamefont {P{\'e}rez-Hern{\'a}ndez}}]{bluhm2025strong}%
  \BibitemOpen
  \bibfield  {author} {\bibinfo {author} {\bibfnamefont {A.}~\bibnamefont {Bluhm}}, \bibinfo {author} {\bibfnamefont {{\'A}.}~\bibnamefont {Capel}},\ and\ \bibinfo {author} {\bibfnamefont {A.}~\bibnamefont {P{\'e}rez-Hern{\'a}ndez}},\ }\bibfield  {title} {\bibinfo {title} {Strong decay of correlations for gibbs states in any dimension: A. bluhm et al.},\ }\href@noop {} {\bibfield  {journal} {\bibinfo  {journal} {Journal of Statistical Physics}\ }\textbf {\bibinfo {volume} {192}},\ \bibinfo {pages} {134} (\bibinfo {year} {2025})}\BibitemShut {NoStop}%
\bibitem [{Note4()}]{Note4}%
  \BibitemOpen
  \bibinfo {note} {Strictly speaking, the result of Ref.~\cite {chen2025quantum} only shows the exponential decay of the recovery error with $d_{AC}$. When mapping this to the decay of CMI, one incurs an additional prefactor $|C|$ which renders the bound vacuous in the thermodynamic limit. However, recovery error is sufficeint for most purposes.}\BibitemShut {Stop}%
\bibitem [{\citenamefont {Yang}\ \emph {et~al.}(2025)\citenamefont {Yang}, \citenamefont {Shi},\ and\ \citenamefont {Lee}}]{yang2025topological}%
  \BibitemOpen
  \bibfield  {author} {\bibinfo {author} {\bibfnamefont {T.-H.}\ \bibnamefont {Yang}}, \bibinfo {author} {\bibfnamefont {B.}~\bibnamefont {Shi}},\ and\ \bibinfo {author} {\bibfnamefont {J.~Y.}\ \bibnamefont {Lee}},\ }\bibfield  {title} {\bibinfo {title} {Topological mixed states: Axiomatic approaches and phases of matter},\ }\href@noop {} {\bibfield  {journal} {\bibinfo  {journal} {arXiv preprint arXiv:2506.04221}\ } (\bibinfo {year} {2025})}\BibitemShut {NoStop}%
\bibitem [{\citenamefont {Dennis}\ \emph {et~al.}(2002)\citenamefont {Dennis}, \citenamefont {Kitaev}, \citenamefont {Landahl},\ and\ \citenamefont {Preskill}}]{dennis2002topological}%
  \BibitemOpen
  \bibfield  {author} {\bibinfo {author} {\bibfnamefont {E.}~\bibnamefont {Dennis}}, \bibinfo {author} {\bibfnamefont {A.}~\bibnamefont {Kitaev}}, \bibinfo {author} {\bibfnamefont {A.}~\bibnamefont {Landahl}},\ and\ \bibinfo {author} {\bibfnamefont {J.}~\bibnamefont {Preskill}},\ }\bibfield  {title} {\bibinfo {title} {Topological quantum memory},\ }\href@noop {} {\bibfield  {journal} {\bibinfo  {journal} {Journal of Mathematical Physics}\ }\textbf {\bibinfo {volume} {43}},\ \bibinfo {pages} {4452} (\bibinfo {year} {2002})}\BibitemShut {NoStop}%
\bibitem [{\citenamefont {Pastawski}\ \emph {et~al.}(2011)\citenamefont {Pastawski}, \citenamefont {Clemente},\ and\ \citenamefont {Cirac}}]{pastawski2011quantum}%
  \BibitemOpen
  \bibfield  {author} {\bibinfo {author} {\bibfnamefont {F.}~\bibnamefont {Pastawski}}, \bibinfo {author} {\bibfnamefont {L.}~\bibnamefont {Clemente}},\ and\ \bibinfo {author} {\bibfnamefont {J.~I.}\ \bibnamefont {Cirac}},\ }\bibfield  {title} {\bibinfo {title} {Quantum memories based on engineered dissipation},\ }\href@noop {} {\bibfield  {journal} {\bibinfo  {journal} {Physical Review A—Atomic, Molecular, and Optical Physics}\ }\textbf {\bibinfo {volume} {83}},\ \bibinfo {pages} {012304} (\bibinfo {year} {2011})}\BibitemShut {NoStop}%
\bibitem [{\citenamefont {Brown}\ \emph {et~al.}(2016)\citenamefont {Brown}, \citenamefont {Loss}, \citenamefont {Pachos}, \citenamefont {Self},\ and\ \citenamefont {Wootton}}]{brown2016quantum}%
  \BibitemOpen
  \bibfield  {author} {\bibinfo {author} {\bibfnamefont {B.~J.}\ \bibnamefont {Brown}}, \bibinfo {author} {\bibfnamefont {D.}~\bibnamefont {Loss}}, \bibinfo {author} {\bibfnamefont {J.~K.}\ \bibnamefont {Pachos}}, \bibinfo {author} {\bibfnamefont {C.~N.}\ \bibnamefont {Self}},\ and\ \bibinfo {author} {\bibfnamefont {J.~R.}\ \bibnamefont {Wootton}},\ }\bibfield  {title} {\bibinfo {title} {Quantum memories at finite temperature},\ }\href@noop {} {\bibfield  {journal} {\bibinfo  {journal} {Reviews of Modern Physics}\ }\textbf {\bibinfo {volume} {88}},\ \bibinfo {pages} {045005} (\bibinfo {year} {2016})}\BibitemShut {NoStop}%
\bibitem [{\citenamefont {Dobrushin}(1968)}]{dobrushin1968description}%
  \BibitemOpen
  \bibfield  {author} {\bibinfo {author} {\bibfnamefont {R.~L.}\ \bibnamefont {Dobrushin}},\ }\bibfield  {title} {\bibinfo {title} {The description of a random field by means of conditional probabilities and conditions for its regularity},\ }\href@noop {} {\bibfield  {journal} {\bibinfo  {journal} {Theory of Probability and Its Applications}\ }\textbf {\bibinfo {volume} {13}},\ \bibinfo {pages} {197} (\bibinfo {year} {1968})}\BibitemShut {NoStop}%
\bibitem [{\citenamefont {Lanford}\ and\ \citenamefont {Ruelle}(1969)}]{lanford1969observables}%
  \BibitemOpen
  \bibfield  {author} {\bibinfo {author} {\bibfnamefont {O.~E.}\ \bibnamefont {Lanford}}\ and\ \bibinfo {author} {\bibfnamefont {D.}~\bibnamefont {Ruelle}},\ }\bibfield  {title} {\bibinfo {title} {Observables at infinity and states with short range correlations in statistical mechanics},\ }\href@noop {} {\bibfield  {journal} {\bibinfo  {journal} {Communications in Mathematical Physics}\ }\textbf {\bibinfo {volume} {13}},\ \bibinfo {pages} {194} (\bibinfo {year} {1969})}\BibitemShut {NoStop}%
\bibitem [{\citenamefont {Hyv{\"a}rinen}(2005)}]{hyvarinen2005estimation}%
  \BibitemOpen
  \bibfield  {author} {\bibinfo {author} {\bibfnamefont {A.}~\bibnamefont {Hyv{\"a}rinen}},\ }\bibfield  {title} {\bibinfo {title} {Estimation of non-normalized statistical models by score matching},\ }\href {https://jmlr.org/papers/volume6/hyvarinen05a/hyvarinen05a.pdf} {\bibfield  {journal} {\bibinfo  {journal} {Journal of Machine Learning Research}\ }\textbf {\bibinfo {volume} {6}},\ \bibinfo {pages} {695} (\bibinfo {year} {2005})}\BibitemShut {NoStop}%
\bibitem [{\citenamefont {Vincent}(2011)}]{vincent2011connection}%
  \BibitemOpen
  \bibfield  {author} {\bibinfo {author} {\bibfnamefont {P.}~\bibnamefont {Vincent}},\ }\bibfield  {title} {\bibinfo {title} {A connection between score matching and denoising autoencoders},\ }\href {https://doi.org/10.1162/NECO_a_00142} {\bibfield  {journal} {\bibinfo  {journal} {Neural Computation}\ }\textbf {\bibinfo {volume} {23}},\ \bibinfo {pages} {1661} (\bibinfo {year} {2011})}\BibitemShut {NoStop}%
\bibitem [{\citenamefont {Sohl-Dickstein}\ \emph {et~al.}(2015)\citenamefont {Sohl-Dickstein}, \citenamefont {Weiss}, \citenamefont {Maheswaranathan},\ and\ \citenamefont {Ganguli}}]{sohl2015deep}%
  \BibitemOpen
  \bibfield  {author} {\bibinfo {author} {\bibfnamefont {J.}~\bibnamefont {Sohl-Dickstein}}, \bibinfo {author} {\bibfnamefont {E.~A.}\ \bibnamefont {Weiss}}, \bibinfo {author} {\bibfnamefont {N.}~\bibnamefont {Maheswaranathan}},\ and\ \bibinfo {author} {\bibfnamefont {S.}~\bibnamefont {Ganguli}},\ }\bibfield  {title} {\bibinfo {title} {Deep unsupervised learning using nonequilibrium thermodynamics},\ }in\ \href {https://proceedings.mlr.press/v37/sohl-dickstein15.html} {\emph {\bibinfo {booktitle} {Proceedings of the 32nd International Conference on Machine Learning (ICML)}}},\ \bibinfo {series} {Proceedings of Machine Learning Research}, Vol.~\bibinfo {volume} {37}\ (\bibinfo  {publisher} {PMLR},\ \bibinfo {address} {Lille, France},\ \bibinfo {year} {2015})\ pp.\ \bibinfo {pages} {2256--2265}\BibitemShut {NoStop}%
\bibitem [{\citenamefont {Ho}\ \emph {et~al.}(2020)\citenamefont {Ho}, \citenamefont {Jain},\ and\ \citenamefont {Abbeel}}]{ho2020denoising}%
  \BibitemOpen
  \bibfield  {author} {\bibinfo {author} {\bibfnamefont {J.}~\bibnamefont {Ho}}, \bibinfo {author} {\bibfnamefont {A.}~\bibnamefont {Jain}},\ and\ \bibinfo {author} {\bibfnamefont {P.}~\bibnamefont {Abbeel}},\ }\bibfield  {title} {\bibinfo {title} {Denoising diffusion probabilistic models},\ }in\ \href {https://arxiv.org/abs/2006.11239} {\emph {\bibinfo {booktitle} {Advances in Neural Information Processing Systems (NeurIPS) 33}}}\ (\bibinfo  {publisher} {Curran Associates, Inc.},\ \bibinfo {year} {2020})\ pp.\ \bibinfo {pages} {6840--6851}\BibitemShut {NoStop}%
\bibitem [{\citenamefont {Song}\ and\ \citenamefont {Ermon}(2019)}]{song2019generative}%
  \BibitemOpen
  \bibfield  {author} {\bibinfo {author} {\bibfnamefont {Y.}~\bibnamefont {Song}}\ and\ \bibinfo {author} {\bibfnamefont {S.}~\bibnamefont {Ermon}},\ }\bibfield  {title} {\bibinfo {title} {Generative modeling by estimating gradients of the data distribution},\ }in\ \href {https://arxiv.org/abs/1907.05600} {\emph {\bibinfo {booktitle} {Advances in Neural Information Processing Systems (NeurIPS) 32}}}\ (\bibinfo {year} {2019})\ pp.\ \bibinfo {pages} {11895--11907}\BibitemShut {NoStop}%
\bibitem [{\citenamefont {Song}\ and\ \citenamefont {Ermon}(2020)}]{song2020improved}%
  \BibitemOpen
  \bibfield  {author} {\bibinfo {author} {\bibfnamefont {Y.}~\bibnamefont {Song}}\ and\ \bibinfo {author} {\bibfnamefont {S.}~\bibnamefont {Ermon}},\ }\bibfield  {title} {\bibinfo {title} {Improved techniques for training score-based generative models},\ }in\ \href {https://arxiv.org/abs/2006.09011} {\emph {\bibinfo {booktitle} {Advances in Neural Information Processing Systems (NeurIPS) 33}}}\ (\bibinfo {year} {2020})\BibitemShut {NoStop}%
\bibitem [{\citenamefont {Nichol}\ and\ \citenamefont {Dhariwal}(2021)}]{nichol2021improved}%
  \BibitemOpen
  \bibfield  {author} {\bibinfo {author} {\bibfnamefont {A.~Q.}\ \bibnamefont {Nichol}}\ and\ \bibinfo {author} {\bibfnamefont {P.}~\bibnamefont {Dhariwal}},\ }\bibfield  {title} {\bibinfo {title} {Improved denoising diffusion probabilistic models},\ }in\ \href {https://arxiv.org/abs/2102.09672} {\emph {\bibinfo {booktitle} {Proceedings of the 38th International Conference on Machine Learning (ICML) 2021}}},\ \bibinfo {series} {Proceedings of Machine Learning Research}, Vol.\ \bibinfo {volume} {139}\ (\bibinfo  {publisher} {PMLR},\ \bibinfo {year} {2021})\ pp.\ \bibinfo {pages} {8162--8171}\BibitemShut {NoStop}%
\bibitem [{\citenamefont {Koteck{\'y}}\ and\ \citenamefont {Preiss}(1986)}]{kotecky1986cluster}%
  \BibitemOpen
  \bibfield  {author} {\bibinfo {author} {\bibfnamefont {R.}~\bibnamefont {Koteck{\'y}}}\ and\ \bibinfo {author} {\bibfnamefont {D.}~\bibnamefont {Preiss}},\ }\bibfield  {title} {\bibinfo {title} {Cluster expansion for abstract polymer models},\ }\href@noop {} {\bibfield  {journal} {\bibinfo  {journal} {Communications in Mathematical Physics}\ }\textbf {\bibinfo {volume} {103}},\ \bibinfo {pages} {491} (\bibinfo {year} {1986})}\BibitemShut {NoStop}%
\bibitem [{\citenamefont {Dobrushin}(1996)}]{dobrushin1996estimates}%
  \BibitemOpen
  \bibfield  {author} {\bibinfo {author} {\bibfnamefont {R.~L.}\ \bibnamefont {Dobrushin}},\ }\bibfield  {title} {\bibinfo {title} {Estimates of semi-invariants for the ising model at low temperatures},\ }\href@noop {} {\bibfield  {journal} {\bibinfo  {journal} {Translations of the American Mathematical Society-Series 2}\ }\textbf {\bibinfo {volume} {177}},\ \bibinfo {pages} {59} (\bibinfo {year} {1996})}\BibitemShut {NoStop}%
\bibitem [{\citenamefont {Friedli}\ and\ \citenamefont {Velenik}(2017)}]{friedli2017statistical}%
  \BibitemOpen
  \bibfield  {author} {\bibinfo {author} {\bibfnamefont {S.}~\bibnamefont {Friedli}}\ and\ \bibinfo {author} {\bibfnamefont {Y.}~\bibnamefont {Velenik}},\ }\href@noop {} {\emph {\bibinfo {title} {Statistical mechanics of lattice systems: a concrete mathematical introduction}}}\ (\bibinfo  {publisher} {Cambridge University Press},\ \bibinfo {year} {2017})\BibitemShut {NoStop}%
\bibitem [{\citenamefont {Van~Enter}\ \emph {et~al.}(1991)\citenamefont {Van~Enter}, \citenamefont {Fern{\'a}ndez},\ and\ \citenamefont {Sokal}}]{van1991renormalization}%
  \BibitemOpen
  \bibfield  {author} {\bibinfo {author} {\bibfnamefont {A.~C.}\ \bibnamefont {Van~Enter}}, \bibinfo {author} {\bibfnamefont {R.}~\bibnamefont {Fern{\'a}ndez}},\ and\ \bibinfo {author} {\bibfnamefont {A.~D.}\ \bibnamefont {Sokal}},\ }\bibfield  {title} {\bibinfo {title} {Renormalization transformations in the vicinity of first-order phase transitions: What can and cannot go wrong},\ }\href@noop {} {\bibfield  {journal} {\bibinfo  {journal} {Physical review letters}\ }\textbf {\bibinfo {volume} {66}},\ \bibinfo {pages} {3253} (\bibinfo {year} {1991})}\BibitemShut {NoStop}%
\bibitem [{\citenamefont {Van~Enter}\ \emph {et~al.}(1993)\citenamefont {Van~Enter}, \citenamefont {Fern{\'a}ndez},\ and\ \citenamefont {Sokal}}]{van1993regularity}%
  \BibitemOpen
  \bibfield  {author} {\bibinfo {author} {\bibfnamefont {A.~C.}\ \bibnamefont {Van~Enter}}, \bibinfo {author} {\bibfnamefont {R.}~\bibnamefont {Fern{\'a}ndez}},\ and\ \bibinfo {author} {\bibfnamefont {A.~D.}\ \bibnamefont {Sokal}},\ }\bibfield  {title} {\bibinfo {title} {Regularity properties and pathologies of position-space renormalization-group transformations: Scope and limitations of gibbsian theory},\ }\href@noop {} {\bibfield  {journal} {\bibinfo  {journal} {Journal of Statistical Physics}\ }\textbf {\bibinfo {volume} {72}},\ \bibinfo {pages} {879} (\bibinfo {year} {1993})}\BibitemShut {NoStop}%
\end{thebibliography}%

\end{document}
%